\tikzset{
    >=stealth',
    user/.style={
           rectangle,
           rounded corners,
           draw,
           minimum height=2em,
           minimum width=1cm,
           text centered},
    relay/.style={
           rectangle,
           rounded corners,
           draw,
           minimum height=1em,
           minimum width=1cm,
           text centered},
    pil/.style={
           ->,
           shorten <=2pt,
           shorten >=2pt},
    pil_rev/.style={
           <-, dashed,
           shorten <=2pt,
           shorten >=2pt},
}
\newtheorem{remark}{Remark}
\newcommand{\Q}{\mathbb{Q}}
\newcommand{\Z}{\mathbb{Z}}
\newcommand{\R}{\mathbb{R}}
\newcommand{\C}{\mathbb{C}}
\newcommand{\F}{\mathbb{F}}
\newcommand{\mb}[1]{\mathbf{#1}}
\newcommand{\tb}[1]{\textbf{#1}}
\newcommand{\mc}[1]{\mathcal{#1}}
\newcommand{\mf}[1]{\mathfrak{#1}}
\DeclareMathOperator{\vecspan}{span}
\DeclareMathOperator{\colspan}{colspan}
\DeclareMathOperator{\gl}{GL}
\DeclareMathOperator{\mat}{Mat}
\DeclareMathOperator{\rk}{rk}
\DeclareMathOperator{\diag}{diag}
\DeclareMathOperator*{\argmin}{arg\,min}
\DeclareMathOperator*{\argmax}{arg\,max}
\DeclareMathOperator{\snr}{SNR}
\DeclareMathOperator{\GL}{GL}
\DeclareMathOperator{\vol}{vol}
\DeclareMathOperator{\lip}{Lip}
\DeclareMathOperator{\spn}{span}
\def\GL{\operatorname{GL}}
\def\SL{\operatorname{SL}}
\def\thapp{\Theta_\Lambda^{\mathfrak{A}}}
\title{{An Approximation of Theta Functions with Applications to Communications}\thanks{Submitted to the editors \today.
\funding{This work was supported by Academy of Finland grants 276031, 282938, and 303819. \newline 
A.~Barreal was with the Department of Mathematics and Systems Analysis, Aalto University, Finland, when this work was carried out, and is currently with Coop IT Digital Analytics and AI, Switzerland. T.~Damir, R.~Freij-Hollanti, and C.~Hollanti are with the Department of Mathematics and Systems Analysis, Aalto University, Finland.}}}
\author{
	Amaro Barreal\thanks{Email: \email{am.barreal@gmail.com}}
  \and Mohamed Taoufiq Damir\thanks{Email: \email{mohamed.damir@aalto.fi}} \and Ragnar Freij-Hollanti\thanks{Email: \email{ragnar.freij@aalto.fi}} \and Camilla Hollanti\thanks{Email: \email{camilla.hollanti@aalto.fi}}
}
\begin{document}

\maketitle


\begin{abstract}
Computing the theta series of an arbitrary lattice, and more specifically a related quantity known as the flatness factor, has been recently shown to be important for lattice code design in various wireless communication setups. However, the theta series is in general not known in closed form, excluding a small set of very special lattices. In this article, motivated by the practical applications as well as the mathematical problem itself, a simple approximation of the theta series of a lattice is derived. A rigorous analysis of its accuracy is provided.  

In relation to this, maximum-likelihood decoding in the context of compute-and-forward relaying is studied. Following previous work, it is shown that the related metric can exhibit a flat behavior, which can be characterized by the flatness factor of the decoding function. Contrary to common belief, we note that the decoding metric can be rewritten as a sum over a random lattice only when at most two sources are considered. Using a particular matrix decomposition, a link between the random lattice and the code lattice employed at the transmitter is established, which leads to an explicit criterion for code design, in contrast to implicit criteria derived previously. Finally, candidate lattices are examined with respect to the proposed criterion using the derived theta series approximation. 
\end{abstract}

\begin{keywords}
Arbitrary Lattices, Compute-and-Forward Protocol, Flatness Factor, Geometry of Lattices, Lattice Codes, Theta Series Approximation, Wireless Communications, Wiretap Channels.

\end{keywords}

\begin{AMS} 11H06, 11P21, 11F27, 11H71
\end{AMS}

\section{Introduction}
\label{sec:introduction}
Lattices are mathematical objects which have become indispensable for code design in many areas of wireless communications, as many design criteria for reliable performance rely on the discrete and algebraic structure of lattices. Despite their deceptively simple structure, many computational problems related to lattices are extremely challenging, such as the famous \emph{shortest vector problem} or related \emph{closest vector problem}. In particular, as the same lattice can be generated by distinct bases, a natural problem is to find a basis consisting of shortest vectors, a problem so hard that cryptographic protocols have been developed around it. Moreover, even enumerating vectors of certain lengths is very difficult. The generating function for the number of elements in a lattice of a given norm is known as the \emph{theta series} of the lattice. This is an interesting object in its own right, and it is not surprising that it is only known in closed form for a very small set of highly structured lattices. 

From an applications perspective, it has been recently shown that code design in various areas of wireless communications  and cryptography can profit from studying the theta series of certain involved lattices, \emph{e.g.}, for \emph{wiretap code design} \cite{ling, campello},  \emph{dither avoidance in lattice noise quantization} \cite{linglu}, or \emph{compute-and-forward relaying} \cite{nazer}. Compute-and-forward relaying is a promising physical layer network coding protocol proposed in the award-winning paper \cite{nazer}, and exploits the natural effects of interference by decoding linear combinations of the transmitted messages at the intermediate relays to achieve high computation rates. It will be the main applicational focus in this paper. For more details, see Sec. \ref{sec:caf}.

Originally, a relay operating under the compute-and-forward protocol would first scale the received signal before applying a minimum-distance decoder to obtain an estimate of the desired linear combination of the codewords. The decoding error probability for this decoding procedure was studied in \cite{feng}. It was later in \cite{belfiore} where \emph{maximum-likelihood} (ML) decoding at the relay was first considered. An approach to lattice code design for compute-and-forward was simultaneously derived therein, as well as in \cite{belfiore2}, and the first efficient decoding algorithm was proposed in dimension~$1$. The subsequent work \cite{mejri} builds upon those innovative articles and continues the investigation towards efficient decoding algorithms, an example of which is derived for Gaussian channels without fading. The fundamental work carried out in \cite{belfiore,belfiore2} is essential for code design considerations, as it introduces the notion of the \emph{flatness factor} of a lattice and utilizes it to derive an implicit lattice code design criterion. This criterion is indirect in the sense that it relates to an uncontrollable sum of random lattices and not to the code lattices themselves, where the randomness is enabled by the physical channel. It is also noteworthy that following the work \cite{belfiore}, the common belief has been that this sum can be rewritten as a sum over elements of a lattice for any number of transmitters. This is, as shown in this article, only the case if at most two sources are considered, the case studied empirically in \cite{belfiore,belfiore2}. More recently, the compute-and-forward protocol has been extended to  more general rings of algebraic integers \cite{cong-ring}.

The article is structured as follows. We start by recalling the most important results related to lattices in Section~\ref{sec:lattices}. The concepts of theta series and flatness factor are subsequently introduced in Section~\ref{sec:theta}, wherein we derive a simple but accurate approximation of the theta series and, consequently, the flatness factor of a lattice (cf. Def. \ref{theta-flatness} and the equations beneath). We provide a rigorous study on the accuracy of the approximation, alongside with some illustrating plots and discussion. In Section~\ref{sec:caf}, we summarize the compute-and-forward protocol and, following \cite{belfiore,belfiore2}, investigate the behavior of the ML decoding metric in terms of its flatness factor. Adopting certain restrictions, we establish a link between the resulting random lattice and the code lattice, allowing for an explicit lattice code design criterion. Namely, we show that in order to maximize the flatness factor of the random lattice, it suffices to maximize that of the code lattice. We then make use of the derived theta series approximation to investigate different lattices in varying dimensions with respect to the design criterion. The main contributions are the following. 
\begin{itemize}
	\item In Theorem~\ref{thm:theta_approx} we derive a simple but accurate approximation of the theta series of a lattice. For a fixed dimension, the approximation is merely a rational function, and in most cases significantly outperforms a mere series truncation approximation.
		Such an easy-to-compute approximation is important, as approximating the theta series is crucial in many lattice related applications as closed form expressions are unknown even for most deterministic lattices. In particular, the approximation also yields an approximation of the lattice flatness factor via Def. \ref{theta-flatness}, which relates to, \emph{e.g.}, compute-and-forward decoding, wiretap coset code design, smoothing parameter in cryptography, and dither avoidance in lattice noise quantization as mentioned above. 
		\item We provide a rigorous analysis on the accuracy of the approximation as well as some intuition and discussion on its qualities (Sec. 3.1 and 3.2.). More precisely:
\begin{itemize}
\item We show that our approximation is, on average (over the space of all lattices), below the value of the complete theta series. Furthermore, we show that the error term, on average, goes to zero both when $q\rightarrow 1$ (resp. $\sigma\rightarrow \infty$) and when $q\rightarrow 0$ (resp. $\sigma\rightarrow 0$).
\item We show that for any fixed $j$, there is a threshold $\sigma_j$ such that our approximation is larger than $\Theta_j$ (the $j$-th term truncation) for any $\sigma>\sigma_j$.
\item We show that we are better than the first term truncation, except possibly for some small values of $\sigma$ for some lattices, depending on the kissing number of the lattice. 
\end{itemize}
    Combining these results as well as our numerical examples, we are convinced that there is a very strong basis for using this approximation. 

\item We provide a simple explicit formula for computing the approximation for  even dimensions. An explicit formula can be also derived for odd dimensions.
\item We provide an alternative description of the error term that now more explicitly depends on the first minimum and not on the Lipschitz constant. 
\item As a special case, we motivate the accuracy by a heuristic on the minimality of the error term when the lattice is chosen to be well-rounded of dimension 2 or 3. This case is of particular interest for wireless communications. 
	
	\item The compute-and-forward ML decoding framework  is slightly generalized in Proposition~\ref{prop:decoding_manipulation} to allow for more general lattices than in previous work. While the analysis of the function can become more difficult depending on the matrix decomposition used, the decoding procedure can nonetheless be executed by the relay also in this more general setting.
	
	\item In Lemma~\ref{lem:sum_of_lattices}, we note that the decoding metric can be rewritten as a sum over elements of a lattice only for two sources, rectifying the common belief that this holds for any number of transmitting sources. 
	
	\item Theorem~\ref{thm:lattice_equivalent} establishes a link between the code lattice and the random lattice involved in the ML-decoding metric, allowing to state an explicit design criterion for the code lattice, in contrast to previous implicit criteria. 
	
	\item Finally, various lattices are examined using the explicit design criterion and derived theta series approximation.
\end{itemize}

\section{Lattices}
\label{sec:lattices}

This section is dedicated to acquainting the reader with basic concepts in lattice theory. In this article, a vector is labeled in bold, $\mb{v}$, and is always represented as a column vector. 
\begin{definition}
	A \emph{lattice} $\Lambda \subset \R^n$ is a discrete\footnote{By discrete we mean that the metric on $\R^n$ defines the discrete topology on $\Lambda$.} subgroup of $\R^n$ with the property that there exists a basis $\left\{\mb{b}_1,\ldots,\mb{b}_t\right\}$ of $\R^n$ such that 
	\begin{align}
		\Lambda = \bigoplus\limits_{i=1}^{t}{\mb{b}_i \Z}.
	\end{align} 
We say that $\left\{\mb{b}_1,\ldots,\mb{b}_t\right\}$ is a $\Z$-basis of $\Lambda$, thus $\Lambda \cong \Z^t$ as abelian groups. We call $t = \rk(\Lambda) \le n$ the \emph{rank}, and $n$ the \emph{dimension} of $\Lambda$.

A lattice $\Lambda' \subset \R^n$ such that $\Lambda' \subset \Lambda$ is called a \emph{sublattice}\footnote{If $\dim(\Lambda) = \dim(\Lambda')$, then the index $\left|\Lambda/\Lambda'\right|$ is finite.} of $\Lambda$. 
\end{definition}

More conveniently, we can define a \emph{generator matrix} $M_{\Lambda} := \begin{bmatrix} \mb{b}_1 & \cdots & \mb{b}_t \end{bmatrix}$, so that every point $\mb{x} \in \Lambda$ can be expressed as $\mb{x} = M\mb{z}$ for some vector $\mb{z} \in \Z^t$. Henceforth we will only consider \emph{full} lattices, that is, where $t = n$.

\begin{remark}
	Given a pair of full lattices $\Lambda_1 \subseteq \Lambda_2$, we will say that $\Lambda_1$ is \emph{nested} in $\Lambda_2$. We refer to $\Lambda_2$ as the \emph{fine} lattice, and to $\Lambda_1$ as the \emph{coarse} lattice. Similarly, a sequence $\Lambda_1,\ldots,\Lambda_s$ of lattices is \emph{nested} if $\Lambda_1 \subseteq \Lambda_2 \subseteq \cdots \subseteq \Lambda_s$.  
\end{remark}

Given a full lattice $\Lambda \subset \R^n$, the $i^{\mathrm{th}}$ \emph{successive minimum} of $\Lambda$, for $i = 1,\ldots,n$, is defined as
	\begin{align}
		\lambda_i = \lambda_{i}(\Lambda) := \left(\inf \left\{\left. r \right| \dim(\spn(\Lambda \cap \mathcal{B}_{\mathbf{0}}(r)))\geq i \right\}\right)^2,
	\end{align}
	 where $\mathcal{B}_{\mathbf{0}}(r)$ is the sphere of radius $r$ centered at the origin. 
	 The first minimum, $\lambda_1 = \min\limits_{\mb{x}\in\Lambda} ||\mb{x}||^2$ is referred to as the (square) \emph{minimal norm} of $\Lambda$, which exists due to the discreteness property of the lattice. If all successive minima are equal, $\lambda_1 = \cdots = \lambda_n$, the lattice is called \emph{well-rounded}.

Consider now a lattice $\Lambda$ with generator matrix $M_{\Lambda} = \left[\mb{b}_i\right]_{1 \le i \le n}$. The \emph{fundamental parallelotope} of $\Lambda$ is defined as
\begin{align}
	\mc{P}_{\Lambda} := \left\{\left.\sum\limits_{i=1}^{n}{\mb{b}_i z_i} \right| 0 \le z_i < 1 \right\},
\end{align}
and we define the \emph{volume} of $\Lambda$ to be the volume of $\mc{P}_{\Lambda}$, 
	\begin{align}
		\vol{\Lambda} := \vol{\mc{P}_{\Lambda}} = \left|\det(M_{\Lambda})\right|.
	\end{align}
Note that $\vol{\Lambda}$ is independent of the choice of the generator matrix $M_{\Lambda}$. We can easily compute the volume of a sublattice $\Lambda' \subset \Lambda$ as 
$\vol{\Lambda'} = \vol{\Lambda}\left|\Lambda/\Lambda'\right|$.

A further useful function, not only for coding-theoretic purposes, is a \emph{lattice quantizer} $Q_{\Lambda}$, a function that maps every point $\mb{y} \in \R^n$ to its closest point in the lattice. This function allows us to define a \emph{modulo-lattice} operation, $\mb{y}\ (\bmod\ \Lambda) := \mb{y}-Q_{\Lambda}(\mb{y})$. Given a lattice $\Lambda$ and a lattice quantizer $Q_{\Lambda}$, we can associate to each lattice point $\mb{x} \in \Lambda$ its \emph{Voronoi cell}, the set
\begin{align}
	\mc{V}_{\Lambda}(\mb{x}) := \left\{\left.\mb{y} \in \R^n \right| Q_{\Lambda}(\mb{y}) = \mb{x}\right\}.
\end{align}
The Voronoi cell around the origin, $\mc{V}(\Lambda) := \mc{V}_{\Lambda}(\mb{0})$, is called the \emph{basic Voronoi cell} of $\Lambda$. 

With the above definitions, we can now define the notion of a \emph{nested lattice code}, an object widely used for code construction in different communications scenarios. 

\begin{definition}
	Let  $\Lambda_C \subset \Lambda_F$ be a pair of nested lattices. We define a \emph{nested lattice code} $\mc{C}(\Lambda_C,\Lambda_F)$ as the set of representatives
	\begin{equation}
		\mc{C}(\Lambda_C,\Lambda_F) := \left\{\left.\left[\mb{x}\right] \in \Lambda_F\ (\bmod\ \Lambda_C) \right| \mb{x} \in \Lambda_F \right\} = \Lambda_F \cap \mc{V}(\Lambda_C).	
	\end{equation}
	
	The \emph{code rate} of $\mc{C}(\Lambda_C,\Lambda_F)$ in bits per dimension is
	\begin{equation}
		\mc{R} = \frac{1}{n}\log_2{|\mc{C}(\Lambda_C,\Lambda_F)|} = \frac{1}{n}\log_2{\frac{\vol{\Lambda_C}}{\vol{\Lambda_F}}} = \frac{1}{n}\log_2{|\Lambda_F/\Lambda_C|}.
	\end{equation}
\end{definition}

Note that some coset representatives fall on the boundary of $\mc{V}(\Lambda_C)$, and need to be selected systematically. We illustrate the introduced concepts in Figure~\ref{fig:nested_code} below.
\begin{figure*}[!t]
\centering
\begin{minipage}[t]{0.48\textwidth}
	\includegraphics[trim={10cm 0 9cm 0},clip,scale=0.30]{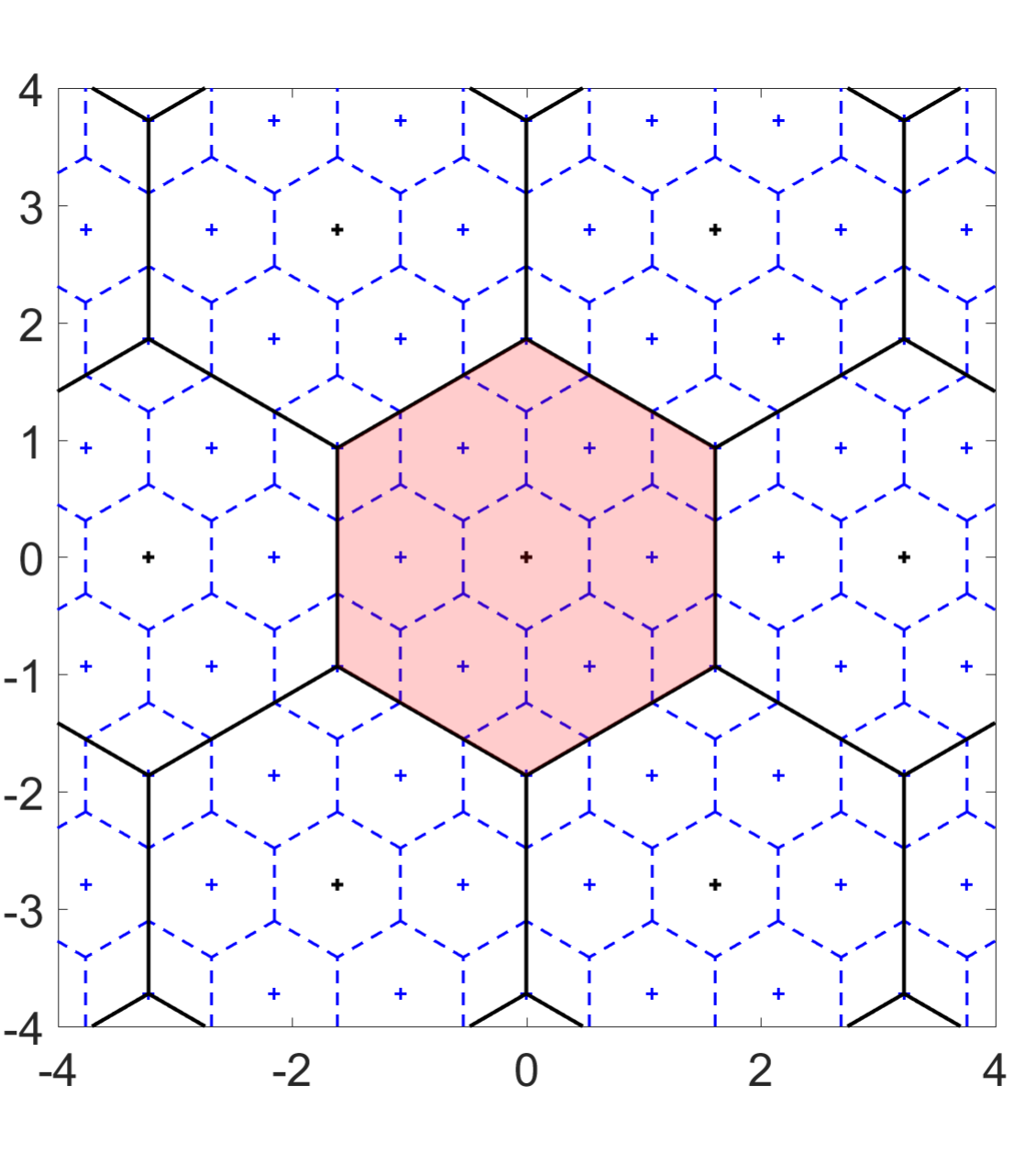}
\end{minipage}
\begin{minipage}[t]{0.48\textwidth}
	\includegraphics[trim={10cm 0 9cm 0},clip,scale=0.30]{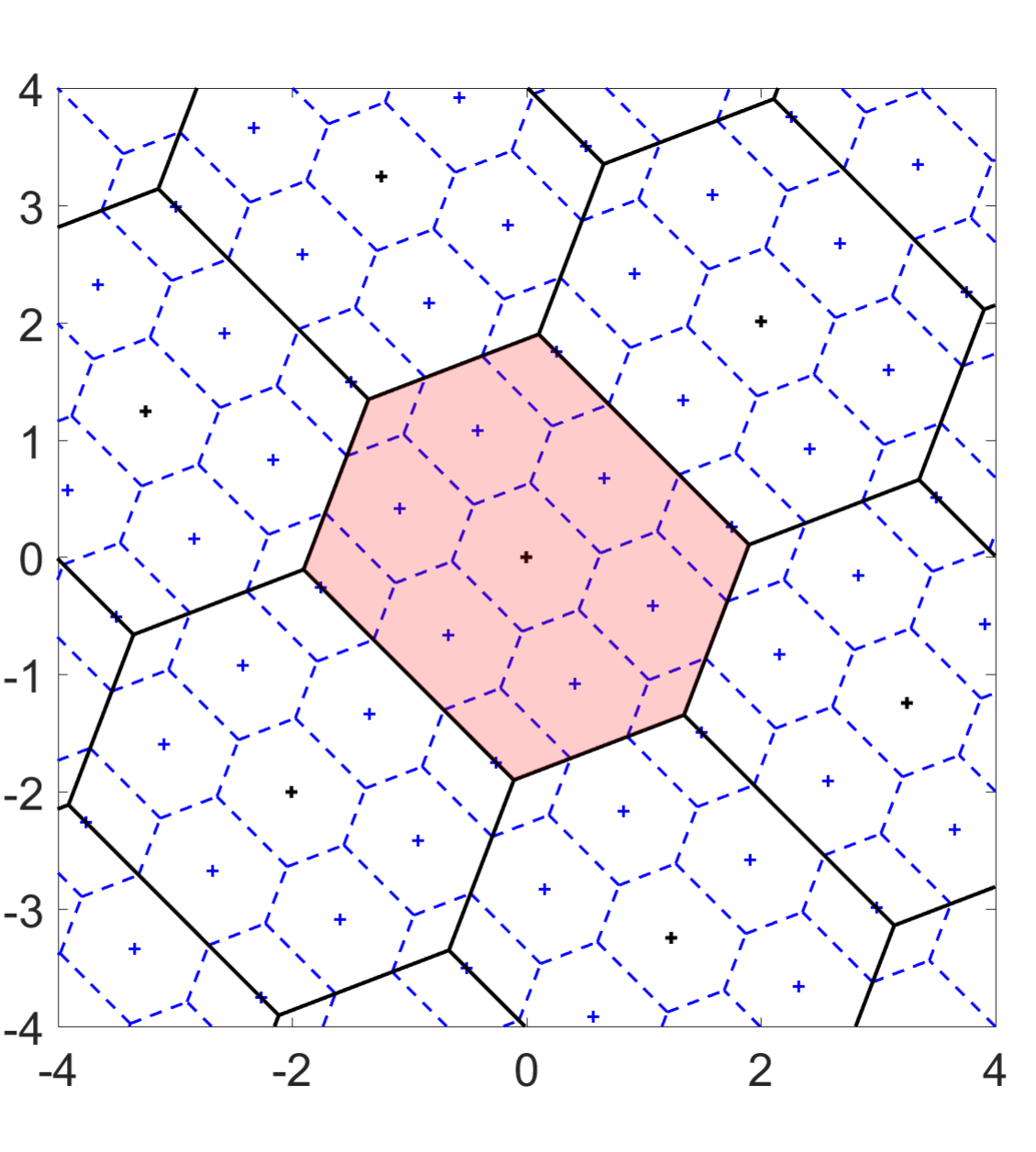}
\end{minipage}
\caption{Nested lattices $\Lambda_C \subset \Lambda_F = 3\Lambda_C$ with the Voronoi cells around each lattice point of the coarse (solid) and fine (dashed) lattices. On the left figure we fix $\Lambda_C = A_2$, the hexagonal lattice, and on the right figure $\Lambda_C = \Psi\left(\mathcal{O}_{\Q(\sqrt{5})}\right)$, the lattice obtained via the canonical embedding $\Psi$ of the ring of integers of the algebraic number field $\Q(\sqrt{5})$. 
The centered Voronoi cell $\mc{V}(\Lambda_C)$ (red) contains a set of representatives for a nested lattice code $\mc{C}(\Lambda_C,\Lambda_F)$ of cardinality $|\mc{C}(\Lambda_C,\Lambda_F)| = \left|\Lambda_F/\Lambda_C\right| = 9$.}
\label{fig:nested_code}
\end{figure*}

\section{The Theta Series and Flatness Factor of a Lattice}
\label{sec:theta}

In this section, we introduce the objects of main interest for this article: the \emph{theta series}, and a related quantity, the \emph{flatness factor} of a lattice. 

\begin{definition}
\label{def:theta}
	Let $\Lambda \subset \R^n$ be a full lattice. For each $r \in \R$, define
	\begin{align}
	\Omega_{\Lambda}(r) &:= \left|\left\{\mb{x}\in \Lambda \mid \left|\left| \mb{x}\right|\right|^2 = r \right\}\right| \\
	\Sigma_{\Lambda}(r) &:= \left|\left\{\left.\mb{x} \in \Lambda \right| ||\mb{x}||^2 \le r \right\}\right| = \sum\limits_{0 < i \le r}{\Omega_{\Lambda}(i)}.
	\end{align}
	The \emph{theta series} of $\Lambda$ is the generating function
	\begin{align}
		\Theta_{\Lambda}(q) := 1 + \sum\limits_{r > 0}{\Omega_{\Lambda}(r)q^r} = \sum\limits_{\mb{x}\in \Lambda}{q^{\left|\left| \mb{x}\right|\right|^2}}.
	\end{align}
\end{definition}

\begin{remark}
	The theta series converges absolutely if $0 \le q < 1$. We further note that 
\begin{equation}
	\argmin\limits_{r>0}\left\{\Omega_{\Lambda}(r) > 0 \right\} = \lambda_1, \qquad
	\min\limits_{r>0}\left\{\Omega_{\Lambda}(r)>0\right\} = \kappa(\Lambda),
\end{equation} 
where $\kappa(\Lambda)$ is the kissing number of $\Lambda$. It is thus clear that $\Theta_{\Lambda}(q)$ encodes important features of $\Lambda$.

More generally, the theta series is defined in terms of a complex variable $q = e^{\pi i z}$, where $z \in \C$. In this case, $\Theta_{\Lambda}(q)$ is a holomorphic function for $\Im(z) \ge 0$. For the purposes of this article, however, it suffices to view $\Theta_{\Lambda}(q)$ as a formal power series in a real variable $q$. 
\end{remark}

Although of great importance, the theta series is unfortunately only known in closed form for a handful of lattices, for example those tabulated in Table~\ref{tab:lattices} below, and is usually given in terms of the \emph{Jacobi theta functions} 
\begin{equation}
	\theta_2(q) = \sum\limits_{k=-\infty}^{\infty}{q^{\left(k+\frac{1}{2}\right)^2}},\quad
	\theta_3(q) = \sum\limits_{k=-\infty}^{\infty}{q^{k^2}}, \quad
	\theta_4(q) = \sum\limits_{k=-\infty}^{\infty}{(-q)^{k^2}}.
\end{equation}

\begin{table*}[t!]
\begin{center}
\begin{small}
	\begin{tabular}{|c||c|c|c|l|}
		\hline
		Lattice & Dim & $\lambda_1$ & $\vol{\Lambda}$ & $\Theta_{\Lambda}(q)$ \\
		\hline \hline
		\rule{0pt}{.6cm}
		\shortstack{$\Z^n$\\ \tiny{Integer}} & $n \ge 1$ & 1 & 1 & $\theta_3^n(q)$ \\
		\hline
		\rule{0pt}{.6cm}
		\shortstack{$D_n$\\ \tiny{Checkerboard}} & $n \ge 3$ & 2 & 2 & $\frac{1}{2}(\theta_3^n(q)+\theta_4^n(q))$ \\
		\hline
		\rule{0pt}{.6cm}
		\shortstack{$A_2$\\ \tiny{Hexagonal}} & 2 & 1 & $\sqrt{\frac{3}{4}}$ & $\theta_2(q)\theta_2(q^3)+\theta_3(q)\theta_3(q^3)$ \\
		\hline
		\rule{0pt}{.6cm}
		\shortstack{$E_8$\\ \tiny{Gosset}} & 8 & 2 & 1 & $\frac{1}{2}(\theta_2^8(q)+\theta_3^8(q)+\theta_4^8(q))$ \\
		\hline
		\rule{0pt}{.6cm}
		\shortstack{$K_{12}$\\ \tiny{Coxeter-Todd}} & 12 & 4 & 27 & \vtop{\hbox{\strut $\frac{9}{32}\theta_2^6(q)\theta_2^6(q^3)+\left(\theta_2(q^4)\theta_2(q^{12})+\theta_3(q^4)\theta_3(q^{12})\right)^6$}\hbox{\strut $+ \frac{45}{16}\theta_2^4(q)\theta_2^4(q^3)\left(\theta_2(q^4)\theta_2(q^{12})+\theta_3(q^4)\theta_3(q^{12})\right)^2$}} \\
		\hline
		\rule{0pt}{.6cm}
		\shortstack{$L_{24}$\\ \tiny{Leech}} & 24 & 4 & 1 & $\frac{1}{2}(\theta_2^8(q)+\theta_3^8(q)+\theta_4^8(q))^3-\frac{45}{16}(\theta_2(q)\theta_3(q)\theta_4(q))^8$ \\
		\hline 
	\end{tabular}
	\end{small}
	\caption{Various important lattices and their basic attributes.}	
	\label{tab:lattices}
\end{center}
\end{table*}

Even so, the Jacobi theta functions are by no means simple functions, but rather hard. The reason for this small set of lattices with known closed form theta series is that efficient counting of lattice points in domains in arbitrary dimensions is still an open problem. While many results have been obtained over the last two decades, such as the results in \cite{widmer,henk,fukshansky}, the settings are so general that the upper bounds on the number of lattice points in bounded domains are far from being tight, even for very simple lattices and domains. Thus, being able to efficiently compute even an approximated version of the theta series of an arbitrary lattice is a problem which is interesting in its own right. 

As additional motivation, and as we shall see in later parts of this article, recent work on lattice code design in different wireless communication scenarios \cite{ling, linglu, belfiore, belfiore2} has led to considering the \emph{flatness factor} of a lattice, which itself is directly related to the theta series of the lattice -- see Def. \ref{theta-flatness} and the equations beneath in Section \ref{subsec:flatnessfactor}. 

We define the \textit{gamma function} and the \textit{incomplete gamma function}  for $a\in \mathbb{R}, x>0$ respectively as
$$\Gamma(a):= \int_{0}^{\infty}t^{a-1}e^{-t}dt,\quad \Gamma(a,x):= \int_{x}^{\infty}t^{a-1}e^{-t}dt.$$
For an integer argument $a=n\in\mathbb{N}$, we have $\Gamma(n)=(n-1)!$\,.

\begin{theorem}
\label{thm:theta_approx}
	Let $\Lambda \subset \R^n$ be a full lattice with volume $\vol{\Lambda}$ and minimal norm $\lambda_1$. The theta series $\Theta_{\Lambda}(q)$, where $0 \le q < 1$, can be expressed as 
	\begin{align}
		\Theta_{\Lambda}(q) &= (1-q^{\lambda_1}) - \frac{\log(q)\lambda_1^{\frac{n}{2}+1}\pi^{\frac{n}{2}}}{\Gamma\left(\frac{n}{2}+1\right)\vol{\Lambda}}\int\limits_{1}^{\infty}{t^{\frac{n}{2}}q^{\lambda_1 t} dt} + \Xi(\Lambda,n,L,q),
	\end{align}
where 
	\begin{align}
		\Xi(\Lambda,n,L,q) = -C(\Lambda,n,L)\log(q) \lambda_1 \int\limits_{1}^{\infty}{t^{\frac{n-1}{2}}q^{\lambda_1 t} dt}.
	\end{align}
	The constant $C(n,\Lambda,L)$ depends on $n$, $\Lambda$, and a Lipschitz constant $L$.
\end{theorem}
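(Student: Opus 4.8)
The plan is to convert the formal power series $\Theta_\Lambda(q)$ into an integral against the lattice point counting function $\Sigma_\Lambda$, to replace that counting function by the volume of a Euclidean ball together with an explicit surface-order error, and then to evaluate the elementary integrals that appear.

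First I would start from $\Theta_\Lambda(q)=1+\sum_{r>0}\Omega_\Lambda(r)q^{r}$ and view the sum as the Riemann--Stieltjes integral $\int_{\lambda_1^-}^{\infty}q^{t}\,d\Sigma_\Lambda(t)$. This is legitimate because $\Sigma_\Lambda$ is the right-continuous step function whose jump at each squared length $r$ equals $\Omega_\Lambda(r)$, because $\Sigma_\Lambda(t)=0$ for $t<\lambda_1$, and because for $0\le q<1$ the theta series converges absolutely while $\Sigma_\Lambda(t)$ grows only polynomially. Integrating by parts, using $\frac{d}{dt}q^{t}=\log(q)\,q^{t}$, the vanishing of $\Sigma_\Lambda$ below $\lambda_1$, and $q^{t}\Sigma_\Lambda(t)\to 0$ as $t\to\infty$, gives
\begin{align*}
\sum_{r>0}\Omega_\Lambda(r)q^{r} \;=\; -\log(q)\int_{\lambda_1}^{\infty}\Sigma_\Lambda(t)\,q^{t}\,dt .
\end{align*}

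Next I would insert the lattice point count estimate. Since $\Sigma_\Lambda(t)+1$ is the number of lattice points in the closed ball of radius $\sqrt{t}$, a classical counting result --- of the type in \cite{widmer}, applied to the dilates of the unit ball, whose boundary is a finite union of Lipschitz images with constant $L$ --- yields
\begin{align*}
\Sigma_\Lambda(t)\;=\;\frac{\pi^{n/2}\,t^{n/2}}{\Gamma\left(\frac{n}{2}+1\right)\vl{\Lambda}}\;-\;1\;+\;E(t),\qquad |E(t)|\;\le\; C(\Lambda,n,L)\,t^{(n-1)/2},
\end{align*}
the error being of the order of the surface area of the ball. Substituting this into the integral splits it into three pieces. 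The constant $-1$ integrates to $-\log(q)\int_{\lambda_1}^{\infty}(-1)q^{t}\,dt=-q^{\lambda_1}$, which together with the leading $1$ produces the term $(1-q^{\lambda_1})$. The volume term, after the change of variable $t=\lambda_1 u$, reproduces exactly $-\frac{\log(q)\lambda_1^{n/2+1}\pi^{n/2}}{\Gamma(n/2+1)\vl{\Lambda}}\int_{1}^{\infty}u^{n/2}q^{\lambda_1 u}\,du$. The error term, after the same substitution and after absorbing the resulting factor $\lambda_1^{(n-1)/2}$ into the constant, is controlled by $\Xi(\Lambda,n,L,q)=-C(\Lambda,n,L)\log(q)\,\lambda_1\int_{1}^{\infty}t^{(n-1)/2}q^{\lambda_1 t}\,dt$. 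Collecting the three pieces gives the asserted identity.

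The only non-routine ingredient, and hence the main obstacle, is the lattice point counting estimate: one needs an error term that is genuinely of surface order, $O(t^{(n-1)/2})$, with an implied constant made explicit through a Lipschitz parametrization of the sphere (and through the successive minima of $\Lambda$, which then get folded into $C(\Lambda,n,L)$). Everything else is Abel summation and elementary integration. One should also check that the estimate is applied uniformly in $t\ge\lambda_1$ so that the error integral converges and is dominated by $|\Xi|$, and keep in mind that the displayed formula is best read as an exact expression for $\Theta_\Lambda(q)$ up to a remainder whose magnitude is $|\Xi(\Lambda,n,L,q)|$.
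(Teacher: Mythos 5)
Your proof is correct and follows essentially the same route as the paper: both reduce $\Theta_\Lambda(q)$ to $-\log(q)\int \Sigma_\Lambda(t)q^t\,dt$ (the paper by writing $q^{\|\mb{x}\|^2}$ as an integral and interchanging sum and integral, you by Stieltjes integration by parts --- the same Abel-summation computation), then invoke the same ball-counting estimate with surface-order Lipschitz error and change variables $t\mapsto\lambda_1 t$. The only cosmetic difference is bookkeeping of the origin: the paper obtains $(1-q^{\lambda_1})$ from the integral over $[0,1)$ where $\Sigma_\Lambda(\lambda_1 t)\equiv 1$, whereas you obtain it from the leading $1$ plus the $-1$ correction in the ball count; your closing remark that the identity should be read as exact only up to a remainder of magnitude $|\Xi|$ is a fair (and arguably more careful) reading of the role of the constant $C(\Lambda,n,L)$.
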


We will build up the proof using a series of propositions.  

\begin{proposition}
\label{prop:theta1}
	Let $\Lambda \subset \R^n$ be a full lattice with minimal norm $\lambda_1$. Then, 
	\begin{align}
		\Theta_{\Lambda}(q) = \left(1-q^{\lambda_1}\right) - \log(q)\lambda_1\int\limits_{1}^{\infty}{\Sigma_{\Lambda}(\lambda_1 t)q^{\lambda_1 t} dt}.
	\end{align}
\end{proposition}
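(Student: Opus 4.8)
The plan is to establish the identity by a ``layer-cake'' argument (equivalently, Abel summation / Riemann--Stieltjes integration by parts) that converts the discrete sum $\Theta_{\Lambda}(q) = \sum_{\mb{x}\in\Lambda} q^{\|\mb{x}\|^2}$ into an integral against the lattice-point counting function $\Sigma_{\Lambda}$. I take $0 < q < 1$ throughout, so that $\log(q) < 0$ and all the sums and integrals below converge absolutely; the boundary case $q = 0$ is trivial. The only elementary fact needed is that, for every real $\rho \ge 0$,
\begin{align}
	q^{\rho} = -\log(q)\int_{\rho}^{\infty}{q^s\,ds},
\end{align}
which follows from $\frac{d}{ds}q^s = \log(q)\,q^s$ together with $q^s \to 0$ as $s \to \infty$.

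First I apply this identity to each summand $q^{\|\mb{x}\|^2}$ and interchange the sum over $\mb{x} \in \Lambda$ with the integral. The interchange is legitimate by Tonelli's theorem, since every term is nonnegative and $\Theta_{\Lambda}(q) < \infty$ for $q < 1$; writing the constraint $\|\mb{x}\|^2 \le s$ as an indicator gives
\begin{align}
	\Theta_{\Lambda}(q) = -\log(q)\int_{0}^{\infty}{\Big(\sum_{\mb{x}\in\Lambda} \mathbf{1}\{\|\mb{x}\|^2 \le s\}\Big)\,q^s\,ds} = -\log(q)\int_{0}^{\infty}{\Sigma_{\Lambda}(s)\,q^s\,ds},
\end{align}
where here $\Sigma_{\Lambda}(s) = \big|\{\mb{x}\in\Lambda : \|\mb{x}\|^2 \le s\}\big|$ is the counting function in its cardinality form, so in particular $\Sigma_{\Lambda}(s) = 1$ for $0 \le s < \lambda_1$. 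The tail of this integral converges because $\Sigma_{\Lambda}(s) = O(s^{n/2})$ by the standard volume estimate for lattice-point counts, while $q^s$ decays exponentially.

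Next I split the integral at $s = \lambda_1$. On $[0,\lambda_1)$ the origin is the only lattice point with $\|\mb{x}\|^2 \le s$, so $\Sigma_{\Lambda} \equiv 1$ there, and applying the elementary identity once more,
\begin{align}
	-\log(q)\int_{0}^{\lambda_1}{\Sigma_{\Lambda}(s)\,q^s\,ds} = -\log(q)\int_{0}^{\lambda_1}{q^s\,ds} = 1 - q^{\lambda_1}.
\end{align}
For the remaining part the substitution $s = \lambda_1 t$ yields
\begin{align}
	-\log(q)\int_{\lambda_1}^{\infty}{\Sigma_{\Lambda}(s)\,q^s\,ds} = -\log(q)\,\lambda_1\int_{1}^{\infty}{\Sigma_{\Lambda}(\lambda_1 t)\,q^{\lambda_1 t}\,dt},
\end{align}
and adding the two contributions gives precisely the claimed expression.

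Since the whole argument is bookkeeping there is no deep obstacle; the points that require a little care are (i) justifying the interchange of the infinite sum and the integral, which Tonelli handles because all quantities are nonnegative; (ii) the convergence of $\int_{\lambda_1}^{\infty}\Sigma_{\Lambda}(s)\,q^s\,ds$, from polynomial growth of $\Sigma_{\Lambda}$ against exponential decay of $q^s$; and (iii) reading $\Sigma_{\Lambda}$ as the genuine lattice-point cardinality (so $\Sigma_{\Lambda}\equiv 1$ below $\lambda_1$), which is exactly what produces the $1-q^{\lambda_1}$ term rather than just $1$. An equivalent route avoids the integral over $[0,\lambda_1)$ entirely by Abel summation over the successive norm shells $\lambda_1 = \mu_1 < \mu_2 < \cdots$, using $q^{\mu_k} - q^{\mu_{k+1}} = -\log(q)\int_{\mu_k}^{\mu_{k+1}} q^s\,ds$ and telescoping; it reaches the same identity but is slightly more cumbersome to set up.
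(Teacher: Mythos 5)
Your proof is correct and follows essentially the same route as the paper's: write each term as $q^{\|\mb{x}\|^2}=-\log(q)\int_{\|\mb{x}\|^2}^{\infty}q^s\,ds$, interchange sum and integral to obtain $-\log(q)\int_0^\infty \Sigma_\Lambda(s)q^s\,ds$, then split at $\lambda_1$ and substitute $s=\lambda_1 t$. The extra care you take with Tonelli, tail convergence, and reading $\Sigma_\Lambda$ as the cardinality (so that it equals $1$ below $\lambda_1$) only makes explicit what the paper leaves implicit.
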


\begin{proof}
	Using the elementary fact $\int_{a}^{\infty}{q^t dt} = -\frac{q^a}{\log(q)}$ for $a \ge 0$, we write  
	\begin{align}
		\Theta_{\Lambda}(q) &= \sum\limits_{\mb{x} \in \Lambda}{q^{||\mb{x}||^2}} = \sum\limits_{\mb{x} \in \Lambda}{\int\limits_{||\mb{x}||^2}^{\infty}{-\log(q) q^t dt}} \\
		&= -\int\limits_{0}^{\infty}\left|\left\{\left.\mb{x} \in \Lambda \right| ||\mb{x}||^2 \le t \right\}\right| \log(q) q^t dt \\
		&= -\int\limits_{0}^{\infty}{\Sigma_{\Lambda}(t) \log(q) q^t dt}. 
	\end{align}
	
	We observe that $\Sigma_{\Lambda}(\lambda_1 t) \equiv 1$ for $t \in \left[0,1\right)$, thus by substituting $t \mapsto \lambda_1 t$ and splitting the integration range, we have
	\begin{align}
		\Theta_{\Lambda}(q) &= -\int\limits_{0}^{1}{\Sigma_{\Lambda}(\lambda_1 t) \log(q) \lambda_1 q^{\lambda_1 t} dt} -\int\limits_{1}^{\infty}{\Sigma_{\Lambda}(\lambda_1 t) \log(q) \lambda_1 q^{\lambda_1 t} dt} \\
		&= \left(1-q^{\lambda_1}\right)- \log(q)\lambda_1\int\limits_{1}^{\infty}{\Sigma_{\Lambda}(\lambda_1 t) q^{\lambda_1 t} dt}.
	\end{align}
\end{proof}

The next step is to estimate the quantity $\Sigma_{\Lambda}(r)$, which counts the number of lattice points in an $n$-sphere of radius $\sqrt{r}$. To that end, we first need the following technical definition and a related lemma. 

\begin{definition}
	Let $S \subset \R^n$ be a bounded convex set. We say that $S$ is \emph{$(n-1)$-Lipschitz parametrizable}, and write $S \in \lip(n,T,L)$, if there are $T$ maps $\phi_1,\ldots,\phi_T: \left[0,1\right]^{n-1} \to S$, the union of images of which cover $S$, and satisfying for all $1 \le i \le T$ the Lipschitz condition 
	\begin{align}
		\left|\phi_i(\mb{x})-\phi_i(\mb{y})\right| \le L\left|\mb{x}-\mb{y}\right|.
	\end{align}
\end{definition}

\begin{lemma}\cite[p.~128, Thm.~2]{lang}
\label{lem:lipschitz}
	Let $D \subset \R^n$ be such that $\partial D$ is $(n-1)$-Lipschitz parametrizable, that is, $\partial D \in \lip(n,T,L)$, and let $\Lambda \subset \R^n$ be a full lattice of volume $\vol{\Lambda}$. Then, 
	\begin{align}
		\left|\left\{\left.\mb{x} \right| \mb{x} \in \Lambda \cap rD\right\} \right| = \frac{\vol{D}}{\vol{\Lambda}}r^n + O(r^{n-1}),
	\end{align}
	where the error term $O(r^{n-1})$ depends on $\Lambda$, $n$, and the Lipschitz constant $L$. 
\end{lemma}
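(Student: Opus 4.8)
The plan is to reduce to the integer lattice and then run a cube-counting argument in which the Lipschitz hypothesis on $\partial D$ controls the boundary error. First I would trivialize the lattice by a linear change of coordinates. Writing $M_\Lambda$ for a generator matrix of $\Lambda$, the map $\mb{x} \mapsto M_\Lambda^{-1}\mb{x}$ carries $\Lambda$ bijectively onto $\Z^n$ and carries $rD$ onto $rD'$ with $D' = M_\Lambda^{-1}D$. Since $\vl{D'} = \vl{D}/|\det M_\Lambda| = \vl{D}/\vl{\Lambda}$, and since $\partial D' = M_\Lambda^{-1}(\partial D)$ is again $(n-1)$-Lipschitz parametrizable via the $T$ maps $M_\Lambda^{-1}\phi_i$ with new Lipschitz constant $L' = \|M_\Lambda^{-1}\|_{\mathrm{op}}\,L$, it suffices to prove the statement for $\Lambda = \Z^n$, with the understanding that the implied constant then depends on $\Lambda$ through $\|M_\Lambda^{-1}\|_{\mathrm{op}}$, i.e. exactly the stated dependence on $\Lambda$, $n$, and $L$.

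For $\Lambda = \Z^n$ I would associate to each $\mb{z} \in \Z^n$ the half-open unit cube $C(\mb{z}) := \mb{z} + [0,1)^n$, which contains $\mb{z}$, has volume $1$, and whose translates tile $\R^n$. Let $I$ be the number of such cubes contained entirely in $rD$ and let $E$ be the number that meet $\partial(rD) = r\,\partial D$. Because the interior cubes are disjoint and lie in $rD$ while $rD$ is covered by all cubes meeting it, one has $I \le \vl{rD} \le I + E$. The same two bounds trap the lattice-point count: every interior cube contains its lattice point $\mb{z} \in rD$, so $N(r) := |\Z^n \cap rD| \ge I$; conversely any $\mb{z} \in \Z^n \cap rD$ whose cube is not interior contains both $\mb{z} \in rD$ and a point outside $rD$, hence meets $\partial(rD)$ and is counted in $E$, giving $N(r) \le I + E$. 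Thus both $N(r)$ and $\vl{rD} = r^n\vl{D}$ lie in $[I, I+E]$, so $|N(r) - r^n\vl{D}| \le E$.

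It remains to show $E = O(r^{n-1})$, and this boundary estimate is the heart of the argument, where the Lipschitz hypothesis is essential. I would cover $\partial(rD)$ by the rescaled maps $r\phi_i$, each Lipschitz with constant $rL$, and partition the parameter cube $[0,1]^{n-1}$ into $\lceil rL \rceil^{n-1}$ subcubes of side $1/\lceil rL \rceil$. The image under $r\phi_i$ of each subcube then has diameter at most $rL \cdot \sqrt{n-1}/\lceil rL \rceil \le \sqrt{n-1}$, so it lies in a ball of radius $O_n(1)$ and meets only $O_n(1)$ unit cubes of $\Z^n$. Summing over the $\lceil rL \rceil^{n-1}$ subcubes and the $T$ maps yields $E = O(T\lceil rL \rceil^{n-1}) = O(r^{n-1})$ for $r \ge 1$, with implied constant depending only on $n$, $T$, and $L$. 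Combining this with the previous paragraph gives $N(r) = r^n\vl{D} + O(r^{n-1})$, and undoing the reduction replaces $\vl{D}$ by $\vl{D}/\vl{\Lambda}$ while absorbing $\|M_\Lambda^{-1}\|_{\mathrm{op}}$ into the constant. The main obstacle is precisely this boundary count: one must verify that the number of parameter-subcubes grows like $(rL)^{n-1}$ while each image piece meets a bounded number of integer cubes, since it is this exact balance that produces the sharp exponent $n-1$ rather than a weaker power of $r$.
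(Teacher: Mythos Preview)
The paper does not supply its own proof of this lemma; it is quoted verbatim from Lang's \emph{Algebraic Number Theory} and used as a black box (the authors even remark later that ``the original proof of Lemma~\ref{lem:lipschitz} in \cite{lang} is not constructive''). Your argument is correct and is essentially the classical proof one finds in Lang: reduce to $\Z^n$ by the linear change of variables $M_\Lambda^{-1}$, trap both the lattice-point count and the volume between the number of interior unit cubes and the number of cubes meeting the boundary, and bound the latter by subdividing the parameter domain of each Lipschitz chart into $\lceil rL\rceil^{n-1}$ pieces whose images each touch $O_n(1)$ cubes. There is nothing to compare against in the paper itself.
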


Using the above lemma, we can now prove the next result.
\begin{proposition}
\label{prop:theta2}
	Let $\Lambda \subset \R^n$ be a full lattice with minimal norm $\lambda_1$ and volume $\vol{\Lambda}$. Let $\Sigma_{\Lambda}(r) := \left|\left\{\left. \mb{x} \in \Lambda \right| ||\mb{x}||^2 \le r \right\}\right|$, $r \in \R_{>0}$ sufficiently large. Then, 
	\begin{align}
		|\Sigma_{\Lambda}(\lambda_1 r) - \frac{(\pi\lambda_1 r)^{\frac{n}{2}}}{\Gamma\left(\frac{n}{2}+1\right) \vol{\Lambda}}| \leq C(\Lambda,n,L) r^{\frac{n-1}{2}},
	\end{align}
	for some constant $C(\Lambda,n,L)$ that depends on the lattice, dimension, and a Lipschitz constant $L$. 
\end{proposition}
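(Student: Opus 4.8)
The plan is to recognize $\Sigma_{\Lambda}(\lambda_1 r)$ as a lattice-point count inside a Euclidean ball and then invoke Lemma~\ref{lem:lipschitz} applied to the unit ball. Write $B := \{\mathbf{y}\in\R^n : \|\mathbf{y}\|\le 1\}$ for the closed unit ball. Since the condition $\|\mathbf{x}\|^2 \le \lambda_1 r$ is equivalent to $\mathbf{x}\in \rho B$ with $\rho := \sqrt{\lambda_1 r}$, we have
\begin{align}
\Sigma_{\Lambda}(\lambda_1 r) = \left|\left\{\left. \mathbf{x}\in\Lambda \;\right|\; \mathbf{x}\in \rho B \right\}\right|,
\end{align}
so the statement is precisely a lattice-point enumeration problem for the dilate $\rho B$, exactly the kind of quantity governed by Lemma~\ref{lem:lipschitz}.

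The first step is to check that $B$ meets the hypothesis of that lemma, i.e.\ that its boundary, the unit sphere $S^{n-1}$, lies in $\lip(n,T,L)$ for suitable $T$ and $L$. This is the only genuinely technical point, and it is classical: for $n\ge 2$ one covers $S^{n-1}$ by finitely many patches, each the graph of a smooth function over a bounded region of a coordinate hyperplane (equivalently, one uses a finite spherical-coordinate atlas), and each such patch is the image of a Lipschitz map from $[0,1]^{n-1}$; for $n=1$ the boundary consists of two points and the condition is vacuous. I expect this verification to be the main obstacle, although it is routine and the resulting constants $T$ and $L$ depend only on $n$.

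Granting this, I would apply Lemma~\ref{lem:lipschitz} with $D = B$ and dilation parameter $\rho$; here one must be careful with notation, since the symbol $r$ in the lemma plays the role of the dilation factor, whereas in the present statement $r$ is the argument of $\Sigma_{\Lambda}$ and the correct dilation is $\rho=\sqrt{\lambda_1 r}$. The lemma then gives
\begin{align}
\Sigma_{\Lambda}(\lambda_1 r) = \frac{\vl{B}}{\vl{\Lambda}}\,\rho^{\,n} + O\!\left(\rho^{\,n-1}\right),
\end{align}
with the implied constant depending only on $\Lambda$, $n$, and $L$. Substituting the classical volume of the unit ball, $\vl{B} = \pi^{\frac{n}{2}}/\Gamma\!\left(\frac{n}{2}+1\right)$, together with $\rho^{\,n} = (\lambda_1 r)^{\frac{n}{2}}$ and $\rho^{\,n-1} = (\lambda_1 r)^{\frac{n-1}{2}}$, yields
\begin{align}
\Sigma_{\Lambda}(\lambda_1 r) = \frac{(\pi\lambda_1 r)^{\frac{n}{2}}}{\Gamma\!\left(\frac{n}{2}+1\right)\vl{\Lambda}} + O\!\left((\lambda_1 r)^{\frac{n-1}{2}}\right).
\end{align}

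Finally, I would absorb the factor $\lambda_1^{\frac{n-1}{2}}$ and the $O$-constant into a single quantity $C(\Lambda,n,L)$, chosen so that the error bound holds for all $r\ge 1$ — the range relevant to Proposition~\ref{prop:theta1} — which is legitimate since the difference of the two sides divided by $r^{\frac{n-1}{2}}$ is bounded on $[1,\infty)$: it is bounded on any compact $[1,R_0]$ because both sides are, and bounded beyond $R_0$ by the $O$-estimate. This rewrites the error term in the claimed form $C(\Lambda,n,L)\,r^{\frac{n-1}{2}}$, and everything apart from the Lipschitz-parametrizability step is pure bookkeeping.
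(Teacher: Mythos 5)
Your proposal is correct and follows essentially the same route as the paper: both reduce $\Sigma_{\Lambda}(\lambda_1 r)$ to a lattice-point count in a dilated ball and invoke Lemma~\ref{lem:lipschitz}, differing only in that the paper dilates the ball $\mathcal{B}_{\mathbf{0}}(\sqrt{\lambda_1})$ by $\sqrt{r}$ (citing \cite{widmer} for Lipschitz parametrizability of bounded convex sets) while you dilate the unit ball by $\sqrt{\lambda_1 r}$ and verify parametrizability by an explicit atlas. Your closing remark about absorbing the $O$-constant into $C(\Lambda,n,L)$ uniformly for $r \ge 1$ is actually slightly more careful than the paper, which states the equality without comment.
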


\begin{proof}
	We use Lemma~\ref{lem:lipschitz} with $D_{\lambda_1} := \mathcal{B}_{\mathbf{0}}(\sqrt{\lambda_1})$, a sphere of radius $\sqrt{\lambda_1}$ centered at the origin. Since $D_{\lambda_1}$ is bounded and convex, by \cite[Thm.~2.6]{widmer} we have $\partial D_{\lambda_1} \in \lip(n,1,L)$. 
	
We can now write
	\begin{align}
		\Sigma_{\Lambda}(\lambda_1 r) &= \left|\left\{\left. \mb{x} \in \Lambda \right| ||\mb{x}||^2 \le \lambda_1 r \right\}\right| \\
		&= \left|\left\{ \mb{x} \in \left(\Lambda \cap \mathcal{B}_{\mathbf{0}}\left(\sqrt{\lambda_1 r}\right)\right)\right\}\right| \\
		&= \left|\left\{ \mb{x} \in \Lambda \cap \left(\sqrt{r} D_{\lambda_1}\right) \right\}\right|. 
	\end{align}
	
	Using the relation $\vol{D_{\lambda_1}} = \vol{\mathcal{B}_{\mathbf{0}}(\sqrt{\lambda_1})} = \frac{(\pi\lambda_1)^{\frac{n}{2}}}{\Gamma\left(\frac{n}{2}+1\right)}$, we have
	\begin{align}
		\Sigma_{\Lambda}(\lambda_1 r) = \frac{(\pi\lambda_1 r)^{\frac{n}{2}}}{\Gamma\left(\frac{n}{2}+1\right)\vol{\Lambda}} + O(r^{\frac{n-1}{2}}),
	\end{align}
	where by Lemma~\ref{lem:lipschitz}, the error term $O(r^{\frac{n-1}{2}})$  is bounded by $C(\Lambda,n,L) r^{\frac{n-1}{2}}$ for some constant $C(\Lambda,n,L)$ that depends on the lattice, dimension, and a Lipschitz constant $L$. 
\end{proof}

We can now prove Theorem~\ref{thm:theta_approx} using the above results. 

\begin{proof}[Proof of Theorem \ref{thm:theta_approx}]
	By Proposition~\ref{prop:theta1} we start by writing 
	\begin{align}
		\Theta_{\Lambda}(q) = (1-q^{\lambda_1}) - \log(q)\lambda_1 \int\limits_{1}^{\infty}{\Sigma_\Lambda(\lambda_1 t)q^{\lambda_1 t} dt}.
	\end{align}
	
	Using the estimate for $\Sigma_{\Lambda}(r)$ derived in Proposition~\ref{prop:theta2}, we can now further manipulate the expression to read
	\begin{align}
		\Theta_{\Lambda}(q)+q^{\lambda_1}-1 &= -\log(q)\lambda_1 \int\limits_{1}^{\infty}{\Sigma_{\Lambda}(\lambda_1 t) q^{\lambda_1 t} dt} \\
		&= -\log(q)\lambda_1 \int\limits_{1}^{\infty}\left(\frac{(\pi\lambda_1 t)^{\frac{n}{2}}}{\Gamma\left(\frac{n}{2}+1\right)\vol{\Lambda}} + C(\Lambda,n,L)t^{\frac{n-1}{2}}\right) q^{\lambda_1 t} dt \\
		&= -\frac{\log(q)\pi^{\frac{n}{2}}\lambda_1^{\frac{n}{2}+1}}{\Gamma\left(\frac{n}{2}+1\right)\vol{\Lambda}}\int\limits_{1}^{\infty}{t^{\frac{n}{2}}q^{\lambda_1 t} dt}  - C(\Lambda,n,L)\log(q)\lambda_1\int\limits_{1}^{\infty}{t^{\frac{n-1}{2}}q^{\lambda_1 t} dt} \\ 
		&= -\frac{\log(q)\pi^{\frac{n}{2}}\lambda_1^{\frac{n}{2}+1}}{\Gamma\left(\frac{n}{2}+1\right)\vol{\Lambda}}\int\limits_{1}^{\infty}{t^{\frac{n}{2}}q^{\lambda_1 t} dt} + \Xi(\Lambda,n,L,q).
	\end{align}
\end{proof}

We will henceforth write $\Theta_{\Lambda}^{\mf{A}}(q)$ for the approximation $\Theta_{\Lambda}(q) - \Xi(\Lambda,n,L,q)$. The following corollary will be of use later. 
\begin{corollary}
\label{cor:theta}
	Let $\sigma^2 \in \R_{>0}$, and $q(\sigma^2) := e^{-\frac{1}{2\sigma^2}}$. Then, as a function of $\sigma^2$, we have
	\begin{align}
		\Theta_{\Lambda}^{\mf{A}}(q(\sigma^2)) &= \left(1-e^{-\frac{\lambda_1}{2\sigma^2}}\right) + \frac{(\lambda_1\pi)^{\frac{n}{2}}\lambda_1}{2\sigma^2\Gamma\left(\frac{n}{2}+1\right)\vol{\Lambda}}\int\limits_{1}^{\infty}{t^{\frac{n}{2}}e^{-\frac{\lambda_1 t}{2\sigma^2}} dt}.
	\end{align}
\end{corollary}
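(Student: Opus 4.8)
The plan is to take the expression for $\Theta_{\Lambda}^{\mf{A}}(q)$ established just before the corollary, namely
\begin{align}
\Theta_{\Lambda}^{\mf{A}}(q) = (1-q^{\lambda_1}) - \frac{\log(q)\lambda_1^{\frac{n}{2}+1}\pi^{\frac{n}{2}}}{\Gamma\left(\frac{n}{2}+1\right)\vl{\Lambda}}\int\limits_{1}^{\infty}{t^{\frac{n}{2}}q^{\lambda_1 t} dt},
\end{align}
and simply substitute $q = q(\sigma^2) = e^{-\frac{1}{2\sigma^2}}$ into it. This is legitimate because for $\sigma^2 \in \R_{>0}$ we have $0 < e^{-\frac{1}{2\sigma^2}} < 1$, so the hypothesis $0 \le q < 1$ of Theorem~\ref{thm:theta_approx} is satisfied and $\Theta_{\Lambda}^{\mf{A}}$ is defined at this value.

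The computation is then just bookkeeping with two elementary substitutions. First, $q(\sigma^2)^{\lambda_1} = e^{-\frac{\lambda_1}{2\sigma^2}}$, which turns the leading term into $\left(1 - e^{-\frac{\lambda_1}{2\sigma^2}}\right)$. Second, $\log(q(\sigma^2)) = -\frac{1}{2\sigma^2}$, so the factor $-\log(q)$ becomes $\frac{1}{2\sigma^2}$; combining this with $q^{\lambda_1 t} = e^{-\frac{\lambda_1 t}{2\sigma^2}}$ inside the integral gives exactly the stated second term
\begin{align}
\frac{(\lambda_1\pi)^{\frac{n}{2}}\lambda_1}{2\sigma^2\,\Gamma\left(\frac{n}{2}+1\right)\vl{\Lambda}}\int\limits_{1}^{\infty}{t^{\frac{n}{2}}e^{-\frac{\lambda_1 t}{2\sigma^2}} dt},
\end{align}
after regrouping $\lambda_1^{\frac{n}{2}+1}\pi^{\frac{n}{2}} = (\lambda_1\pi)^{\frac{n}{2}}\lambda_1$. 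One should note the sign: $-\log(q)$ is positive, which is why the minus sign in the Theorem~\ref{thm:theta_approx} expression flips to a plus here.

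There is essentially no obstacle; the only points requiring a word of care are checking that the integral $\int_1^\infty t^{n/2} e^{-\lambda_1 t/(2\sigma^2)}\,dt$ converges (it does, since $\lambda_1/(2\sigma^2) > 0$ and the exponential decay dominates the polynomial growth, so the expression is well-defined as a function of $\sigma^2$), and confirming the algebraic regrouping of the constants $\lambda_1^{\frac{n}{2}+1}\pi^{\frac{n}{2}}$ into $(\lambda_1\pi)^{\frac{n}{2}}\lambda_1$. Both are routine, so the proof is a two-line substitution argument.
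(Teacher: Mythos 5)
Your proof is correct and matches the paper's intent exactly: the paper states Corollary~\ref{cor:theta} without proof, treating it as the immediate substitution $q = e^{-\frac{1}{2\sigma^2}}$ into the definition $\Theta_{\Lambda}^{\mf{A}}(q) = \Theta_{\Lambda}(q) - \Xi(\Lambda,n,L,q)$, which is precisely the computation you carry out. Your added remarks on the sign of $-\log(q)$, the regrouping $\lambda_1^{\frac{n}{2}+1}\pi^{\frac{n}{2}} = (\lambda_1\pi)^{\frac{n}{2}}\lambda_1$, and the convergence of the integral are all accurate.
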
	

Let $q=q(\sigma)=e^{-1/2\sigma^2}$. An elementary change of variable $t=\frac{\lambda_1}{2\sigma^2} z$ yields

$$\Gamma\Big(\frac{n}{2}+1,x\Big)= \Big(\frac{\lambda_1}{2\sigma^2}\Big)^{\frac{n}{2}+1}\int_{\frac{2\sigma^2 x}{\lambda_1}}^{\infty}z^{\frac{n}{2}}e^{-\frac{\lambda_1}{2\sigma^2} z}dz.$$

Let $x=\frac{\lambda_1}{2\sigma^2}$. Then

\begin{equation}\label{changeofvar}
    \int_{1}^{\infty} z^{n/2}e^{-\frac{\lambda_1}{2\sigma^2} z}dz = \Big(\frac{2\sigma^2}{\lambda_1}\Big)^{\frac{n}{2}+1}\Gamma\Big(\frac{n}{2}+1,\frac{\lambda_1}{2\sigma^2}\Big).
\end{equation}

Thus, the approximation in Theorem \ref{thm:theta_approx} becomes
\begin{equation}\label{approx}
\Theta(q)=\Theta(e^{-1/2\sigma^2})=1-e^{-\lambda_1/2\sigma^2}+\frac{\Big(\sqrt{2\sigma^2\pi}\Big)^n}{\vol(\Lambda)}\frac{\Gamma\Big(n/2+1,\frac{\lambda_1}{2\sigma^2} \Big)}{\Gamma\Big(n/2+1,0\Big)} +\Xi=\thapp(q)+\Xi,   
\end{equation}
where $\Xi=\frac{\lambda_1}{2\sigma^2}C(n,\Lambda, L)\Gamma(\frac{n}{2}+\frac{1}{2},\lambda_1)$.

The following corollary provides a recursive formula for calculating the main term $\thapp(q)$ in Theorem \ref{thm:theta_approx} whenever the dimension $n$ is even.
\begin{corollary}\label{neven}
Let $q=e^{\frac{-1}{2\sigma^2}}$ and $n$ even. Then $\thapp(q(\sigma))$  in \eqref{approx} becomes
\begin{equation*}
       \thapp(q)=1+ q^{\lambda_1}\left(-1+\frac{\pi^{\frac{n}{2}}}{\vol(\Lambda)} \sum_{i=0}^{\frac{n}{2}}\frac{\lambda_1^i 2^{\frac{n}{2}-i}\sigma^{n-2i}}{i!}\right). 
\end{equation*}

\end{corollary}
\begin{proof}

\begin{align*}\thapp(q)&=1-q^{\lambda_1}+\frac{\pi^{\frac{n}{2}}}{\vol(\Lambda)}\Big(\frac{-1}{\log(q)}\Big)^{\frac{n}{2}} q^{\lambda_1}\Big(\sum_{i=0}^{\frac{n}{2}}\frac{(-\lambda_1 \log(q))^i}{i!}\Big)
\\
&=1-q^{\lambda_1}+\frac{\pi^{\frac{n}{2}}}{\vol(\Lambda)}2^{\frac{n}{2}}\sigma^n q^{\lambda_1}\sum_{i=0}^{\frac{n}{2}}\frac{\lambda_1^i }{i! 2^i\sigma^{2i}}
\\
&= 1+ q^{\lambda_1}\left(-1+\frac{\pi^{\frac{n}{2}}2^{\frac{n}{2}}\sigma^n}{\vol(\Lambda)} \sum_{i=0}^{\frac{n}{2}}\frac{\lambda_1^i }{i! 2^i\sigma^{2i}}\right)
\\
&= 1+ q^{\lambda_1}\left(-1+\frac{\pi^{\frac{n}{2}}}{\vol(\Lambda)} \sum_{i=0}^{\frac{n}{2}}\frac{\lambda_1^i 2^{\frac{n}{2}-i}\sigma^{n-2i}}{i!}\right)
.\end{align*}
\end{proof}

\subsection{Analysis for the Accuracy of $\thapp$}
From Corollary~\ref{neven}, we get the following result, showing that $\thapp$ is indeed larger than any truncation of the theta series for values of $\sigma$ sufficiently large.

\begin{proposition}
Let $\Lambda$ be a lattice of even dimension $n$, and let $j$ be a positive integer.
Then there is a threshold value $\sigma_j\geq 0$ such that $\thapp(q(\sigma))\geq \Theta_{j,\Lambda}(q(\sigma))$ for all $\sigma \geq\sigma_j$. If the lattice $\Lambda$ satisfies $(\kappa+1)\vol(\Lambda)\leq \lambda_1^{\frac{n}{2}} \vol(\mathcal{B}_{\mathbf{0}}(1))$, then $\thapp(q(\sigma))\geq \Theta_{1,\Lambda}(q(\sigma))$ for all $\sigma \geq 0$.
\end{proposition}

\begin{proof} For simplicity, we only present the proof for even $n$. For odd $n$, it goes analogously but will look messier due to the more complicated form of the gamma function. 

Let $\lambda_i$ be the $i^{\rm th}$ successive minimum norm of $\Lambda$, and let $\kappa_i$ be the number of vectors in $\Lambda$ of norm $\lambda_i$. By definition, we then have $$\Theta_{j,\Lambda}(q(\sigma))=1+\sum_{i=1}^j q^{\lambda_i} \kappa_i \leq 1+ q^{\lambda_1} \sum_{i=1}^j \kappa_i .$$
By Corollary~\ref{neven}, it thus suffices to show that \begin{equation}\label{coeffs}
-1+\frac{\pi^{\frac{n}{2}}}{\vol(\Lambda)} \sum_{i=0}^{\frac{n}{2}}\frac{\lambda_1^i 2^{\frac{n}{2}-i}\sigma^{n-2i}}{i!} \geq \sum_{i=1}^j \kappa_i\end{equation} for large enough $\sigma$. But the left hand side of \eqref{coeffs} is a continuous and strictly increasing function in $\sigma\geq 0$, and tends to infinity as $\sigma\to\infty$. As $\sum_{i=1}^j \kappa_i$ is constant, the inequality \eqref{coeffs} holds for all large enough $\sigma$. 

To prove the second part of the theorem, it is now enough to show that \eqref{coeffs} holds for $\sigma=0$, $j=0$. But when $\sigma=0$, the only non-vanishing term in the sum is when $i=\frac{n}{2}$, so \eqref{coeffs} is equivalent to
$$\frac{\pi^{\frac{n}{2}}\lambda_1^{\frac{n}{2}}}{\vol(\Lambda)\frac{n}{2}!}\geq \kappa +1.$$ Observing that $$(\kappa+1)\vol(\Lambda)\leq \lambda_1^{\frac{n}{2}} \vol(\mathcal{B}_{\mathbf{0}}(1)))=\frac{\pi^{\frac{n}{2}}}{\frac{n}{2}!},$$ the statement of the Proposition follows.
\end{proof}

It is worth noting that there is a nice geometric interpretation of the above inequality 
\begin{equation}(\kappa+1)\vol(\Lambda)\leq \lambda_1^{\frac{n}{2}}\vol(\mathcal{B}_{\mathbf{0}}(1))).
\label{eq:kappa}
\end{equation}
Namely, the right hand side of  \eqref{eq:kappa} is the volume of the ball centered around the origin with the shortest vectors of $\Lambda$ on its boundary. The left hand side of \eqref{eq:kappa} is the volume of the union of the $\kappa + 1$ Voronoi cells centered at the origin and at the shortest vectors of $\Lambda$. Depending on which of these volumes is the largest, the inequality $\thapp\geq \Theta_{1}$ holds either for all $q$, or only for large enough $q$.

To prove that the approximation $\thapp$ is indeed closer to the actual theta function $\Theta$ than the $j^{\rm th}$ truncation $\Theta_j$ for $\sigma >\sigma_j$, it would be enough to show that $\thapp(q)\leq\Theta_\Lambda (q)$ holds for all lattices $\Lambda$ and all $0\leq q<1$. While this inequality holds for all lattices for which we can do explicit calculations, we are not able to prove it in full generality. However, it holds on average in the sense of the following theorem.

\begin{theorem}
Let $\Lambda$ be a random lattice with distribution given by the Haar measure on $\SL(n,\mathbb{R})/\SL(n,\mathbb{Z})$. Then, for every $0\leq q<1$, it holds that $$\mathbb{E}[\thapp(q)] \leq \mathbb{E}[\Theta_\Lambda (q)].$$
\end{theorem}

\begin{proof}
A straightforward application of Siegel’s mean value theorem \cite{siegel1945mean} implies that for any $t>0$, we have $$\mathbb{E}(\Sigma_{\Lambda}(t))=1+\vol(\mathcal{B}_{\mathbf{0}}(t)),$$
where $\mathcal{B}_{\mathbf{0}}(1)$ is the Euclidean ball of radius $r$. For any fixed lattice $\Lambda$, we can thus write \begin{align}
    \Theta_\Lambda^{\mathfrak{A}}(q) &= (1-q^\ell)-\log q\int\limits_{\ell}^{\infty}{q^t\mathbb{E}[\Sigma_{\Lambda}(t)-1] dt}\\
    &= -\log q\int\limits_{0}^{\ell}{q^t dt} -\log q\int\limits_{\ell}^{\infty}{q^t\mathbb{E}[\Sigma_{\Lambda}(t)-1] dt}\\
    &= -\log q \int\limits_{0}^{\infty}{q^t\left(I_{t\leq\ell} + I_{t>\ell}\mathbb{E}[\Sigma_{\Lambda}(t)-1]\right) dt},
\end{align} where $\ell$ is the (deterministic) shortest norm of $\Lambda$, and $I_E$ denotes the indicator function of the event $E$. 
Observing that $$\Theta_\Lambda(q) =  -\log(q)\int\limits_{0}^{\infty}{\Sigma_{\Lambda}(t)  q^t dt},$$
we get by linearity of the expectation and by Fubini's theorem that
\begin{align*}
    \mathbb{E}[\Theta_\Lambda (q)] - \mathbb{E}[\thapp(q)] &=-\log q   \, \mathbb{E}\left[\int\limits_{0}^{\infty}{\Sigma_{\Lambda}(t)  q^t dt} - \int\limits_{0}^{\infty}{q^t\left(I_{t\leq\lambda} + I_{t>\lambda}\mathbb{E}\left[\Sigma_{\Lambda}(t)-1\right]\right) dt}\right]\\
    & = -\log q \int\limits_{0}^{\infty}{q^t\mathbb{E}\left[\Sigma_{\Lambda}(t)- \left(I_{t\leq\lambda} + I_{t>\lambda}(\mathbb{E}(\Sigma_{\Lambda}(t))-1)\right)\right] dt}\\
    & = -\log q \int\limits_{0}^{\infty}{q^t\mathbb{E}\left[\mathbb{E}\left[\Sigma_{\Lambda}(t)\right]- \left(I_{t\leq\lambda} + I_{t>\lambda}(\mathbb{E}(\Sigma_{\Lambda}(t))-1)\right)\right] dt}\\
       & = -\log q \int\limits_{0}^{\infty}{q^t\left(\mathbb{E}\left[\mathbb{E}\left[\Sigma_{\Lambda}(t)\right](1-I_{t>\lambda})\right]- \mathbb{E}\left[I_{t\leq\lambda} - I_{t>\lambda}\right]\right) dt}\\
     &= -\log q \int\limits_{0}^{\infty}{q^t\left( \mathbb{E}\left[\Sigma_{\Lambda}(t)\right]\mathbb{P}[t\leq \lambda]-\mathbb{P}[t\leq \lambda]+ \mathbb{P}[t>\lambda]\right) dt}\\
     &= -\log q \int\limits_{0}^{\infty}{q^t (\mathbb{P}[t\leq \lambda]\mathbb{E}\left[\Sigma_{\Lambda}(t)-1\right]+ \mathbb{P}[t>\lambda] )dt},
\end{align*}
where $\lambda$ is the (random) shortest norm of $\Lambda$. The integrand is now readily seen to be a non-negative real function, wherefore we get $$\mathbb{E}[\Theta_\Lambda (q)] \geq \mathbb{E}[\thapp(q)].$$
 \hfill \ensuremath{}
\end{proof}

\subsection{Error Term Analysis for the Point Counting Function}

The proof of Theorem \ref{thm:theta_approx} relied on an estimate (Proposition \ref{prop:theta2}) of the number of lattice points in  $\mathcal{B}_{\mathbf{0}}\left(\sqrt{\lambda_1 r}\right)$.
For the sake of completeness, we sketch an alternative proof of Proposition \ref{prop:theta2} with a slightly different error term.

Let $\Lambda=M\cdot\mathbb{Z}^n$ for some $M\in\GL_n(\mathbb{R})$. Then 

\begin{align}
 \Sigma_{\Lambda}(\lambda_1 r)&=\left|M\cdot\mathbb{Z}^n\cap \mathcal{B}_{\mathbf{0}}\left(\sqrt{\lambda_1 r}\right)\right|\\
 &= \left|\mathbb{Z}^n\cap M^{-1}\mathcal{B}_{\mathbf{0}}\left(\sqrt{\lambda_1 r}\right)\right|,
\end{align}
where $ M^{-1}\mathcal{B}_{\mathbf{0}}\left(\sqrt{\lambda_1 r}\right) =\left\{M^{-1}\mb{x}~:~\mb{x}\in \mathcal{B}_{\mathbf{0}}\left(\sqrt{\lambda_1 r}\right)  \right\} $.

Consider the tiling of $\R^n$ with unit cubes centered at the points of $\Z^n$. We interpret $\Sigma_{\Lambda}(\lambda_1 r)$ as the number of unit cubes in this tiling with centers lying inside $ M^{-1}\mathcal{B}_{\mathbf{0}}\left(\sqrt{\lambda_1 r}\right)$.
Hence,

\begin{align}
   \Sigma_{\Lambda}(\lambda_1 r)&=\vol\left( M^{-1}\mathcal{B}_{\mathbf{0}}\left(\sqrt{\lambda_1 r}\right) \right)+ \mathcal{E}_{\Lambda}(\sqrt{\lambda_1 r})\\
   &= \frac{(\pi\lambda_1 r)^{\frac{n}{2}}}{\Gamma\left(\frac{n}{2}+1\right)\vol{\Lambda}}+\mathcal{E}_{\Lambda}(\sqrt{\lambda_1 r}),
\end{align}
where $\mathcal{E}_{\Lambda}(\sqrt{\lambda_1 r})$ is bounded by the volumes of the cubes that intersect the boundary $\partial M^{-1}\mathcal{B}_{\mathbf{0}}\left(\sqrt{\lambda_1 r}\right)$. 
This volume is proportional to the (Hausdorff) surface measure of $M^{-1}\mathcal{B}_{\mathbf{0}}\left(\sqrt{\lambda_1 r}\right)$.
Thus, for $r$ large enough, the dominant term in  $ \Sigma_{\Lambda}(\lambda_1 r)$ is $\vol\left(M^{-1}\mathcal{B}_{\mathbf{0}}\left(\sqrt{\lambda_1 r}\right) \right)$.

Let 
$$ \Sigma_{\Lambda}(t)=\vol\left( M^{-1}\mathcal{B}_{\mathbf{0}}\left(\sqrt{ t}\right) \right)+ \mathcal{E}_{\Lambda}(\sqrt{t}).$$
In the following, we will consider the order of magnitude of $\mathcal{E}_{\Lambda}(\sqrt{t})$ and its relation with the error term in Theorem \ref{thm:theta_approx}.

Let $C$ be a positive constant, $f$ a real valued (integrable) function and $t_0$ a positive real number such that  $$|\mathcal{E}_{\Lambda}(\sqrt{t})|<Cf(t) \textrm{ for all } t\geq t_0,$$ \textit{i.e.}, $\mathcal{E}_{\Lambda}(\sqrt{t})=O(f(t))$.

With the notation above, we re-write the error term in Theorem \ref{thm:theta_approx} in the following form.

\begin{align}\label{new-error}
       \Xi(\Lambda,n,q)&=O\left(\log(q)\lambda_1\int\limits_{1}^{\infty}{f(\lambda_1r)q^{\lambda_1 r} dr}\right)\\ 
       &=O\left(\log(q)\int\limits_{\lambda_1}^{\infty}{f(t)q^{t} dt}\right)\\
       &\leq O\left(\log(q)\int\limits_{\lambda_1}^{t_0}{f(t)q^{t} dt}\right)
   +C\log(q)\int\limits_{t_0}^{\infty}{f(t)q^{t} dt}.
\end{align}

Note that in the proof of Theorem \ref{thm:theta_approx} we implicitly assume that $t\geq \lambda_1$. Thus, $t_0\geq \lambda_1$.
Equation \eqref{new-error} shows that any improvement on the order of magnitude of $\mathcal{E}_{\Lambda}(\sqrt{\lambda_1 r})$ will necessarily imply an improved error term $ \Xi(\Lambda,n,q)$.
\begin{remark}
With this new interpretation of $\Sigma_{\Lambda}(t)$, the main term in Theorem \ref{thm:theta_approx} remains the same, but the term $\Xi(\Lambda,n,q)$ depends on $\lambda_1$ rather than the Lipschitz constant $L$.
\end{remark}

In \cite{gotz}, Götze showed that $ \mathcal{E}_{\Lambda}(\sqrt{t})=O(t^{\frac{n-2}{2}})$ for every lattice $\Lambda\subset\mathbb{R}^n$ with $n\geq 5$. This bound is tight in the sense that $ \mathcal{E}_{\Lambda}(\sqrt{t})\neq o(t^{\frac{n-2}{2}})$ for $\Lambda=\mathbb{Z}^n$. 

Assuming that $n\geq 5$, we get 
\begin{equation}\label{Gotzeerror}
   \Xi(\Lambda,n,q)\leq O\left(\log(q)\int\limits_{\lambda_1}^{t_0}{t^{\frac{n-2}{2}}q^{t} dt}\right)
   +C\log(q)\int\limits_{t_0}^{\infty}{t^{\frac{n-2}{2}}q^{t} dt}.
\end{equation}
Let $q=e^{\frac{-1}{2\sigma^2}}$. Then inequality \eqref{Gotzeerror} becomes 
\begin{equation}
     \Xi(\Lambda,n,e^{\frac{-1}{2\sigma^2}})\leq O\left(\log(q)\int\limits_{\lambda_1}^{t_0}{t^{\frac{n-2}{2}}q^{t} dt}\right)
  - \frac{C}{2\sigma^2}\int\limits_{t_0}^{\infty}t^{\frac{n-2}{2}}e^{-\frac{t}{2\sigma^2} dt}.
\end{equation}

Thus, using the same argument as in \eqref{changeofvar} we get
 \begin{equation}
      \Xi(\Lambda,n,e^{\frac{-1}{2\sigma^2}})=O\left((2\sigma^2)^{\frac{n}{2}-1}\Gamma\left(\frac{n}{2},\frac{t_0}{2\sigma^2}\right)\right).
 \end{equation}

Finally, we write the approximation in Theorem \ref{thm:theta_approx} as
\begin{equation}\label{appprox}
    \Theta\left(e^{-1/2\sigma^2}\right)- 1+e^{-\lambda_1/2\sigma^2}=\frac{\Big(\sqrt{2\sigma^2\pi}\Big)^n}{\vol(\Lambda)}\frac{\Gamma\Big(n/2+1,\frac{\lambda_1}{2\sigma^2} \Big)}{\Gamma\Big(n/2+1,0\Big)}+\Xi(\Lambda,n,e^{\frac{-1}{2\sigma^2}}).
\end{equation}

Using integration by parts, one can prove that the incomplete gamma function satisfies the following recurrence relation
\begin{equation}\label{reccurent}
    \Gamma\left(\frac{n}{2}+1,\frac{\lambda_1}{2\sigma^2}\right)=\frac{n}{2}\Gamma\left(\frac{n}{2},\frac{\lambda_1}{2\sigma^2}\right)+\left(\frac{\lambda_1}{2\sigma^2}\right)^{\frac{n}{2}} e^{-\frac{\lambda_1}{2\sigma^2}}.
\end{equation}

Assume for instance that $t_0=\lambda_1$, then using \eqref{reccurent}, the ratio of the main and error terms in \eqref{appprox} is 
\begin{align}
   \mathcal{R}&= \frac{\Big(\sqrt{2\sigma^2\pi}\Big)^n}{\vol(\Lambda)}\frac{\Gamma\Big(n/2+1,\frac{\lambda_1}{2\sigma^2} \Big)}{\Gamma\Big(n/2+1,0\Big)}\times \frac{1}{\left((2\sigma^2)^{\frac{n}{2}-1}\Gamma\left(\frac{n}{2},\frac{t_0}{2\sigma^2}\right)\right) }  \\
   &= n\sigma^2 \frac{\vol(\mathcal{B}_0(1))}{\vol(\Lambda)} + h(\lambda_1,n,\sigma),
\end{align}

where $h(\lambda_1,n,\sigma)$ is a positive constant depending on $n,\lambda_1$ and $\sigma$.

The last equality shows that under the above assumptions, the ratio $\mathcal{R}$ is greater than one whenever $\sigma>\sqrt{\frac{\vol(\Lambda)}{n \vol(\mathcal{B}_0(1))}}$.

As a last part of this section, we mention further results on the magnitude of $\mathcal{E}_{\Lambda}(\sqrt{t})$.

Let $\mathcal{L}_n$ be the set of determinant $1$ lattices with Haar measure $\mu_n$. We call a random variable sampled from $\mathcal{L}_n$ with respect to $\mu_n$ a random lattice.

Let $\delta>0$ be a small arbitrary constant. Shmidt \cite{shmidt} proved that $ \mathcal{E}_{\Lambda}(\sqrt{t})=O(t^{\frac{n}{4}+\delta})$ for almost every lattice.

It is conjectured \cite{Ivic} that $\mathcal{E}_{\Lambda}(\sqrt{t})=O(t^{\frac{n-1}{4}+\delta/2})$.
In \cite{holmin}, the author showed that the bound $\mathcal{E}_{\Lambda}(\sqrt{t})=O(t^{\frac{n-1}{4}+\delta/2})$ holds in average for lattices of dimensions $n=2,3$, where the average is taken over any compact subset $Y$ of the space of all lattices (not necessarily of volume $1$).

In the following, we will give an example of a compact subset of $\mathcal{L}_n$.
Compact subsets of $\mathcal{L}_n$ can be obtained by the so-called Mahler's compactness criterion.
\begin{theorem}[Mahler]
The set of lattices $\Lambda\in \mathcal{L}_n $ whose shortest vector is of a fixed
length $\geq r>0$ is compact.
\end{theorem}
\begin{definition}
A lattice $\Lambda\subset\mathbb{R}^n$ is \emph{well-rounded} (abbreviated WR) if $$\spn_\mathbb{R}(S(L)) = \mathbb{R}^n.$$
We denote by $\mathcal{WR}_n$ the set of well-rounded lattices in $\mathcal{L}_n$.
\end{definition}

\begin{proposition}\label{wrcompact}
    The set $\mathcal{WR}_n$ is compact in $\mathcal{L}_n$.
\end{proposition}

\begin{proof}
Let $\Lambda$ be a lattice in $\mathcal{WR}_n$, and let $v_i\in \Lambda$ such that $||v_i||=\lambda_i(\Lambda)$ for $1\leq i\leq n$. 
Taking $\Lambda'=\spn_{\Z}(v_i~|~1\leq i\leq n)$, then $\Lambda'$ is a full-rank sub-lattice of $\Lambda$. Hence, $\vol(\Lambda')=[\Lambda:\Lambda']\geq 1$. 

On the other hand $$\prod_{i=1}^n ||v_i||=\prod_{i=1}^n\lambda_i(\Lambda)=\lambda_1(\Lambda)^n.$$ 
Recalling the Hadamard's inequality $\prod_{i=1}^n ||v_i||\geq \vol(\Lambda')$, we conclude that $$\lambda_1(\Lambda)\geq 1.$$
The result follows from Mahler's compactness criterion.
\end{proof}

Combining Proposition \ref{wrcompact} with the main result in \cite{holmin}, we get the following result.
\begin{proposition}\label{errorWR}
The bound
$$
      \Xi(\Lambda,n,e^{\frac{-1}{2\sigma^2}})=O\left((2\sigma^2)^{\frac{n-5}{4}+\delta/2}\Gamma\left(\frac{n-1}{4}+\delta/2,\frac{t_0}{2\sigma^2}\right)\right)
$$
 holds on average over $\mathcal{WR}_2$ and $\mathcal{WR}_3$.
\end{proposition}

Landau \cite{landau} proved that $\mathcal{E}_{\Lambda}(\sqrt{t})\neq o(t^{\frac{n-1}{4}})$ for any lattice of dimension $n\geq 3$. Consequently, for $n\geq 3$ we have
$$ \Xi(\Lambda,n,e^{\frac{-1}{2\sigma^2}})\neq o\left((2\sigma^2)^{\frac{n-5}{4}+}\Gamma\left(\frac{n-1}{4},\frac{t_0}{2\sigma^2}\right)\right).
$$

Proposition \ref{errorWR} shows that heuristically, the error term in Theorem \ref{thm:theta_approx} is achieves the conjectured bound over the sets $\mathcal{WR}_2$ and $\mathcal{WR}_3$. 

In the next section, we analyse the accuracy of our approximation for some well-rounded lattices, \textit{i.e.}, $\mathbb{Z}^2$, $D_3$, $D_4$, $E_8$ and $K_{12}$. In fact, most of the known lattices are well-rounded. To name just a few, we mention the local maxima of the sphere packing problem, the lattices $D_n$, $A_n$ and the orthogonal lattice $\mathbb{Z}^n$.

\subsection{Empirical Study and Discussion}
\label{subsec:approx_accuracy}

We first depict the accuracy of the approximation $\Theta_{\Lambda}^{\mf{A}}(q)$ for some of the well-known lattices tabulated in Table~\ref{tab:lattices}. We choose $q = e^{-\frac{1}{2\sigma^2}}$ and interpret $\Theta_{\Lambda}^{\mf{A}}(e^{-\frac{1}{2\sigma^2}})$ as a function in the variable $\sigma^2$. The choice of this specific indeterminate $q$ will be clarified in the subsequent sections of this article. 
\begin{figure}[h!]
	\includegraphics[width=.48\textwidth]{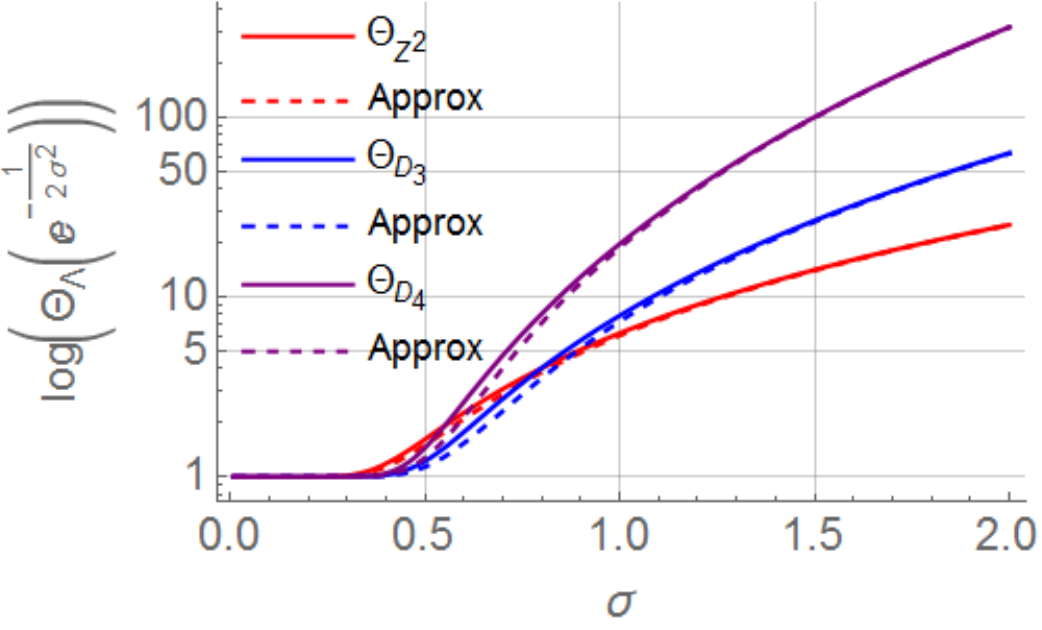}\hfill
	\includegraphics[width=.48\textwidth]{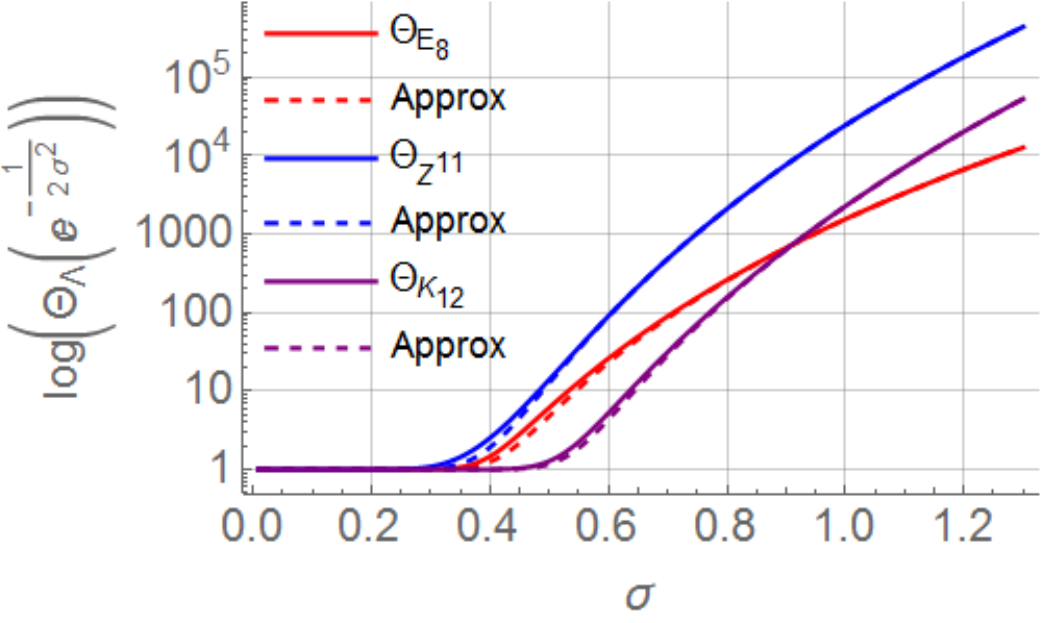}
\caption{Comparison of the theta function of various lattices and the derived approximation. The lhs picture depicts the theta series of low-dimensional, the rhs picture higher-dimensional lattices as a function of $\sigma^2$.}
\label{fig:theta_compare}
\end{figure}

From Figure~\ref{fig:theta_compare} it is visible that the approximation is accurate in the considered cases, even as the dimension increases. A naive way of approximating the theta series is by simply considering the first term in the power series expression, that is, $\Theta_{\Lambda}(q) \approx 1 + \kappa(\Lambda)q^{\lambda_1}$. In Figure~\ref{fig:leech_sum} below, we compare the derived approximation $\Theta_{\Lambda}^{\mf{A}}(q)$ with this truncated sum on the Leech lattice $\Lambda_{24}$. While our approximation accurately approximates the theta series $\Theta_{\Lambda_{24}}(q)$, the truncated sum very quickly diverges from the actual function, as is to be expected. 
\begin{figure}[h!]
\centering
	\includegraphics[width=.45\linewidth]{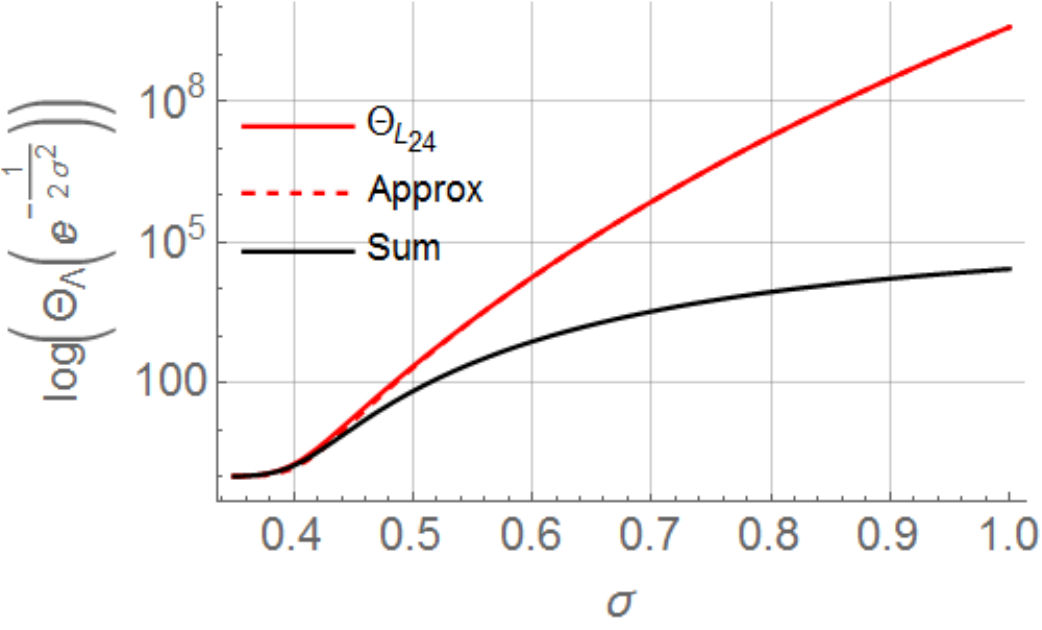}
	\caption{Comparison of $\Theta_{\Lambda}^{\mf{A}}(q)$ and a truncated sum $1+\kappa(\Lambda)q^{\lambda_1}$ of the Leech lattice $\Lambda = L_{24}$.}
	\label{fig:leech_sum}
\end{figure}

\begin{remark}
The error term in the expression from Theorem~\ref{thm:theta_approx} arises from the estimation of lattice points in an $n$-sphere, \emph{i.e.}, the estimation of $\Sigma_{\Lambda}(r)$. In its full generality, this is a hard problem. For instance, the original proof of Lemma~\ref{lem:lipschitz} in \cite{lang} is not constructive, and does not offer any insight into the involved constant. Accurately counting lattice points in more general domains is a topic of the utmost interest in lattice theory. In \cite{fukshansky}, an upper bound on the quantity $|\Lambda \cap P|$, where $\Lambda \subset \R^n$ is a full lattice, and $P \subset \R^n$ an arbitrary polytope of dimension $n' \le n$, is given. Further, \cite{widmer} gives an upper bound on $\left|\Lambda \cap S\right|$, where $S \subset \R^n$ is a bounded domain, of general \emph{narrow class} $s \ge 1$. Both mentioned results are however so general, that the upper bounds are not tight, even for low-dimensional, well-conditioned lattices.  
\end{remark}

\subsection{The Flatness Factor}
\label{subsec:flatnessfactor}

Having introduced the theta series $\Theta_{\Lambda}(q)$ of a lattice, we now define a related quantity -- the flatness factor $\varepsilon_{\Lambda}(q)$ of $\Lambda$. Consider the usual $n$-dimensional zero-mean Gaussian PDF with variance $\sigma^2$, given by
\begin{align}
	f(\mb{t},\sigma^2) = \frac{1}{(\sqrt{2\pi\sigma^2})^n} e^{-\frac{||\mb{t}||^2}{2\sigma^2}}.
\end{align}

We are interested in the case where the variable $\mb{t}$ ranges over points over a (possibly shifted) full lattice $\Lambda$, yielding for $\mb{y} \in \R^n$ the sum of Gaussian functions 
\begin{align}
	f(\Lambda+\mb{y}, \sigma^2) := \sum\limits_{\mb{x} \in \Lambda}{f(\mb{x}+\mb{y},\sigma^2)}. 
\end{align}
As a function of $\mb{y}$, $f(\Lambda+\mb{y},\sigma^2)$ is a $\Lambda$-periodic function, and defines a PDF on the basic Voronoi cell $\mc{V}(\Lambda)$ of $\Lambda$, which we refer to as the \emph{lattice Gaussian PDF}. For the centered sum $f(\Lambda,\sigma^2)$, we have the useful identity 
\begin{align}
	f(\Lambda,\sigma^2) &= \sum\limits_{\mb{x} \in \Lambda}{f(\mb{x},\sigma^2)} = \frac{1}{(\sqrt{2\pi\sigma^2})^n}\sum\limits_{\mb{x} \in \Lambda}{e^{-\frac{||\mb{x}||^2}{2\sigma^2}}} \\ 
	&= \frac{1}{(\sqrt{2\pi\sigma^2})^n}\Theta_{\Lambda}\left(e^{-\frac{1}{2\sigma^2}}\right).
\end{align}

Introduced in \cite{ling} as an information theoretic tool in the context of fading wiretap channels, the \emph{flatness factor} is a quantity which measures the deviation of the lattice Gaussian PDF from the uniform distribution on the Voronoi cell $\mc{V}(\Lambda)$. Formally, it can be defined as follows.

\begin{definition}\label{theta-flatness}
	Let $\Lambda \subset \R^n$ be a full lattice, and for $\mb{y} \in \R^n$, let $f(\Lambda+\mb{y},\sigma^2)$ denote the lattice Gaussian PDF of the lattice $\Lambda+\mb{y}$. The \emph{flatness factor} of $\Lambda$ is defined as 
	\begin{align}
		\varepsilon_{\Lambda}(\sigma^2) := \max\limits_{\mb{y} \in \R^n}\left|\frac{f(\Lambda+\mb{y},\sigma^2)}{1/\vol{\Lambda}}-1\right|.
	\end{align}	
\end{definition}

It is easy to show (see \cite{micciancio}) that the maximum of $f(\Lambda+\mb{y},\sigma^2)$ is achieved for $\mb{y} \in \Lambda$. Hence, an explicit representation of $\varepsilon_{\Lambda}(\sigma^2)$ is immediate,
	\begin{align}
		\varepsilon_{\Lambda}(\sigma^2) = \frac{\vol{\Lambda}}{(\sqrt{2\pi\sigma^2})^n}\Theta_{\Lambda}\left(e^{-\frac{1}{2\sigma^2}}\right)-1.
	\end{align}

If we define the \emph{volume-to-noise ratio}\footnote{The VNR is usually defined without the term $2\pi$ in the denominator. Here, the definition is chosen to agree with \cite{belfiore}.} (VNR) $\gamma_{\Lambda}(\sigma^2) := \frac{\vol{\Lambda}^{\frac{2}{n}}}{2\pi\sigma^2}$, then we can equivalently express the flatness factor as \cite{belfiore}
\begin{align}
\label{eqn:ff_theta} 
	\varepsilon_{\Lambda}(\sigma^2) = \gamma_{\Lambda}(\sigma^2)^{\frac{n}{2}}\Theta_{\Lambda}\left(e^{-\frac{1}{2\sigma^2}}\right)-1.
\end{align}

From the definition of the flatness factor, it is clear that a small flatness factor implies a more uniform distribution.

\section{Theta Series and the Compute-and-Forward Relaying Strategy}
\label{sec:caf}

In this section, we consider a protocol known as \emph{compute-and-forward} relaying \cite{nazer}. This protocol was proposed to harness the interference in an advantageous way. Namely, in wireless communications, a single transmission is heard by all near-enough receivers. Similarly, a receiver will hear all signals transmitted in the vicinity, not only the signals intended to them. This is referred to as interference, which degrades the reception quality. Several protocols have been proposed in the literature to remedy this degradation. The most prominent ones are \emph{decode-and-forward, compress-and-forward}, and \emph{amplify-and-forward}. For more details on these protocols, we refer to \cite{nazer} and references therein. The compute-and-forward strategy simultaneously aims at protection against noise and exploitation of  interference  for  cooperative gains. In contrast to compress-and-forward and amplify-and-forward, which can be seen as converting a network into a set of \emph{noisy} linear equations, the compute-and-forward converts it in to a set of \emph{reliable} linear equations. The compute-and-forward protocol has been shown to be superior at moderate signal quality levels, where both noise and interference play a non-negligible role.

Analyzing the \emph{maximum-likelihood} (ML) metric in the compute-and-forward context, we show how the flatness factor of a certain lattice enters the picture \cite{belfiore}, and relate this random lattice to the code lattice at the transmitter. We then utilize the derived theta series approximation to analyze the performance of various lattices with respect to an explicit design criterion. Namely, we show that in order to maximize the flatness factor of the random lattice, it suffices to maximize that of the code lattice.

In this article we will only consider real valued channels, which are also studied in the original article \cite{nazer} and additionally assumed in \cite{belfiore, belfiore2}. We refer to \cite{nazer} for the complex alternative\footnote{As shown in \cite{nazer}, a complex channel output can be treated as two separate real equations.}. Assume that $K > 1$ transmitters want to communicate to a single destination, aided by $M$ intermediate relays which, operating under the original compute-and-forward strategy attempt to decode an integer linear combination of the transmitted messages. We assume that each user, relay, and destination is equipped with one antenna only. The model is depicted in Figure~\ref{fig:caf_system_total}.
\begin{figure}[!h]
\centering
\begin{tikzpicture}
	\node[relay] (r1) {Relay 1};
	\node[below=0.2 of r1] (vert2) {$\vdots$};
	\node[relay,below=0.3 of vert2] (rm) {Relay M};
	\node[left=4 of vert2] (vert1) {$\vdots$};
	\node[user, above=0.2 of vert1] (t1) {T$_1$}
		edge[pil] node[pos=0.3,above,xshift=0.1cm]{\scriptsize $h_{11}$} (r1.west)
		edge[pil] node[pos=0.3,above,xshift=0.1cm]{\scriptsize $h_{M1}$} (rm.west);
	\node[user, below=0.3 of vert1] (tl) {T$_K$}
		edge[pil] node[pos=0.3,below,xshift=0.1cm]{\scriptsize $h_{1K}$} (r1.west)
		edge[pil] node[pos=0.3,below,xshift=0.1cm]{\scriptsize $h_{MK}$} (rm.west);
	\node[user, right=3 of vert2] (dest) {Dest.}
		edge[pil_rev] (r1.east)
		edge[pil_rev] (rm.east);
	\node[above=0.4 of t1, xshift=2cm] (first) {First Hop};
	\node[right=2.5 of first] (second) {Second Hop};
\end{tikzpicture}
\caption{System model with $K > 1$ transmitters and $M > K$ relays connected to a destination.}
\label{fig:caf_system_total}
\end{figure}

The first hop from the transmitters to the relays is modeled as a Gaussian fading channel, while it is usually assumed that the relays are connected to a destination with error-free bit pipes with unlimited capacities. We will henceforth focus on the first hop. 

The sources want to communicate messages $\tb{w}_k \in \F_p^s$ to the destination, which are encoded into $n$-dimensional codewords $\mb{x}_k \in \Lambda_{k,F} \subset \R^n$ before transmission. Here, $\Lambda_{k,F}$ is a full rank lattice employed by transmitter $k$, acting as the fine lattice in the nested code $\mc{C}_k(\Lambda_C,\Lambda_{k,F}) = \left\{\left[\mb{x}\right] \in \Lambda_{k,F} \left(\bmod\ \Lambda_C \right) \mid \mb{x} \in \Lambda_{k,F} \right\}$. We impose the usual symmetric power constraint $\frac{1}{n} E\left[||\mb{x}_k||^2\right] \leq P$ for all $k$. We can interpret the coarse lattice $\Lambda_C$ as the structure imposing the power constraint on the codewords, which allows us to ignore the specific definition of $\Lambda_C$ in the remainder of this section. The observed signal at relay $m$ can be expressed as
\begin{align}
	\mb{y}_m = \sum_{k=1}^K h_{mk}\mb{x}_k + \mb{n}_m,
\end{align}
where $\mb{n}_m$ is additive white Gaussian noise with variance $\sigma^2$, and the channel coefficients $h_{mk}$ are i.i.d. with normalized unit variance $\sigma_h^2 = 1$. Here, the \emph{signal-to-noise ratio} ($\snr$) is $\rho = P/\sigma^2$. The compute-and-forward strategy involves transforming the above random linear combination to a deterministic one and treating the rest of the equation as noise. We will describe this process next, leading to Eq. \eqref{eqn:caf_channel}. 

Channel state information is only available at the relays; more specifically, each relay only knows the channel $\mb{h}_m^t = (h_{m1}, \ldots, h_{mK})$ to itself. Operating under the original compute-and-forward protocol, a fixed relay selects a scalar $\alpha_m \in \R$, as well as an integer vector $\mb{a}_m^t = (a_{m1}, \ldots, a_{mK})$, and attempts to decode a linear combination of the received codewords with coefficients $a_{mk}$. For $\tilde{\mb{y}}_m := \alpha_m\mb{y}_m$, the channel output is modified to read
\begin{align}
\label{eqn:caf_channel}
	\tilde{\mb{y}}_m = \sum\limits_{k=1}^K {a_{mk}\mb{x}_k}  + \sum\limits_{k=1}^K{(\alpha_m h_{mk}-a_{mk})\mb{x}_k} + \alpha_m \mb{n}_m.
\end{align}
The so-called \emph{effective noise} 
\begin{align}
	\mb{n}_{\mathrm{eff}} := \sum_{k=1}^K{(\alpha_m h_{mk}-a_{mk})\mb{x}_k} + \alpha_m \mb{n}_m
\end{align} 
is no longer Gaussian. 

Upon observing the faded superposition of transmitted codewords, each relay proceeds in the same fashion in order to decode a linear combination. We can hence focus on a single relay and, for ease of notation, drop the subscript $m$ henceforth. The focused system model, now resembling a $K$-user multiple-access channel, is illustrated in Figure~\ref{fig:caf_system_onerelay}.
\begin{figure}[!h]
\centering
\begin{tikzpicture}
	\node[relay] (r) {Relay};
	\node[left=3 of r] (vert1) {$\vdots$};
	\node[user, above=0.2 of vert1] (t1) {T$_1$}
		edge[pil] node[pos=0.3,above]{\scriptsize $h_{1}$} (r.west);
	\node[user, below=0.3 of vert1] (tl) {T$_K$}
		edge[pil] node[pos=0.3,below]{\scriptsize $h_{K}$} (r.west);
	\node[right=.3 of r] (lincomb) {$\leadsto\ \mb{y} = \sum\limits_{k=1}^{K}{h_k \mb{x}_k} + \mb{n}$};
	\node[left=1 of t1] (x1) {$\underset{\in \F_q^k}{\omega_1} \overset{\mc{E}_1}{\longmapsto} \underset{\in \Lambda_{1,F}/\Lambda_C}{\mb{x}_1}$}
		edge[pil] (t1.west);
	\node[left=1 of tl] (xl) {$\underset{\in \F_q^k}{\omega_K} \overset{\mc{E}_K}{\longmapsto} \underset{\in \Lambda_{K,F}/\Lambda_C}{\mb{x}_K}$}
		edge[pil] (tl.west);
	\node[above=0.05 of r] (z) {\quad\ ${\Big\downarrow} \oplus \mb{n}$};
\end{tikzpicture}
\caption{System model focused on the first hop, with $K > 1$ transmitters and a fixed relay.}
\label{fig:caf_system_onerelay}
\end{figure}
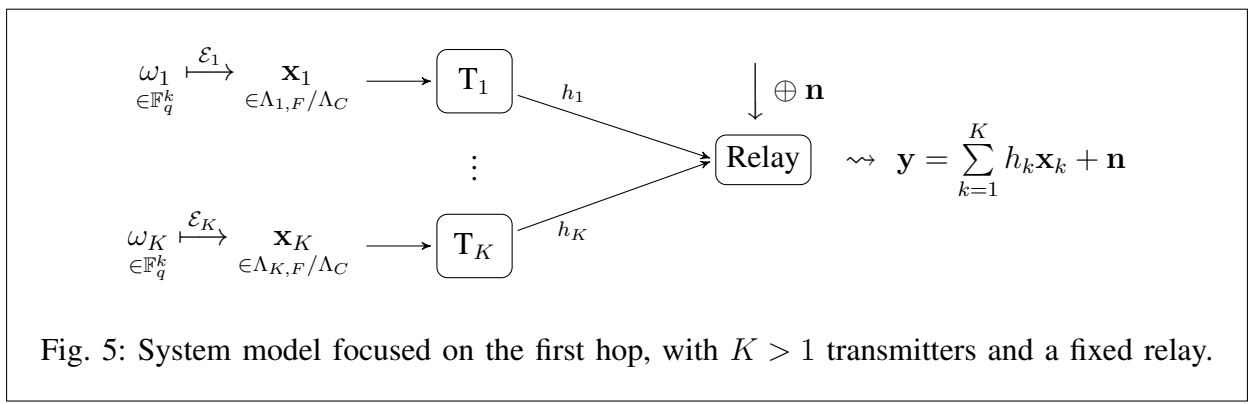

An important performance metric of the compute-and-forward protocol is the so-called \emph{computation rate}. If $\mc{R}_{M}(k) = \frac{s}{n}\log p$ denotes the \emph{message rate} at transmitter $k$, then the relay is able to decode a linear combination involving the codewords whose corresponding message rates are smaller than the computation rate $\mc{R}_C(\mb{h},\mb{a})$ achieved by the relay, that is, which satisfy $\mc{R}_M \le \mc{R}_C$. The main results on the computation rate are shortly summarized below.

\begin{lemma}\cite{nazer,osmane}
	For a relay employing the original compute-and-forward strategy under a real-valued channel model, the computation rate region is maximized by choosing $\alpha$ as the minimum mean square error (MMSE) estimate
	\begin{align}
		\alpha_{\mathrm{MMSE}} = \frac{\rho\mb{h}^t \mb{a}}{1+\rho||\mb{h}||^2},
	\end{align}
resulting in the computation rate region
	\begin{align}
		\mc{R}_C(\mb{h},\mb{a}) = \frac{1}{2}\log^+\left(\left(||\mb{a}||^2-\frac{\rho(\mb{h}^t \mb{a})^2}{1+\rho||\mb{h}||^2}\right)^{-1}\right).
	\end{align}
Moreover, the optimal coefficient vector is the solution to the minimization problem 
	\begin{align}
	\label{eqn:svp}
		\mb{a}_{\mathrm{opt}} = \argmin\limits_{\mb{a}\in\Z^K\backslash\left\{\mb{0}\right\}} \mb{a}^t G \mb{a},
	\end{align}
	where $G = I_K - \frac{\rho\mb{h}\mb{h}^t}{1+\rho||\mb{h}||^2}$. Hence, $\mb{a}_{\mathrm{opt}}$ corresponds to the coefficient vector of the shortest vector in the lattice with Gram matrix $G$.
\end{lemma}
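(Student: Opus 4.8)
The plan is to turn the claim into a one-variable convex optimization once the achievable rate is expressed through the power of the effective noise. From the scaled channel output \eqref{eqn:caf_channel} the relay sees $\alpha\mb{y} = \sum_{k=1}^{K} a_k\mb{x}_k + \mb{n}_{\mathrm{eff}}$ with $\mb{n}_{\mathrm{eff}} = \sum_{k=1}^{K}(\alpha h_k - a_k)\mb{x}_k + \alpha\mb{n}$. Modeling the codewords as independent, zero-mean, of per-dimension power $P$, and independent of the Gaussian noise of variance $\sigma^2$, the per-dimension variance of $\mb{n}_{\mathrm{eff}}$ is
\begin{align}
	\sigma_{\mathrm{eff}}^2(\alpha,\mb{a}) = P\,||\alpha\mb{h}-\mb{a}||^2 + \alpha^2\sigma^2.
\end{align}
The nested-lattice random-coding argument of \cite{nazer} (with its matching converse) shows that every computation rate below $\tfrac12\log^+\!\big(P/\sigma_{\mathrm{eff}}^2(\alpha,\mb{a})\big)$ is achievable, so maximizing $\mc{R}_C(\mb{h},\mb{a})$ over the free scalar $\alpha\in\R$ is exactly the problem of minimizing $\sigma_{\mathrm{eff}}^2(\alpha,\mb{a})$ in $\alpha$.

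First I would expand $\sigma_{\mathrm{eff}}^2(\alpha,\mb{a}) = (P||\mb{h}||^2+\sigma^2)\alpha^2 - 2P(\mb{h}^t\mb{a})\alpha + P||\mb{a}||^2$, a strictly convex quadratic in $\alpha$ since its leading coefficient $P||\mb{h}||^2+\sigma^2$ is positive. Setting the derivative to zero gives the unique minimizer $\alpha^\star = \frac{P\,\mb{h}^t\mb{a}}{P||\mb{h}||^2+\sigma^2}$, and dividing numerator and denominator by $\sigma^2$ together with $\rho = P/\sigma^2$ identifies it with $\alpha_{\mathrm{MMSE}}$. Substituting $\alpha^\star$ back into the quadratic gives the minimal value $\sigma_{\mathrm{eff}}^2(\alpha_{\mathrm{MMSE}},\mb{a}) = P||\mb{a}||^2 - \frac{P^2(\mb{h}^t\mb{a})^2}{P||\mb{h}||^2+\sigma^2}$, hence $P/\sigma_{\mathrm{eff}}^2(\alpha_{\mathrm{MMSE}},\mb{a}) = \big(||\mb{a}||^2 - \frac{\rho(\mb{h}^t\mb{a})^2}{1+\rho||\mb{h}||^2}\big)^{-1}$, which is the announced rate after inserting the $\log^+$.

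For the last assertion I would recognize the bracketed term as a quadratic form: $||\mb{a}||^2 - \frac{\rho(\mb{h}^t\mb{a})^2}{1+\rho||\mb{h}||^2} = \mb{a}^t\big(I_K - \frac{\rho\,\mb{h}\mb{h}^t}{1+\rho||\mb{h}||^2}\big)\mb{a} = \mb{a}^t G\mb{a}$. Since $\mc{R}_C(\mb{h},\mb{a})$ is a decreasing function of $\mb{a}^t G\mb{a}$, the optimal integer vector is $\mb{a}_{\mathrm{opt}} = \argmin_{\mb{a}\in\Z^K\setminus\{\mb{0}\}}\mb{a}^t G\mb{a}$. A one-line spectral check shows $G$ is positive definite: it is a rank-one perturbation of $I_K$ with eigenvalue $\frac{1}{1+\rho||\mb{h}||^2}>0$ along $\mb{h}$ and eigenvalue $1$ on the orthogonal complement $\mb{h}^{\perp}$. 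Writing $G = N^tN$ with $N$ invertible, $\mb{a}^t G\mb{a} = ||N\mb{a}||^2$, so $\mb{a}_{\mathrm{opt}}$ is precisely the coordinate vector of a shortest nonzero vector of the lattice generated by $N$.

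The only step that is not routine is the rate identity $\mc{R}_C = \tfrac12\log^+\!\big(P/\sigma_{\mathrm{eff}}^2\big)$ itself, which rests on the nested-lattice achievability and converse of \cite{nazer} and on the simplification that the transmitted codewords may be treated as an i.i.d.\ power-$P$ sequence (made rigorous there via dithering), so that $\sigma_{\mathrm{eff}}^2$ takes the stated closed form. I would invoke that result as a black box; everything downstream --- the convex-quadratic minimization in $\alpha$, the algebraic simplification, and the positive-definiteness of $G$ --- is elementary.
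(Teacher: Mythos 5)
The paper gives no proof of this lemma at all --- it is stated as a cited result from \cite{nazer,osmane} --- so there is nothing internal to compare against. Your derivation is correct and is precisely the standard argument from those references: the completion-of-the-square minimization of the effective noise variance $\sigma_{\mathrm{eff}}^2(\alpha,\mb{a}) = (P||\mb{h}||^2+\sigma^2)\alpha^2 - 2P(\mb{h}^t\mb{a})\alpha + P||\mb{a}||^2$ yields $\alpha_{\mathrm{MMSE}}$ and the stated rate, the identification of the residual as $\mb{a}^t G\mb{a}$ with $G$ positive definite (eigenvalue $\tfrac{1}{1+\rho||\mb{h}||^2}$ along $\mb{h}$, eigenvalue $1$ on $\mb{h}^{\perp}$) gives the shortest-vector interpretation, and you correctly isolate the one genuinely nontrivial ingredient --- that $\tfrac12\log^+(P/\sigma_{\mathrm{eff}}^2)$ is achievable --- as the nested-lattice coding theorem of \cite{nazer}, which is exactly what the citation is carrying.
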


\begin{remark}
	The lattice shortest vector problem is in general a computationally hard problem. However, it has been shown recently that in certain instances in the context of compute-and-forward, \emph{e.g.}, for solving \eqref{eqn:svp}, it can be solved in polynomial time \cite{sahraei}. 
	
	A low-complexity approach assuming no cooperation between the relays has also been proposed in \cite{barreal_caf}.
\end{remark}

\subsection{Decoding Linear Equations}
\label{subsec:decoding}

For each $k$, let $\mc{C}_{k} := \mc{C}_k(\Lambda_C,\Lambda_{k,F})$ denote the nested lattice code employed by transmitter $k$. Assume that the fine lattices, possibly after reordering the indexes, are nested, $\Lambda_{1,F} \supseteq \Lambda_{2,F} \supseteq \cdots \supseteq \Lambda_{K,F}$. Since the codebook is finite for each transmitter, the codewords can be assumed to be equiprobable in $\mc{C}_k$. 

A relay attempts to decode $$\mb{y} = \sum_{k=1}^{K}{h_k \mb{x}_k}+\mb{n}$$ to a lattice point $$\left[\lambda\right] = \sum_{k=1}^K{a_k \mb{x}_k}\ (\bmod\ \Lambda_C)$$ in two steps:
\begin{itemize}
	\item[i)] Scale the received signal by a scalar $\alpha$, compute an equation coefficient vector $\mb{a}^t = (a_{1}, \ldots, a_{K})$ by solving \eqref{eqn:svp}, and decode an estimate $\hat{\lambda}$ of
	\begin{align}
		\lambda = \sum\limits_{k=1}^K{a_k \mb{x}_k} \in \Lambda_F := \sum\limits_{k=1}^K{a_k \Lambda_{k,F}}.
	\end{align}
	
	\item[ii)] Apply the modulo-lattice operation to shift the received signal back into $\mc{V}(\Lambda_C)$,
	\begin{align}
		\left[\lambda\right] = \lambda\ (\bmod\ \Lambda_C).
	\end{align}
\end{itemize} 

The requirement $\Lambda_{1,F} \supseteq \Lambda_{2,F} \cdots \supseteq \Lambda_{K,F}$ guarantees\footnote{Note that nesting is not necessary, but sufficient; more generally, it suffices to fix a common superlattice for all transmitters. We adopt the nested assumption to be consistent with \cite{nazer}.} that
	\begin{align}
		\Lambda_F = \sum\limits_{k=1}^K{a_k \Lambda_{k,F}}
	\end{align}
is a lattice. The crucial step is the first one, estimating $\hat{\lambda} \in \Lambda_F$. Originally, a nearest neighbor decoder is used for this estimation. As this method is only optimal at high $\snr$, we employ ML decoding at the relay instead.

\subsection{The ML Decoding Metric}
\label{subsec:ml}

Let $\Lambda_F = \sum_{k=1}^{K}{a_k \Lambda_k}$ be the lattice defined above. By the imposed norm constraint on the codewords, the desired lattice point $\lambda = \sum_{k=1}^{K}{a_k\mb{x}_k}$ is contained in a finite subset $L_F \subset\Lambda_F$, which is determined by the norm restriction of the original codewords as well as the coefficient vector $\mb{a}$. Thus, a relay can restrict its search space to $L_F$. We make this more precise in the following straightforward proposition. 

\begin{proposition}
\label{prop:lambda_norm}
	For a fixed coefficient vector $\mb{a}^t = (a_1,\ldots,a_K)$, the lattice point $\lambda$ is contained in the set
	\begin{align}
		L_F = \left\{ \lambda \in \Lambda_{k_{\mathrm{min}},F} \left| ||\lambda|| \le \sum\limits_{k=1}^{K}{|a_k|}\max\limits_{\mb{x} \in \mc{C}_{k_{\mathrm{min}}}}{\left\{||\mb{x}||\right\}}\right.\right\},
	\end{align}
	where $k_{\mathrm{min}} := \argmin\limits_{1 \le k \le K}\left\{a_k \neq 0\right\}$.
\end{proposition}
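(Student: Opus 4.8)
The plan is to bound the norm of $\lambda = \sum_{k=1}^{K} a_k \mb{x}_k$ directly by the triangle inequality and then identify which lattice contains $\lambda$. First I would observe that, by the nesting assumption $\Lambda_{1,F} \supseteq \Lambda_{2,F} \supseteq \cdots \supseteq \Lambda_{K,F}$, each summand $a_k \mb{x}_k$ with $a_k \neq 0$ lies in $\Lambda_{k,F} \subseteq \Lambda_{k_{\mathrm{min}},F}$, where $k_{\mathrm{min}} = \argmin_{1 \le k \le K}\{a_k \neq 0\}$ is the smallest index with a nonzero coefficient (so $\Lambda_{k_{\mathrm{min}},F}$ is the largest of the fine lattices actually appearing in the sum). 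Since $\Lambda_{k_{\mathrm{min}},F}$ is a group and every nonzero term lies in it, the whole sum $\lambda$ lies in $\Lambda_{k_{\mathrm{min}},F}$; terms with $a_k = 0$ contribute nothing.

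Next I would control the norm. By the triangle inequality and homogeneity of the norm,
\begin{align}
	||\lambda|| = \left|\left| \sum_{k=1}^{K} a_k \mb{x}_k \right|\right| \le \sum_{k=1}^{K} |a_k|\, ||\mb{x}_k||.
\end{align}
Here $\mb{x}_k$ ranges over the codebook $\mc{C}_k = \mc{C}_k(\Lambda_C, \Lambda_{k,F})$ of transmitter $k$, so $||\mb{x}_k|| \le \max_{\mb{x} \in \mc{C}_k}\{||\mb{x}||\}$. To get the stated (slightly coarser) bound, I would note that under the nesting and the common coarse lattice $\Lambda_C$, one can replace $\mc{C}_k$ by $\mc{C}_{k_{\mathrm{min}}}$ in the maximum: the point is to express the bound uniformly in terms of a single codebook. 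This is where I expect the only real subtlety to lie — making precise why $\max_{\mb{x}\in\mc{C}_k}||\mb{x}||$ can be upper bounded by $\max_{\mb{x}\in\mc{C}_{k_{\mathrm{min}}}}||\mb{x}||$ for all $k$, or alternatively simply leaving the bound in terms of each $\mc{C}_k$; given the imposed symmetric power constraint $\frac{1}{n}E[||\mb{x}_k||^2] \le P$ the codebooks are comparable in size, and with $\mc{C}_k = \Lambda_{k,F}\cap\mc{V}(\Lambda_C)$ all codewords lie in the same Voronoi cell $\mc{V}(\Lambda_C)$, which bounds $||\mb{x}_k||$ by the covering radius of $\Lambda_C$ independently of $k$. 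Either route closes the gap.

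Combining the two observations: $\lambda \in \Lambda_{k_{\mathrm{min}},F}$ and $||\lambda|| \le \sum_{k=1}^{K} |a_k| \max_{\mb{x}\in\mc{C}_{k_{\mathrm{min}}}}\{||\mb{x}||\}$, which is exactly membership in the claimed set $L_F$. The proof is essentially a one-line triangle-inequality estimate plus a group-theoretic containment argument; the main obstacle, such as it is, is purely bookkeeping — tracking the role of $k_{\mathrm{min}}$ and justifying the uniform replacement of the per-transmitter codebooks by the single codebook $\mc{C}_{k_{\mathrm{min}}}$ in the norm bound. No deep input (not even the theta-series machinery of the earlier sections) is needed here; the proposition is a preparatory reduction of the ML search space to a finite set.
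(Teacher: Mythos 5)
Your overall route is the same as the paper's: the triangle inequality plus the nesting of the fine lattices. However, the one step you explicitly flag as ``the only real subtlety'' --- replacing $\max_{\mb{x}\in\mc{C}_k}||\mb{x}||$ by $\max_{\mb{x}\in\mc{C}_{k_{\mathrm{min}}}}||\mb{x}||$ --- is precisely the step you leave unfinished, and of the two routes you offer to close it, only the first actually works. The correct observation (and the one the paper uses) is that the codebooks themselves are nested: since $\mc{C}_k = \Lambda_{k,F}\cap\mc{V}(\Lambda_C)$ and $\Lambda_{k,F}\subseteq\Lambda_{k_{\mathrm{min}},F}$ for all $k\ge k_{\mathrm{min}}$, one has $\mc{C}_k\subseteq\mc{C}_{k_{\mathrm{min}}}$, so every codeword $\mb{x}_k$ appearing with $a_k\neq 0$ is itself an element of $\mc{C}_{k_{\mathrm{min}}}$ and hence satisfies $||\mb{x}_k||\le\max_{\mb{x}\in\mc{C}_{k_{\mathrm{min}}}}\left\{||\mb{x}||\right\}$. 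This is a one-line containment; it needs neither the power constraint nor any comparison of codebook sizes.

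Your alternative fix --- bounding each $||\mb{x}_k||$ by the covering radius of $\Lambda_C$ because all codewords lie in $\mc{V}(\Lambda_C)$ --- does not close the gap: it yields $||\lambda||\le\sum_{k}|a_k|\,\rho_{\mathrm{cov}}(\Lambda_C)$, and since $\mc{C}_{k_{\mathrm{min}}}\subseteq\mc{V}(\Lambda_C)$ the quantity $\max_{\mb{x}\in\mc{C}_{k_{\mathrm{min}}}}||\mb{x}||$ is in general strictly smaller than that covering radius, so the covering-radius estimate is weaker than the membership in $L_F$ that the proposition asserts. Similarly, ``the codebooks are comparable in size'' under the symmetric power constraint is not an inequality and proves nothing. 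Everything else in your write-up --- the group-theoretic containment $\lambda\in\Lambda_{k_{\mathrm{min}},F}$ (which the paper leaves implicit), the vanishing of the terms with $a_k=0$ for $k<k_{\mathrm{min}}$, and the triangle-inequality estimate --- matches the paper's argument.
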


\begin{proof}
	By definition, $k_{\mathrm{min}}$ is the index of the first non-zero entry in the coefficient vector $\mb{a}$, hence, the index of the first codeword to be included in the targeted linear combination. 
	As $\Lambda_{1,F} \supseteq \cdots \supseteq \Lambda_{K,F}$, we have that $\mb{x}_{k} \in \mc{C}_{k_{\mathrm{min}}}$ for all $k \ge k_{\mathrm{min}}$. Consequently, each of the codewords involved in the linear combination satisfies $||\mb{x}_k|| \le \max\limits_{\mb{x} \in \mc{C}_{k_{\mathrm{min}}}}{\left\{||\mb{x}||\right\}}$. We conclude
\begin{align}
	||\lambda|| &= \left|\left|\sum\limits_{k=1}^{K}{a_k\mb{x}_k}\right|\right| \le \sum\limits_{k=1}^{K}{|a_k|||\mb{x}_k||} \\
	&\le \sum\limits_{k=1}^{K}{|a_k|}\max\limits_{\mb{x}\in \mc{C}_{k_{\mathrm{min}}}}{\left\{||\mb{x}||\right\}}.
\end{align} 
\end{proof}

In this context, ML decoding amounts to maximizing the conditional probability 
\begin{align}
	\hat{\lambda} &= \argmax\limits_{\lambda \in L_F}{\mathbb{P}\left[\alpha\mb{y}\mid\lambda\right]} \\
	&= \argmax\limits_{\lambda \in L_F}{\sum\limits_{\substack{(\mb{x}_i)_i \in (\mc{C}_i)_i \\ \sum\limits_{k=1}^K{a_k\mb{x}_k} = \lambda}}}{\mathbb{P}\left[\alpha\mb{y}|(\mb{x}_1,\ldots,\mb{x}_K)\right]\mathbb{P}\left[(\mb{x}_1,\ldots,\mb{x}_K)\right]}
\end{align}

The former factor in the above expression behaves as 
\begin{align}
	\mathbb{P}\left[\alpha\mb{y}|(\mb{x}_1,\ldots,\mb{x}_K)\right] \propto \exp\left\{-\frac{1}{2\sigma^2}\left|\left|\mb{y}-\sum\limits_{k=1}^K{h_k\mb{x}_k}\right|\right|^2\right\}.
\end{align}
Note that this is independent of $\alpha$. We define the function
\begin{align}
\label{eqn:flat_function}
		\varphi(\lambda) := \sum\limits_{\substack{(\mb{x}_i)_i \in (\mc{C}_i)_i \\ \sum\limits_{k=1}^K{a_k\mb{x}_k} = \lambda}}{\exp\left\{-\frac{1}{2\sigma^2}\left|\left|\mb{y}-\sum\limits_{k=1}^K{h_k\mb{x}_k}\right|\right|^2\right\}},
\end{align}
and using the assumption that the codewords are equiprobable in $(\mc{C}_1,\ldots,\mc{C}_K)$, we conclude that the estimate $\hat{\lambda}$ of $\lambda$ can be computed by solving
\begin{align}\label{eqn:basicmax}
	\hat{\lambda} = \argmax\limits_{\lambda \in L_F}{\varphi(\lambda)}.
\end{align}

\begin{remark}
\label{rem:no_decod_algo}
We are not proposing a decoding algorithm, but rather elucidating the behavior of the decoding metric and deriving a code design criterion. It has been shown in \cite{belfiore2} that in dimension $n=1$ decoding based on Diophantine approximation is optimal, and in the same article it was conjectured to be optimal for $n\ge 2$ as well. However, how to treat simultaneous Diophantine equations is a mathematically open problem, which would be needed for implementing the Diophantine decoder in higher dimensions. While other optimal decoding methods may be derived, related work, such as \cite{mejri} have to date only proposed efficient decoding algorithms in arbitrary dimensions for Gaussian channels.
\end{remark}

Our goal in the remainder of this section is to study the behavior of $\varphi(\lambda)$. To analyze the decoding metric, we first need to express the function $\varphi(\lambda)$ in terms of the lattice point $\lambda$. This is achieved in the following proposition, whose proof we include as important quantities will be defined within. We follow a similar procedure described in \cite{belfiore,belfiore2}, but in more generality.
\begin{figure*}[!t]
\centering
	\includegraphics[width=0.45\textwidth]{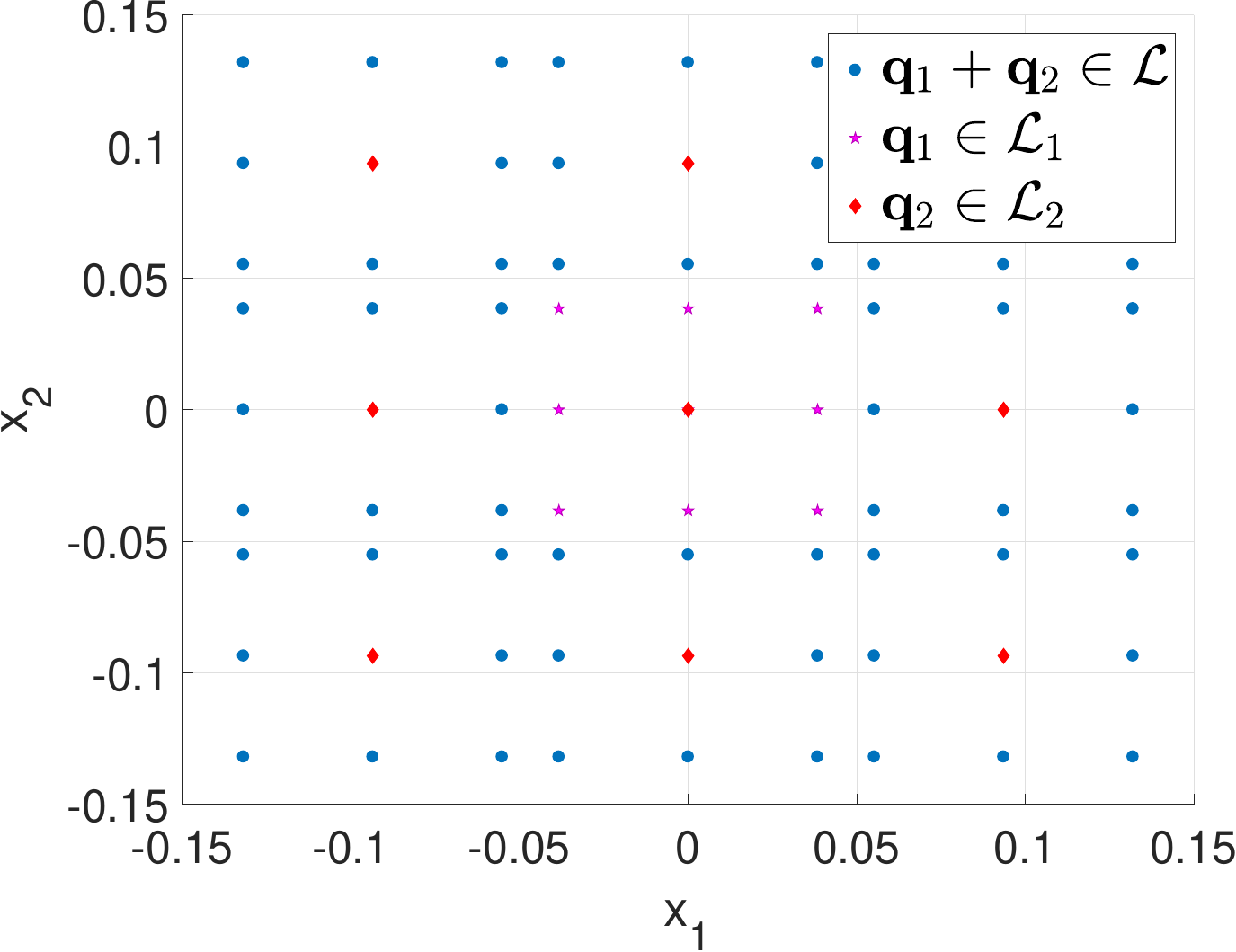}\hfill
	\includegraphics[width=0.45\textwidth]{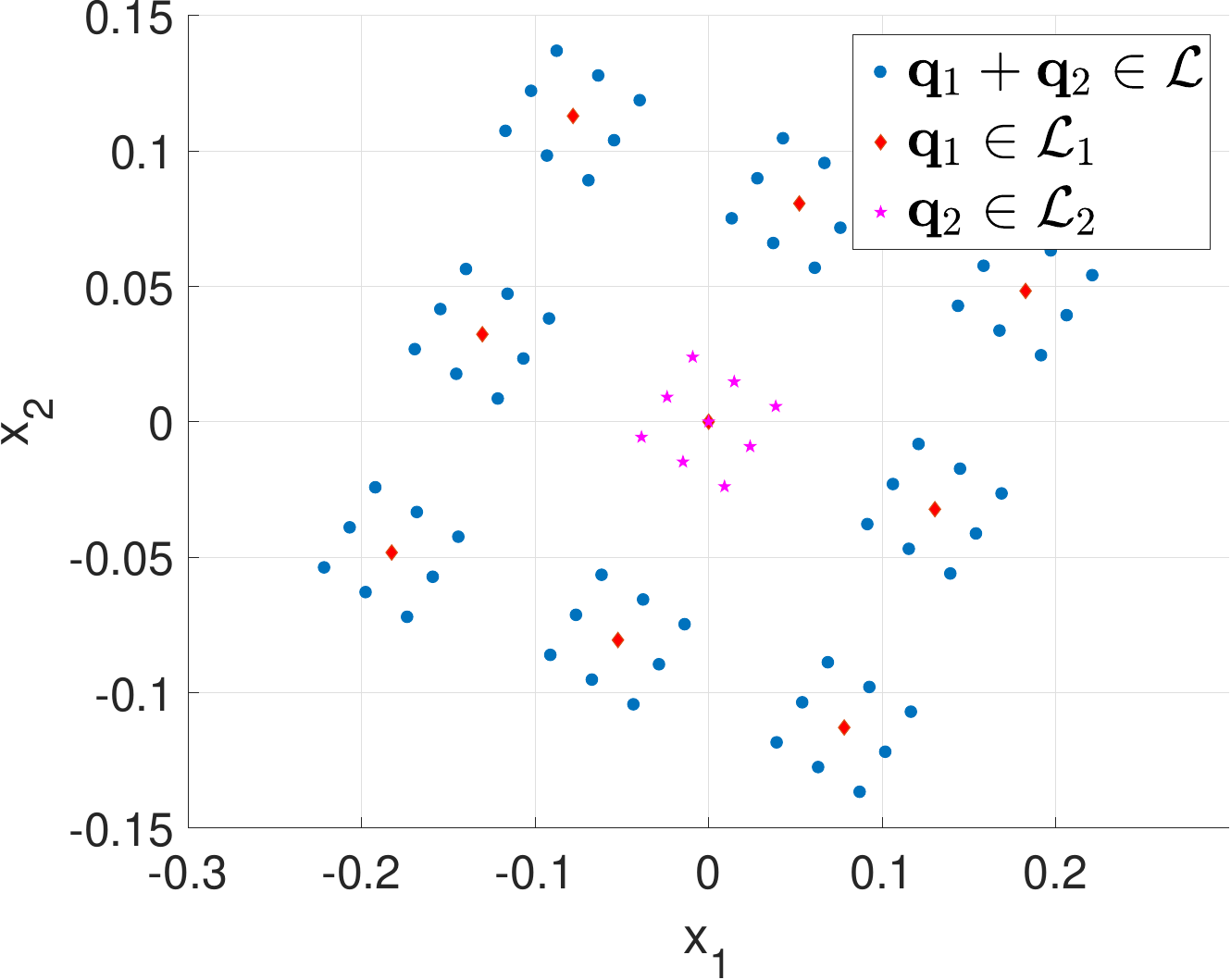}
	\caption{Sum of $(K-1) = 2$ lattices $\mc{L} = \mc{L}_1+\mc{L}_2$ in dimension $n = 2$. The depicted points correspond to coefficient vectors $\mb{z} \in [-p,p]^4$ with $p = 1$, and the density increases rapidly as $p$ grows. The employed code lattices are $\Z^2$ on the left, and $\Psi\left(\mathcal{O}_{\Q(\sqrt{5})}\right)$ on the right figure.}
	\label{fig:lattice_sum}
\end{figure*}
\begin{proposition}
\label{prop:decoding_manipulation}
	Let $\varphi(\lambda)$ be the decoding metric defined in \eqref{eqn:flat_function}. Then, $\varphi(\lambda)$ can be expressed in terms of the lattice point $\lambda$ as 
	\begin{align}
		\varphi(\lambda) = \sum\limits_{\mb{t} \in S \subset \Z^{nK}}{\exp\left\{-\frac{1}{2\sigma^2} \left|\left|\omega(\lambda) - M_{\mc{L}} \hat{U}\mb{t}\right|\right|^2\right\}},
	\end{align}
	where $S \subset \Z^{nK}$ is finite, $\omega(\lambda)$ is explicitly given in terms of $\lambda$, $\hat{U} \in \mat(n(K-1)\times nK,\R)$ and $M_{\mc{L}} \in \mat(n\times n(K-1),\R)$. 
\end{proposition}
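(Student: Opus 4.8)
The plan is to eliminate the constraint in the sum defining $\varphi(\lambda)$ by passing to the direct sum of the transmitters' lattices and doing linear algebra there. First I would fix a generator matrix $M_k$ of each fine lattice $\Lambda_{k,F}$, write $\mb{x}_k = M_k\mb{z}_k$ with $\mb{z}_k \in \Z^n$, and stack everything: put $\mb{z} = (\mb{z}_1^t,\ldots,\mb{z}_K^t)^t$, $M = \diag(M_1,\ldots,M_K)$, $A = [a_1 I_n\,|\,\cdots\,|\,a_K I_n]$, $H = [h_1 I_n\,|\,\cdots\,|\,h_K I_n]$, and $N := AM$. Since $\mb{z}_k \mapsto M_k\mb{z}_k$ is a bijection onto $\Lambda_{k,F}$, the sum defining $\varphi(\lambda)$ is reindexed as
\begin{align}
\varphi(\lambda) = \sum_{\substack{\mb{z}\in\Z^{nK},\ N\mb{z}=\lambda\\ M_k\mb{z}_k\in\mc{C}_k\ \forall k}} \exp\left\{-\frac{1}{2\sigma^2}\left\|\mb{y} - HM\mb{z}\right\|^2\right\},
\end{align}
because $\sum_k a_k\mb{x}_k = \lambda$ becomes $N\mb{z} = \lambda$ and $\sum_k h_k\mb{x}_k = HM\mb{z}$.

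Next I would describe the integer solution set of $N\mb{z} = \lambda$. The image $N\Z^{nK}$ equals $\Lambda_F = \sum_k a_k\Lambda_{k,F}$, which by the nesting hypothesis is a full-rank lattice in $\R^n$ (it contains the full-rank lattice $a_{k_{\min}}\Lambda_{k_{\min},F}$ since $\mb{a}\neq\mb{0}$). Hence $\rk(N) = n$ over $\R$, so $\ker_\R(N)$ has dimension $n(K-1)$; and the exact sequence $0 \to \ker_\Z(N) \to \Z^{nK} \to \Lambda_F \to 0$ of free abelian groups splits, so $\ker_\Z(N)$ is a full-rank lattice of rank $n(K-1)$ inside $\ker_\R(N)$. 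Let $V \in \mat(nK\times n(K-1),\R)$ have as columns a $\Z$-basis of $\ker_\Z(N)$; then $V$ has full column rank, $\colspan(V) = \ker_\R(N)$, and $\hat{U} := (V^tV)^{-1}V^t \in \mat(n(K-1)\times nK,\R)$ is a left inverse of $V$ with $V\hat{U}$ the orthogonal projection onto $\ker_\R(N)$. Taking $\mb{z}_\lambda := N^t(NN^t)^{-1}\lambda$, the real minimal-norm solution (which is linear in $\lambda$), every integer solution of $N\mb{z} = \lambda$ lies in $\mb{z}_\lambda + \ker_\R(N)$, and intersecting with the codebook conditions gives a finite set $S\subset\Z^{nK}$ — this is the set denoted by $\mb{t}$ in the statement.

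Then I would push this through the exponent. For any solution $\mb{z}$ of $N\mb{z} = \lambda$ we have $\mb{z} - \mb{z}_\lambda \in \ker_\R(N) = \colspan(V)$, on which $V\hat{U}$ acts as the identity, so $(I - V\hat{U})\mb{z} = (I - V\hat{U})\mb{z}_\lambda$. Writing $HM\mb{z} = HM(I - V\hat{U})\mb{z} + (HMV)(\hat{U}\mb{z})$ and setting $M_{\mc{L}} := HMV \in \mat(n\times n(K-1),\R)$,
\begin{align}
\mb{y} - HM\mb{z} \;=\; \bigl(\mb{y} - HM(I - V\hat{U})\,N^t(NN^t)^{-1}\lambda\bigr) - M_{\mc{L}}\hat{U}\mb{z} \;=:\; \omega(\lambda) - M_{\mc{L}}\hat{U}\mb{z},
\end{align}
where $\omega(\lambda)$ is an explicit affine function of $\lambda$. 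Substituting into the reindexed sum and renaming $\mb{z}$ as $\mb{t}$ gives the asserted expression $\varphi(\lambda) = \sum_{\mb{t}\in S}\exp\{-\frac{1}{2\sigma^2}\|\omega(\lambda) - M_{\mc{L}}\hat{U}\mb{t}\|^2\}$. As a side remark connecting with what follows: $\hat{U}$ carries $S$ onto a coset of $\hat{U}(\ker_\Z N) = \Z^{n(K-1)}$, so $M_{\mc{L}}\hat{U}\mb{t}$ ranges over a coset of $M_{\mc{L}}\Z^{n(K-1)}$, the object that will play the role of the random lattice in Section~\ref{sec:caf} (and which, as $M_{\mc{L}}$ is $n\times n(K-1)$, is genuinely a lattice only for $K = 2$, foreshadowing Lemma~\ref{lem:sum_of_lattices}).

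The step I expect to be the main obstacle is the rank bookkeeping for $\ker_\Z(N)$: one must ensure it has rank exactly $n(K-1)$ and that $V$ therefore admits a left inverse, so that the clean substitution in the last display is legitimate — this is precisely where the nesting assumption (making $\Lambda_F$ a lattice) and $\mb{a}\neq\mb{0}$ enter. A more computational but equivalent route avoids the projection $\hat{U}$: put $N$ in Hermite normal form $NW = [N'\,|\,0]$ with $W\in\GL(nK,\Z)$ and $N'$ invertible, take the last $n(K-1)$ rows of $W^{-1}$ for $\hat{U}$ and the last $n(K-1)$ columns of $HMW$ for $M_{\mc{L}}$; I would present whichever keeps $\omega(\lambda)$ most transparent. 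Everything else is routine: the bijective reindexing $\mb{x}_k\leftrightarrow\mb{z}_k$, finiteness of $S$ (a subset of a product of finite codebook preimages), and independence of the final $\varphi(\lambda)$ from the auxiliary choices of $V$, $\hat{U}$, and $\mb{z}_\lambda$.
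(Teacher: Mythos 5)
Your proposal is correct and follows essentially the same route as the paper's proof: both decompose the integer coefficient vector $\mb{z}$ into a component parametrized by the kernel of $\mb{z}\mapsto\sum_k a_k M_k\mb{z}_k$ and a component determined by $\lambda$, then absorb the latter into $\omega(\lambda)$ and collect the former into $M_{\mc{L}}\hat{U}\mb{t}$. The only difference is in how the decomposition is produced --- the paper posits an invertible $U\in\GL_{nK}(\R)$ with $MU=\begin{bmatrix} 0 & B\end{bmatrix}$ and takes $\hat{U}$ from the first $n(K-1)$ rows of $U^{-1}$, whereas you construct a kernel basis $V$ explicitly (verifying its rank $n(K-1)$, a point the paper leaves implicit) and use the Moore--Penrose left inverse --- and your $M_{\mc{L}}=HMV$ coincides with the paper's $\sum_k h_k M_k U_k$ up to the choice of kernel basis.
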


\begin{proof}
For each transmitter $1 \le k \le K$, let $M_k \in \mat(n,\R)$ denote the generator matrix of $\Lambda_{k,F}$, and write $\mb{x}_k = M_k\mb{z}_k$ for some $\mb{z}_k \in \Z^n$. We define the matrix $M := \left[\begin{smallmatrix} a_1 M_1 & \cdots & a_K M_K \end{smallmatrix}\right] \in \mat(n\times nK,\R)$, where $\mb{a}^t = (a_1,\ldots,a_K)$ is the solution to \eqref{eqn:svp}, and express $\lambda$ as
\begin{align}
	\lambda &= \sum\limits_{k=1}^{K}{a_k \mb{x}_k} = \sum\limits_{k=1}^K{a_k M_k \mb{z_k}} \\
	&= \begin{bmatrix} a_1 M_1 & \cdots & a_K M_K \end{bmatrix}\begin{bmatrix} \mb{z}_1 \\ \vdots \\ \mb{z}_K\end{bmatrix} = M\mb{z}.
\end{align}

Let now\footnote{We will later choose a specific decomposition. However, any decomposition of this form suffices for decoding purposes.} $U \in \gl_{nK}(\R)$ be an invertible matrix such that
\begin{align}
\label{eqn:M_dec}
	\tilde{B} := MU = \begin{bmatrix} 0_{n\times n(K-1)} & \mid & B \end{bmatrix},
\end{align} 
where $B \in \mat(n,\R)$ is invertible. We proceed by decomposing the matrix $U$ into blocks $V_i \in \mat(n,\R)$ and $U_i \in \mat(n\times n(K-1),\R)$, as
\begin{align}
	U = \begin{bmatrix} U_1 & V_1 \\ \vdots & \vdots \\ U_K & V_K \end{bmatrix}.
\end{align} 

Let now $\tilde{\mb{r}} := U^{-1}\mb{z} = (\mb{r}^t,\mb{r}_n^t)^t$, where $\mb{r}_n$ denotes the last $n$ components of $\tilde{\mb{r}}$, and write
\begin{align}
	\lambda = M\mb{z} = \tilde{B}U^{-1}\mb{z} = \tilde{B}\tilde{\mb{r}} = B\mb{r}_n.
\end{align}
Note that $\mb{r}_n = B^{-1}\lambda$. To describe $\mb{r}$, the first $n(K-1)$ components of $\tilde{\mb{r}}$, let $\hat{U}$ be composed of the first $n(K-1)$ rows of $U^{-1}$. Then $\mb{r} = \hat{U}\mb{z}$. We can now write
\begin{align}	
	\begin{bmatrix} \mb{z}_1 \\ \vdots \\ \mb{z}_K \end{bmatrix} =\begin{bmatrix} U_1 & V_1 \\ \vdots & \vdots \\ U_K & V_K \end{bmatrix} \begin{bmatrix} \mb{r} \\ B^{-1}\lambda \end{bmatrix} = \begin{bmatrix}U_1 \mb{r} +  V_1 B^{-1}\lambda \\ \vdots \\ U_K \mb{r} + V_K B^{-1}\lambda \end{bmatrix},
\end{align}
and consequently rewrite the codewords $\mb{x}_k$ in terms of $\lambda$ as
\begin{align}
	\mb{x}_k &= M_k \mb{z}_k = M_k U_k \mb{r} + M_k V_k B^{-1}\lambda \\
	&= M_k U_k (\hat{U}\mb{z}) + \mu_k(\lambda),
\end{align}
where $\mu_k(\lambda) := M_k V_k B^{-1}\lambda$. For $\nu_k := M_k U_k (\hat{U}\mb{z})$, the exponent of $\varphi(\lambda)$ now takes the form

\begin{align}
	&\left|\left|\mb{y} - \sum\limits_{k=1}^K{h_k \mb{x}_k}\right|\right|^2 = \left|\left|\mb{y} - \sum\limits_{k=1}^K{h_k(\mu_k(\lambda) +\nu_k)}\right|\right|^2 \\
	&= \left|\left|\left(\mb{y} - \sum\limits_{k=1}^K{h_k \mu_k(\lambda)}\right)-\sum\limits_{k=1}^K{h_k \nu_k}\right|\right|^2.
\end{align}

To further simplify the expression, define the matrix
\begin{align}
\label{eqn:L_gen_matrix}
	M_{\mc{L}} := \sum\limits_{k=1}^K{h_k M_k U_k},
\end{align}
which allows us to rewrite $\varphi(\lambda)$ explicitly in terms of $\lambda$ as 
\begin{align}
\label{eqn:ml_decoding}
	\varphi(\lambda) = \sum\limits_{\mb{t} \in S \subset \Z^{nK}}{\exp\left\{-\frac{1}{2\sigma^2} \left|\left| \omega(\lambda) - M_{\mc{L}} \hat{U}\mb{t}\right| \right|^2\right\}}.
\end{align}
Here $S \subset \Z^{nK}$ is finite and we have defined 
\begin{align}
	\omega(\lambda) := \mb{y} - \sum\limits_{k=1}^K{h_k \mu_k(\lambda)}.
\end{align} 
\end{proof}

We state a lemma related to the structure defined by the matrix $M_{\mc{L}}$ for future reference, and quickly discuss the consequences. 
\begin{lemma}
\label{lem:sum_of_lattices}
	Let $M_{\mc{L}} = \sum\limits_{k=1}^{K}{h_k M_k U_k}$ be the matrix defined in \eqref{eqn:L_gen_matrix}. Then $M_{\mc{L}}$ defines a subgroup $\mc{L}$ of $\R^n$ of rank $n(K-1)$, which can only be discrete for $K = 2$. Hence, for $K \ge 3$, $\mc{L}$ is not a lattice almost surely, \emph{i.e.}, with probability one.
\end{lemma}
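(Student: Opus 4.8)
The plan is to analyze the subgroup $\mc{L} = M_{\mc{L}}\hat{U}\Z^{nK} \subset \R^n$ generated by the columns of the matrix $M_{\mc{L}}\hat{U}$, where $M_{\mc{L}} = \sum_{k=1}^{K}{h_k M_k U_k}$. First I would establish the rank claim: since $\hat{U}$ consists of the first $n(K-1)$ rows of the invertible matrix $U^{-1}$, the lattice $\hat{U}\Z^{nK}$ is a full-rank subgroup of $\R^{n(K-1)}$, in fact $\hat{U}\Z^{nK} = \Z^{n(K-1)}$ up to the action of $U^{-1}$; more carefully, $\mb{r} = \hat{U}\mb{z}$ ranges over all of $\Z^{n(K-1)}$ as $\mb{z}$ ranges over $\Z^{nK}$ (this follows because $U^{-1}$ is invertible over $\R$, hence the projection onto the first $n(K-1)$ coordinates of $U^{-1}\Z^{nK}$ is a rank-$n(K-1)$ group). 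Consequently $\mc{L} = M_{\mc{L}}(\hat{U}\Z^{nK})$ is the image under $M_{\mc{L}} \in \mat(n \times n(K-1),\R)$ of a rank-$n(K-1)$ subgroup. Generically $M_{\mc{L}}$ has full row rank $n$, so $\mc{L}$ spans $\R^n$ and has rank $n(K-1)$ as an abstract abelian group (the kernel of $M_{\mc{L}}$ restricted to the integer points is trivial almost surely, by a genericity argument on the entries $h_k$).

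The crux is the discreteness dichotomy. The key observation is that a subgroup of $\R^n$ generated by $m$ vectors is discrete (i.e., a lattice) if and only if $m \le n$ and the vectors are $\R$-linearly independent, OR the generators satisfy enough rational dependencies to collapse the rank; but if $m > n$ and the generators are in ``general position'' the subgroup is dense. Concretely, for $K \ge 3$ we have $n(K-1) \ge 2n > n$, so $\mc{L}$ is generated by strictly more than $n$ vectors in $\R^n$, and I would argue that for almost every choice of channel coefficients $\mb{h} = (h_1,\ldots,h_K)$, no nontrivial integer relation $\sum_j c_j \mb{v}_j = \mb{0}$ holds among the generators $\mb{v}_j$ (columns of $M_{\mc{L}}\hat{U}$) beyond those forced by linear algebra, and then invoke the standard fact: an additive subgroup of $\R^n$ is either discrete or dense in some positive-dimensional subspace (Kronecker/Weyl-type density). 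Since $\mc{L}$ has rank $n(K-1) > n$ but lives in $\R^n$, it cannot be discrete — a discrete subgroup of $\R^n$ has rank at most $n$. This is the clean argument: \emph{any} discrete subgroup of $\R^n$ has rank $\le n$, so a rank-$n(K-1)$ subgroup with $K \ge 3$ is automatically non-discrete. For $K = 2$, the rank is exactly $n$, so discreteness is possible (and generically holds).

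The main obstacle is making precise the ``almost surely'' / rank claim, namely that $\mc{L}$ genuinely has rank $n(K-1)$ as an abelian group and not less (if it had smaller rank due to degeneracies, it could still be discrete). I would handle this by noting that $\mb{r} \mapsto M_{\mc{L}}\mb{r}$ is injective on $\Z^{n(K-1)}$ for almost every $\mb{h}$: the set of $\mb{h}$ for which some fixed nonzero $\mb{c} \in \Z^{n(K-1)}$ lies in $\ker M_{\mc{L}}$ is the zero set of a nontrivial polynomial (nontrivial because, e.g., at generic $\mb{h}$ the matrix $M_{\mc{L}}$ has full row rank $n$ and the $n(K-1) - n$ dimensional kernel, intersected with the rational lattice, is trivial for a generic rational direction), hence has Lebesgue measure zero; a countable union over $\mb{c} \in \Z^{n(K-1)} \setminus \{\mb{0}\}$ is still measure zero. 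Then $\mc{L} \cong \Z^{n(K-1)}$ as groups, so it has rank $n(K-1)$, and combined with the elementary fact that discrete subgroups of $\R^n$ have rank $\le n$, we conclude $\mc{L}$ is not discrete whenever $K \ge 3$, while for $K = 2$ the rank is $n = \dim \R^n$, consistent with (and generically equal to) a lattice.
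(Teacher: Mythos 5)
The paper states this lemma without any proof at all, so there is nothing to compare your argument against line by line; what you have written is essentially the canonical argument one would expect, and its skeleton is sound. The core of your proof --- a discrete subgroup of $\R^n$ is free abelian of rank at most $n$, the group $\mc{L}$ generated by the $n(K-1)$ columns of $M_{\mc{L}}$ has rank $n(K-1)$ for almost every channel vector $\mb{h}$, and $n(K-1) > n$ forces non-discreteness when $K \ge 3$ --- is exactly right, and your measure-zero bookkeeping (fix a nonzero $\mb{c} \in \Z^{n(K-1)}$, show the bad set of $\mb{h}$ is a null set, take a countable union) is the correct way to make ``almost surely'' precise. Note that since $M_{\mc{L}}\mb{c} = \sum_k h_k\,(M_k U_k \mb{c})$ is \emph{linear} in $\mb{h}$, the bad set is simply a proper linear subspace once the form is nonzero.

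The one place where your argument is genuinely shaky is the justification that this linear form in $\mb{h}$ is not identically zero. Your parenthetical (``at generic $\mb{h}$ the matrix $M_{\mc{L}}$ has full row rank and the kernel intersected with the rational lattice is trivial for a generic rational direction'') is close to circular: you are trying to prove genericity of the kernel by appealing to genericity. The clean argument is structural and does not need any genericity: the form $\mb{h} \mapsto \sum_k h_k M_k U_k \mb{c}$ vanishes identically iff $M_k U_k \mb{c} = \mb{0}$ for every $k$, iff $U_k \mb{c} = \mb{0}$ for every $k$ (each $M_k$ is invertible), iff $\mb{c}$ lies in the kernel of the stacked block matrix formed by $U_1,\dots,U_K$ --- but those are the first $n(K-1)$ columns of the invertible matrix $U$, hence have full column rank, so $\mb{c} = \mb{0}$. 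With that substitution your proof is complete. A second, minor point: your description of $\mc{L}$ as $M_{\mc{L}}\hat{U}\Z^{nK}$ with $\hat{U}\Z^{nK} = \Z^{n(K-1)}$ implicitly uses that $U$ is unimodular (the HNF case); for a general real decomposition the image $\hat{U}\Z^{nK}$ need not be $\Z^{n(K-1)}$, so the rank-$n(K-1)$ claim should be read as pertaining to the integer/HNF setting the paper adopts --- which, if anything, only makes the non-discreteness conclusion for $K\ge 3$ easier.
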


\begin{remark}
\label{rmk:lattice_sum}
	We remark that the authors in \cite{belfiore,belfiore2} are not aiming at analyzing the behavior of $\varphi(\lambda)$ for actual resulting lattice sums $\mc{L}$. The structure of $\mc{L}$ has only been studied in the case $K = 2$, and consequently, $\mc{L}$ has been commonly believed to be a lattice for any number of transmitters. By Lemma~\ref{lem:sum_of_lattices}, $\mc{L}$ is a lattice for $K = 2$, but lacks a discrete structure when $K > 2$. The main problem is the effect of the random channel coefficients $h_k$ and, as an important implication, the function $\varphi(\lambda)$ does not converge if the sum ranges over all of $\mc{L}$. This fact has dramatic consequences, as it implies that the tools developed in \cite{belfiore} for analyzing the behavior of $\mc{L}$ can only be applied in the case $K = 2$. 
\end{remark}

In general $\mc{L} = \sum_{i=1}^{K-1}{\mc{L}_i}$ is a sum of $(K-1)$ lattices, \emph{i.e.}, consists of vectors of the form $\mb{q} = \sum_{i=1}^{K-1}{\mb{q}_i}$, where $\mb{q}_i \in \mc{L}_i$. An example of a finite subset $\overline{\mc{L}} \subset \mc{L}$ for a sum of 2 lattices $\mc{L} = \mc{L}_1 + \mc{L}_2$ ($K = 3$) for $n = 2$ and a fixed channel vector is depicted in Figure~\ref{fig:lattice_sum} for illustrative purposes.

In the proof of Proposition~\ref{prop:decoding_manipulation}, we assumed the existence of a matrix $U \in \gl_{nK}(\R)$ which yields the desired decomposition \eqref{eqn:M_dec}. For a general matrix $U \in \gl_{nK}(\R)$, its inverse is a matrix with coefficients in $\R$, and hence $\mb{r} = \hat{U}\mb{z}$ is not an integer vector. Thus, $M_{\mc{L}}\hat{U}\mb{z}$ cannot be interpreted as an element of the lattice sum $\mc{L}$.

In \cite{belfiore, belfiore2, mejri}, the authors propose a decomposition based on the \emph{Hermite normal form} (HNF) of $M$. While the use of this specific decomposition has certain disadvantages, for example it only allows to consider integer lattices at the transmitter, it also allows to further simplify the decoding expression. Using the HNF, the matrix $U$ is unimodular, \emph{i.e.}, $U \in \gl_{nK}(\Z)$. In this special situation, we have that $\hat{U} \in \mat(n(K-1)\times nK,\Z)$, and consequently $\mb{r} = \hat{U}\mb{z} \in \Z^{n(K-1)}$. This allows to further simplify the ML decoding decision \eqref{eqn:ml_decoding} to read
\begin{align}
\label{eqn:ml_decoding_hnf}
	\hat{\lambda} = \argmax\limits_{\lambda\in L_F}{\sum\limits_{\mb{q} \in \overline{\mathcal{L}}}\exp\left\{\frac{1}{2\sigma^2}\left|\left|\omega(\lambda)-\mb{q}\right|\right|^2\right\}},
\end{align}
where $\mb{q} = M_{\mc{L}}\mb{z}$ with $\mb{z} \in \Z^{n(K-1)}$ ranges over a finite subset $\overline{\mc{L}} \subset \mc{L}$.

Nonetheless, any decomposition yielding a matrix in the form \eqref{eqn:M_dec} allows for ML decoding at the relay.

\subsection{The behavior of $\varphi(\lambda)$}
\label{subsec:ml_behavior}

We move on to analyze the behavior of the function 
\begin{align}
	\varphi(\lambda) = \sum\limits_{\mb{t} \in S \subset \Z^{nK}}{\exp\left\{\frac{1}{2\sigma^2}\left|\left|\omega(\lambda) - M_{\mc{L}}\hat{U}\mb{t} \right|\right|^2\right\}},
\end{align} 
which, as indicated in \cite{belfiore}, can be \emph{flat} for certain parameters leading to ambiguous decoding decisions and ultimately resulting in decoding errors. We begin by illustrating the behavior of $\varphi(\lambda)$ in Figure~\ref{fig:flatness} for dimensions $n = 1$ and $2$. In order to show that the flatness behavior of $\varphi(\lambda)$ prevails when using a decomposition other than the HNF, as well as when employing non-integer lattices, we use the $LQ$-decomposition of $M$. Here $M = LQ$, where $L$ is lower triangular and $Q$ unitary, and we choose $U := Q$, cf. \eqref{eqn:M_dec}.
\begin{figure*}[!t]
\centering
	\includegraphics[width=0.23\textwidth]{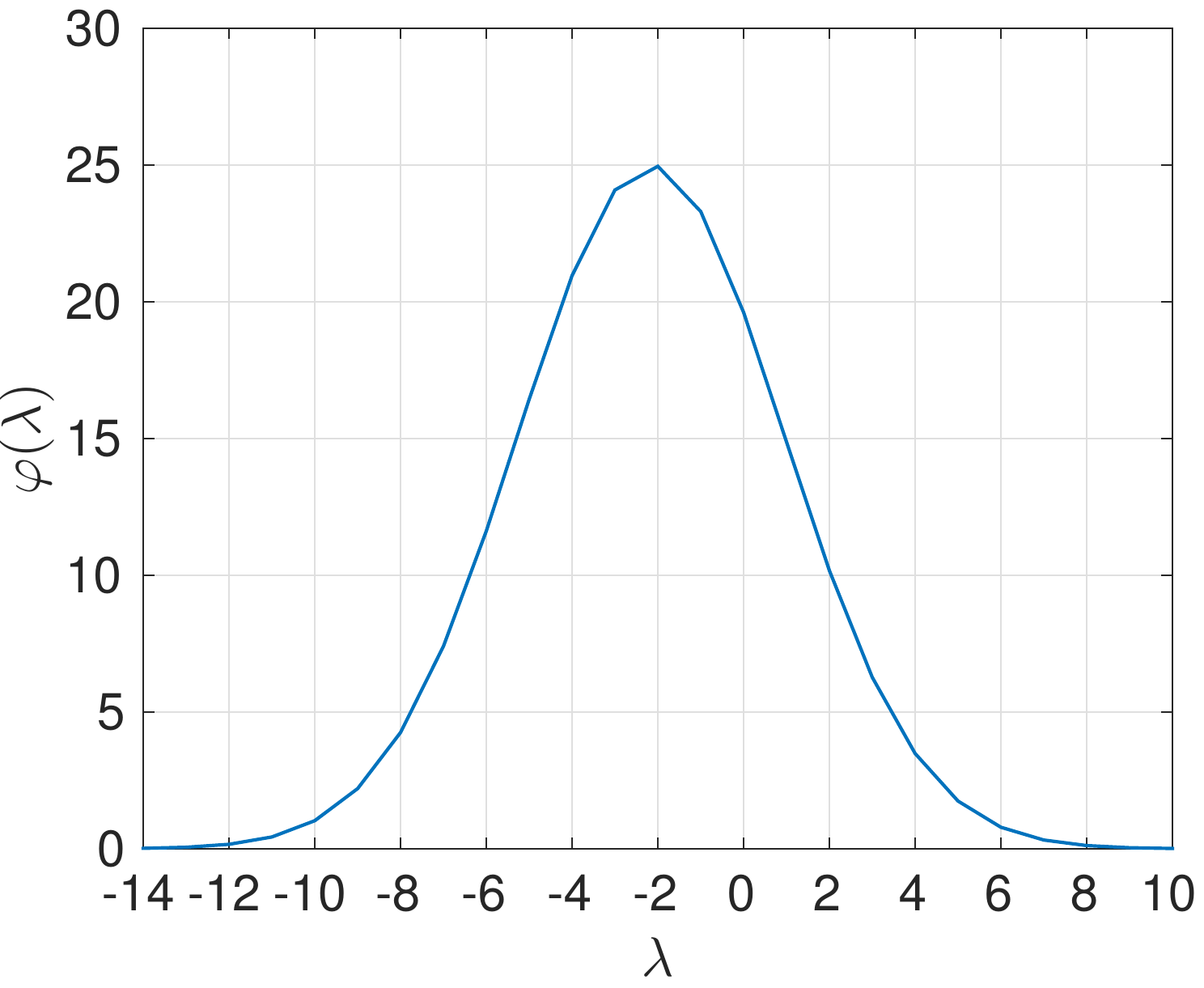} \hfill
	\includegraphics[width=0.23\textwidth]{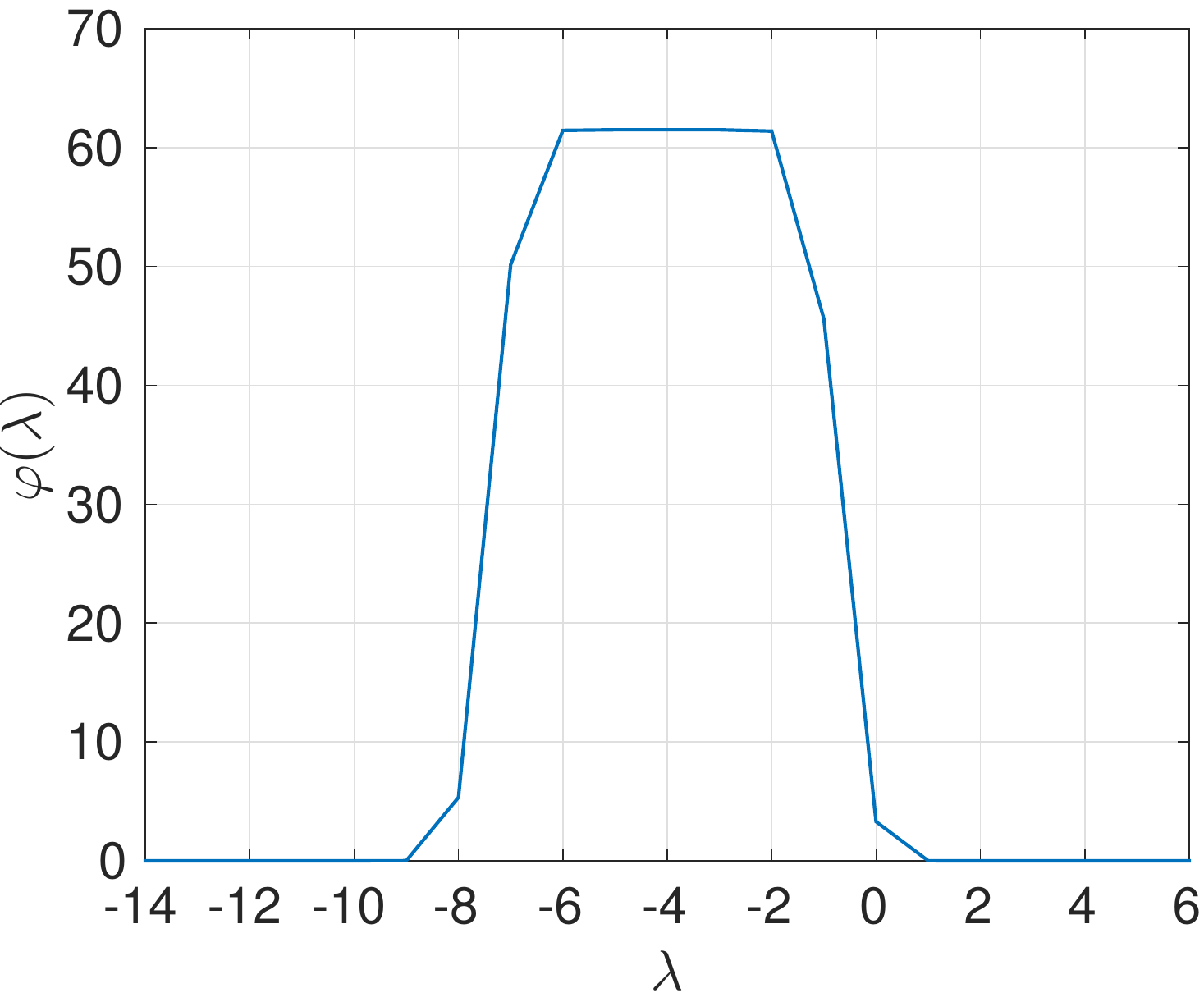} \hfill
	\includegraphics[width=0.23\textwidth]{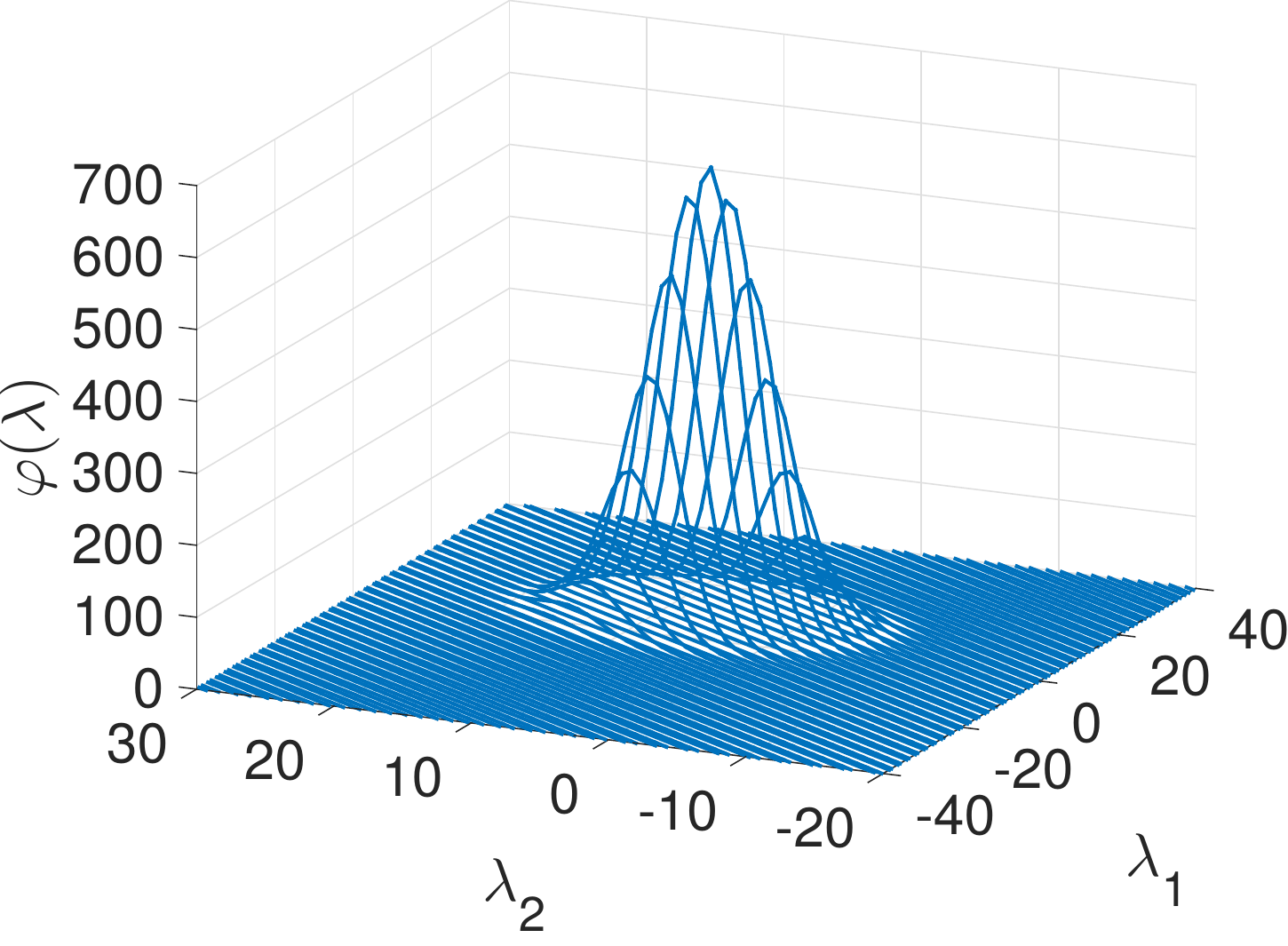} \hfill
	\includegraphics[width=0.23\textwidth]{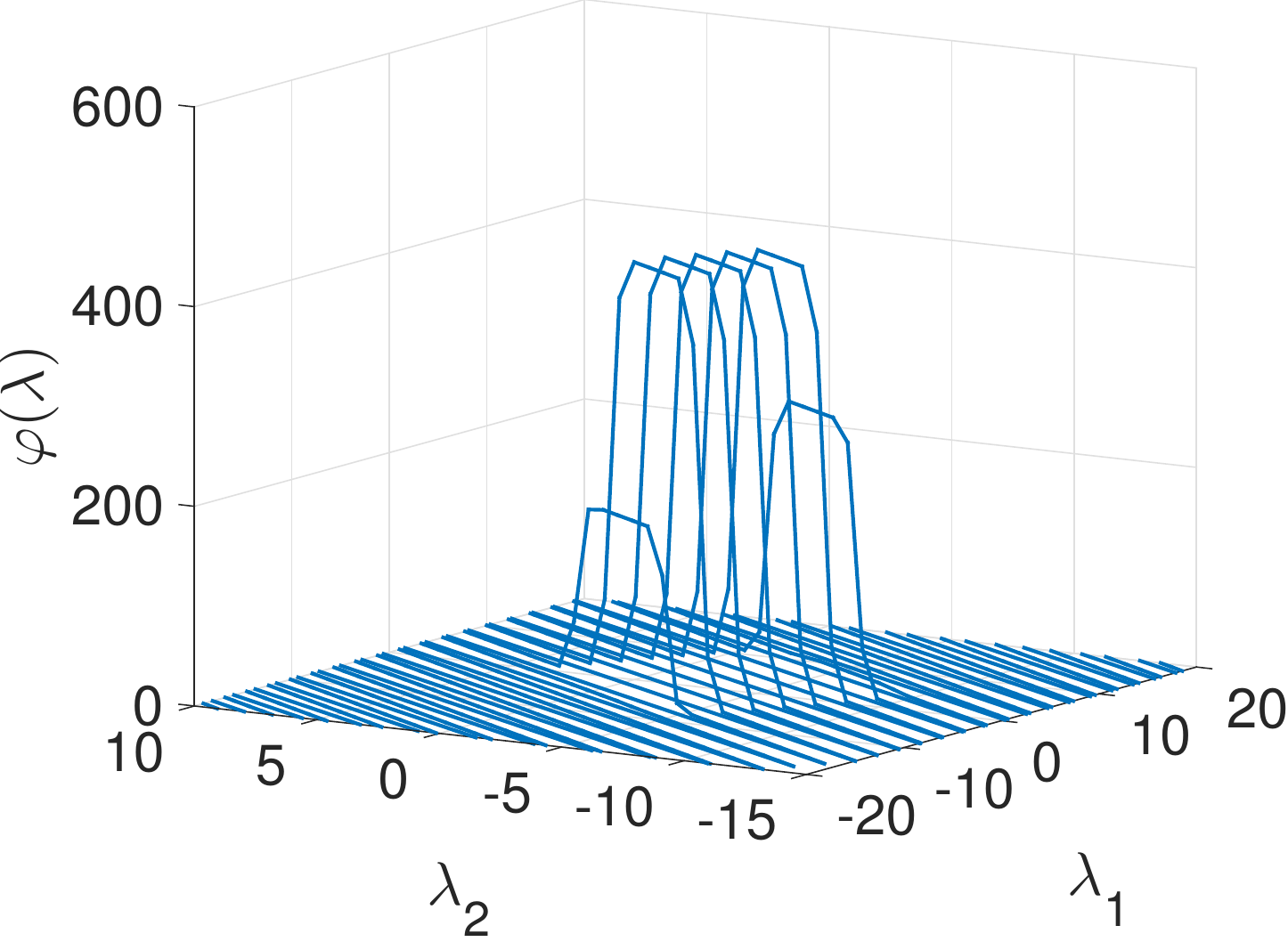}
	\caption{Behavior of $\varphi(\lambda)$ for $K = 2$ transmitters in dimension $n = 1$ (left) with $\Lambda = \Z$, and $n = 2$ (right) with $\Lambda = \Psi\left(\mathcal{O}_{\Q(\sqrt{5})}\right)$.}
	\label{fig:flatness}
\end{figure*}

In order to decode the lattice point $\lambda$, the relay needs to solve the maximization problem \eqref{eqn:basicmax}. We adopt two necessary restrictions.

\begin{itemize}
	\item[i)] The definition of the flatness factor involves the volume of the considered lattice. Hence, the analysis of $\varphi(\lambda)$ in terms of this quantity only makes sense when the volume $\vol{\mathcal{L}}$ is defined. We thus require that $\mc{L}$ is a lattice, \emph{i.e.}, $K = 2$ (cf. Lemma~\ref{lem:sum_of_lattices}).
	
	\item[ii)] Secondly, while any decomposition yielding the desired form allows the relay to solve \eqref{eqn:basicmax}, the matrix $\hat{U}$ may not be an integer matrix. The fractional part $\mathrm{frac}(\hat{U}\mb{t}) = \hat{U}\mb{t} - \mathrm{int}(\hat{U}\mb{t})$ may complicate the analysis of $\varphi(\lambda)$. To overcome this problem, we henceforth restrict to integer lattices, \emph{i.e.}, lattices with integer generator matrix. This allows us to choose the HNF as the employed decomposition, and consider the simplified expression \eqref{eqn:ml_decoding_hnf}.
\end{itemize}

\begin{remark}
	An extension to the case $K > 2$ seems necessary, as numerical results suggest that the flat behavior prevails for more than two transmitters. A natural first step is to study the average flatness factor restricted to finite sets of the lattices constituting $\mc{L}$, as a straightforward generalization of the flatness factor for a sum of lattices $\mc{L}$ is not obvious. This was considered in a preliminary version of this article. However, the relevance of such an approach needs to be verified, and numerical simulations are currently too expensive. 
\end{remark}

If the intermediate relay aims to decode a linear combination of $K = 2$ codewords, the ML decoding metric \eqref{eqn:ml_decoding} is a sum over lattice points, as repeatedly remarked previously. This allows us to characterize the behavior of $\varphi(\lambda)$ in terms of the flatness factor of the lattice $\mc{L}$ (cf. \eqref{eqn:ff_theta}). 

\begin{definition}
\label{def:ff_ml}
Let $K = 2$. The \emph{flatness factor} of $\varphi(\lambda)$ is defined as the flatness factor of $\mc{L}$,
\begin{align}
	\varepsilon_{\varphi(\lambda)}(\sigma^2) := \varepsilon_{\mc{L}}(\sigma^2).
\end{align}
\end{definition}

\begin{remark}
	The description of $\varepsilon_\Lambda(\sigma^2)$ in \eqref{eqn:ff_theta} allows to study the flatness factor as a function of the noise variance $\sigma^2$. In the context of compute-and-forward, we need $\varepsilon_{\varphi(\lambda)}(\sigma^2)$ to be as large as possible, as by the definition large values imply a distinctive maximum, which inhibits a flat behavior of the related function $\varphi(\lambda)$.
\end{remark}

Initially, studying the lattice flatness factor $\varepsilon_{\varphi(\lambda)}(\sigma^2)$ boils down to studying the flatness factor of the random lattice $\mc{L}$ which results at the relays. In order to have a reliable performance in the considered setting, we should choose lattices at the transmitter which are good for reliable communications, \emph{i.e.}, protect against noise and fading, while maximizing the flatness factor of the resulting lattice $\mc{L}$. By adopting the two restrictions listed above, it turns out that $\mc{L}$ can be related to the lattices employed at the transmitter, a link which we make explicit in Theorem~\ref{thm:lattice_equivalent} below. The consequences of the theorem are that maximizing the flatness factor of $\mc{L}$ amounts to maximizing the flatness factor of the original lattice. 

\begin{theorem}
\label{thm:lattice_equivalent}
	Let $K = 2$, and let $\Lambda_1,\Lambda_2 \subset \R^n$ be full integer lattices such that if $M_{\Lambda}$ is the generator matrix of $\Lambda_1$, then there exists $c \in \Z\backslash\left\{0\right\}$ such that $c M_{\Lambda}$ is the generator matrix for $\Lambda_2$. Hence, $\Lambda_1 \supseteq \Lambda_2$ are nested. Then, employing the Hermite normal form decomposition, the lattices $\mc{L}$ and $\Lambda_1$ are equivalent. 
\end{theorem}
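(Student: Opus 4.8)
The plan is to compute the generator matrix $M_{\mc{L}}$ of the random lattice explicitly and exhibit it as a nonzero scalar multiple of $M_{\Lambda}$ post-composed with a unimodular integer matrix, so that $\mc{L}$ is literally a rescaled copy of $\Lambda_1$. First I would specialize Proposition~\ref{prop:decoding_manipulation} to $K=2$: with $M_1 = M_{\Lambda}$ and $M_2 = cM_{\Lambda}$ the generator matrices of $\Lambda_1 = \Lambda_{1,F}$ and $\Lambda_2 = \Lambda_{2,F}$, and $\mb{a}^t = (a_1,a_2)$ the nonzero solution of \eqref{eqn:svp}, the matrix $M$ from that proposition factors as
\begin{align}
	M = \begin{bmatrix} a_1 M_{\Lambda} & a_2 c M_{\Lambda}\end{bmatrix} = M_{\Lambda}\begin{bmatrix} a_1 I_n & a_2 c I_n\end{bmatrix},
\end{align}
which is an integer matrix, so the Hermite normal form decomposition is available: there is $U \in \gl_{2n}(\Z)$ with $MU = \begin{bmatrix} 0_{n\times n} & B\end{bmatrix}$ and $B$ invertible. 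Writing $U = \left[\begin{smallmatrix} U_1 & V_1 \\ U_2 & V_2\end{smallmatrix}\right]$ in $n\times n$ blocks as in the proposition, vanishing of the first $n$ columns of $MU$ together with invertibility of $M_{\Lambda}$ forces $a_1 U_1 + a_2 c U_2 = 0$.

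The crucial step is to recognize the first $n$ columns $\left[\begin{smallmatrix} U_1 \\ U_2\end{smallmatrix}\right]$ of $U$ as a \emph{$\Z$-basis} of the integer kernel $\ker M \cap \Z^{2n}$, not merely a finite-index sublattice of it. This is where the Hermite normal form hypothesis does its work: since $U$ is unimodular and $B$ is invertible, every $\mb{v}\in\ker M\cap\Z^{2n}$ has $U^{-1}\mb{v}\in\Z^{2n}$ with vanishing last $n$ entries, hence is an integer combination of the first $n$ columns of $U$. Now set $d := \gcd(a_1, a_2 c)$, $a_1' := a_1/d$, $a_2' := a_2 c/d$, so $\gcd(a_1',a_2')=1$; the integer kernel is exactly $\{(a_2'\mb{w},\, -a_1'\mb{w}) : \mb{w}\in\Z^n\}$, which has the manifest $\Z$-basis $\{(a_2'\mb{e}_i,\, -a_1'\mb{e}_i)\}_{i=1}^n$. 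Comparing the two bases gives $U_1 = a_2' W$ and $U_2 = -a_1' W$ for some $W\in\mat(n,\Z)$, and since $W$ is the change of basis between two $\Z$-bases of the same free $\Z$-module of rank $n$, it lies in $\gl_n(\Z)$.

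Finally I would substitute into \eqref{eqn:L_gen_matrix}:
\begin{align}
	M_{\mc{L}} = h_1 M_1 U_1 + h_2 M_2 U_2 = h_1 a_2' M_{\Lambda}W - h_2 c a_1' M_{\Lambda}W = \frac{c}{d}\left(h_1 a_2 - h_2 a_1\right) M_{\Lambda}W,
\end{align}
so $\mc{L} = M_{\mc{L}}\Z^n = \frac{c}{d}(h_1 a_2 - h_2 a_1)\, M_{\Lambda} W\,\Z^n = \frac{c}{d}(h_1 a_2 - h_2 a_1)\,\Lambda_1$, a scaling of $\Lambda_1$; the scalar is nonzero almost surely because $h_1 a_2 = h_2 a_1$ is a measure-zero event for the continuous channel, consistent with $\mc{L}$ being a rank-$n$ lattice by Lemma~\ref{lem:sum_of_lattices}. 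Hence $\mc{L}$ and $\Lambda_1$ are equivalent. I expect the main obstacle to be the middle paragraph: justifying that the relevant columns of $U$ form a basis (not just a full-rank sublattice) of $\ker M\cap\Z^{2n}$, and thereby that $W$ is genuinely unimodular; once that is in hand, everything else is routine linear algebra.
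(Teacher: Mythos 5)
Your proposal is correct and follows essentially the same route as the paper: factor $M = M_{\Lambda}\bigl[\,a_1 I_n \;\; a_2 c\, I_n\,\bigr]$, apply the Hermite normal form to get $MU = \bigl[\,0_n \;\; B\,\bigr]$ with $U$ unimodular, deduce $a_1U_1 + a_2cU_2 = 0$ from the vanishing block, identify the kernel, and substitute into \eqref{eqn:L_gen_matrix} to exhibit $M_{\mc{L}}$ as a nonzero multiple of $M_{\Lambda}$ (times a unimodular matrix, which is harmless). The one place you go beyond the paper is your middle paragraph: the paper simply asserts that one ``can always choose'' $U_1 = -\diag\{ca_2\}$ and $U_2 = \diag\{a_1\}$ from the kernel containment, without checking that this choice extends to a genuinely unimodular $U$ --- which is delicate precisely when $\gcd(a_1, ca_2) > 1$, since those columns then generate only an index-$\gcd(a_1,ca_2)^n$ sublattice of the integer kernel. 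Your argument, that the first $n$ columns of any HNF transform $U$ must be a $\Z$-basis of $\ker M \cap \Z^{2n}$ and hence equal $(a_2'W, -a_1'W)$ with $W \in \gl_n(\Z)$, closes this gap and yields the conclusion for the $U$ actually produced by the algorithm rather than a conveniently selected one; the final scalar $\tfrac{c}{d}(h_1a_2 - h_2a_1)$ matches the paper's $r$ up to the unimodular factor.
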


\begin{proof}
	We determine the generator matrix $M_{\mc{L}}$ of the lattice $\mc{L}$. Assume that $\mb{a}^t = (a_1,a_2)$ is the coefficient vector determining the linear combination to be decoded. As $\mb{a}$ is the solution to a shortest vector problem, we have $\gcd(a_1,a_2) = 1$. Define the matrix 
	\begin{align}
		M := \begin{bmatrix} a_1 M_{\Lambda} & a_2 c M_{\Lambda} \end{bmatrix}.
	\end{align}

	Since we have $\mb{a}^t \neq (0,0)$, the matrix $M$ has full-rank. Hence, there always exist $U \in \gl_{2n}(\Z)$ and $B \in \mat(n,\Z)$ invertible, such that $MU = \begin{bmatrix} 0_n & B \end{bmatrix}$ is in HNF. If we write $A_1 := \diag\left\{a_1\right\}_{i=1}^{n}$, $A_2 := \diag\left\{c a_2\right\}_{i=1}^{n}$, and decompose the matrix $U$ into $n\times n$ blocks as 
	\begin{align}
		U = \begin{bmatrix} U_1 & V_1 \\ U_2 & V_2 \end{bmatrix},
	\end{align}
	we can write 
	\begin{align}
		MU &= \begin{bmatrix} a_1 M_{\Lambda} & a_2 c M_{\Lambda} \end{bmatrix} U \\
		&= \begin{bmatrix} M_{\Lambda} & M_{\Lambda} \end{bmatrix}\begin{bmatrix} A_1 & 0_n \\ 0_n & A_2 \end{bmatrix} \begin{bmatrix} U_1 & V_1 \\ U_2 & V_2 \end{bmatrix} \\
		&= \begin{bmatrix} M_{\Lambda} & M_{\Lambda} \end{bmatrix} \begin{bmatrix} A_1 U_1 & A_1 V_1 \\ A_2 U_2 & A_2 V_2 \end{bmatrix} = \begin{bmatrix} 0_n & B \end{bmatrix}.
	\end{align}	
	
	As $M_{\Lambda}$ generates a full lattice, it is invertible. We multiply by $M_{\Lambda}^{-1}$ from the left to get 
	\begin{align}
		&M_{\Lambda}^{-1}\begin{bmatrix} M_{\Lambda} & M_{\Lambda} \end{bmatrix} \begin{bmatrix} A_1 U_1 & A_1 V_1 \\ A_2 U_2 & A_2 V_2 \end{bmatrix}\\
		&= \begin{bmatrix} I_n & I_n \end{bmatrix}\begin{bmatrix} A_1 U_1 & A_1 V_1 \\ A_2 U_2 & A_2 V_2 \end{bmatrix} \\
		 &= \begin{bmatrix} (A_1 U_1 + A_2 U_2) & (A_1 V_1 + A_2 V_2) \end{bmatrix} \\
		 &= \begin{bmatrix} 0_n & M_{\Lambda}^{-1}B \end{bmatrix},
	\end{align}
	which yields the equations $A_1 U_1 + A_2 U_2 = 0_n$ and $A_1 V_1 + A_2 V_2 = M_{\Lambda}^{-1}B$. We can rewrite the first equation to read
	\begin{align}
		&\begin{bmatrix} A_1 & A_2 \end{bmatrix} \begin{bmatrix} U_1 \\ U_2 \end{bmatrix} \\
		&= \begin{bmatrix} a_1 & 0 & \cdots & 0 & c a_2 & 0 & \cdots & 0 \\ 0 & a_1 & & 0 & 0 & c a_2 & & 0 \\ \vdots & & \ddots & \vdots & \vdots & & \ddots & \vdots \\ 0 & \cdots & & a_1 & 0 & & \cdots & c a_2 \end{bmatrix} \begin{bmatrix} U_1 \\ U_2 \end{bmatrix} = 0_n.
	\end{align}

	This equation is satisfied if and only if 
	\begin{align}
		\colspan\left(\begin{bmatrix} U_1 \\ U_2 \end{bmatrix}\right) &\subseteq \ker\left(\begin{bmatrix} A_1 & A_2 \end{bmatrix}\right) \\
		&= \vecspan\left\{\begin{bmatrix} ca_2 \cdot e_i \\ -a_1 \cdot e_i \end{bmatrix}\right\}_{i=1}^{n},
	\end{align}
	where $e_i$ is the $i^{\mathrm{th}}$ standard vector. In particular, we can always choose 
	\begin{align}
		U_1 = -\diag\left\{c a_2 \right\}_{i=1}^{n}; \quad U_2 = \diag\left\{a_1\right\}_{i=1}^{n}.
	\end{align}

	With this choice, the generator matrix of $\mc{L}$ simplifies to 
	\begin{align}
		M_{\mc{L}} &= h_1 M_{\Lambda} U_1 + h_2 c M_{\Lambda} U_2 \\
		&= -h_1 M_{\Lambda}\diag\left\{c a_2 \right\}_{i=1}^{n} + h_2 c M_{\Lambda}\diag\left\{a_1\right\}_{i=1}^{n} \\
		&= (h_2 c\diag\left\{a_1\right\}_{i=1}^{n} - h_1\diag\left\{c a_2 \right\}_{i=1}^{n})M_{\Lambda} = rM_{\Lambda}
	\end{align}
	for some $r \in \R$.
\end{proof}

This result motivates the study of the flatness factor of the code lattices. As these should be picked to be well conditioned for coding purposes, we only need to compute the flatness factor of reasonably conditioned lattices. Thus, the derived approximation $\Theta_{\Lambda}^{\mf{A}}(q)$ will suffice for that purpose. 
\begin{figure*}[!t]
\centering
	\includegraphics[width=.48\textwidth]{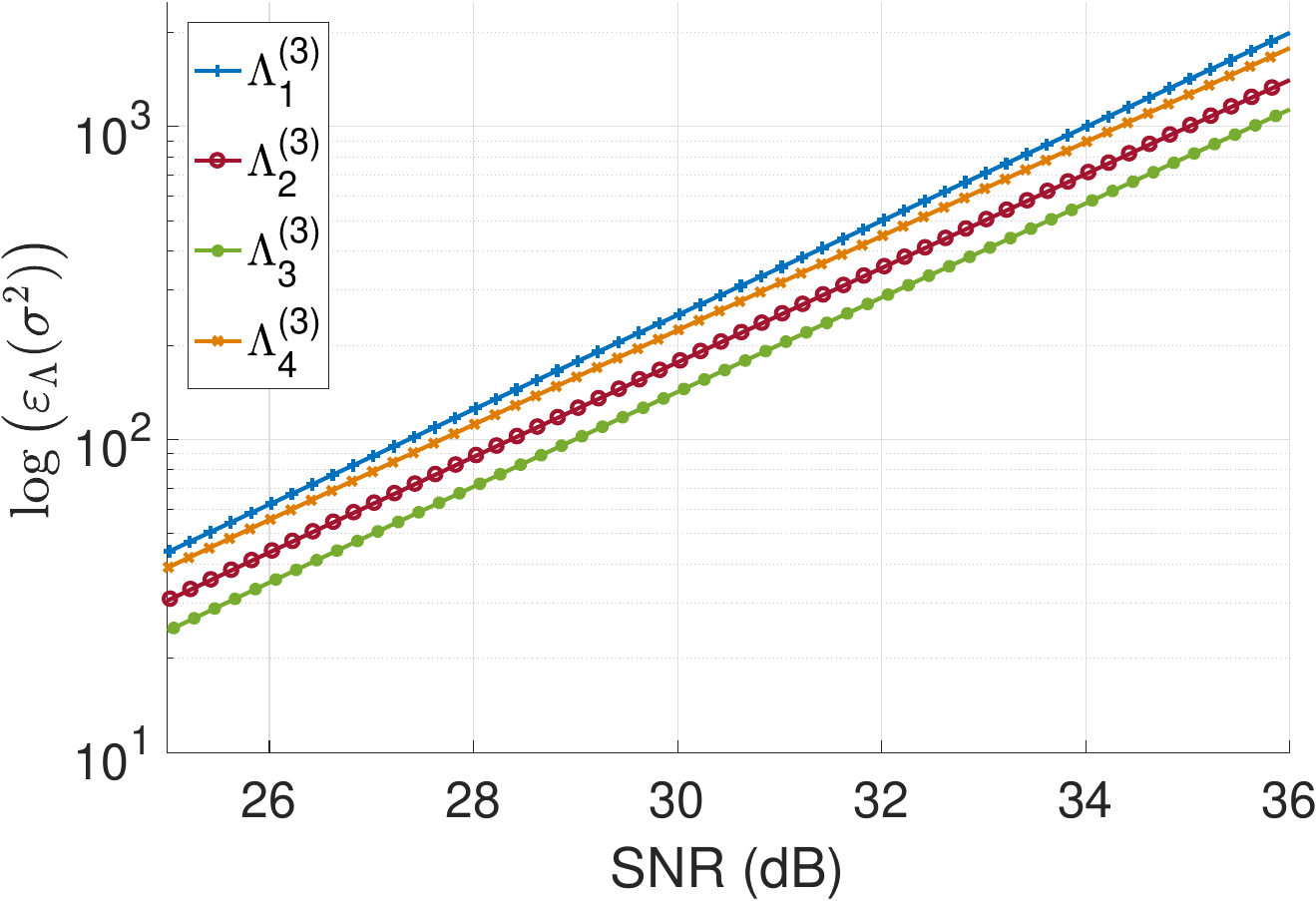} \hfill
	\includegraphics[width=.48\textwidth]{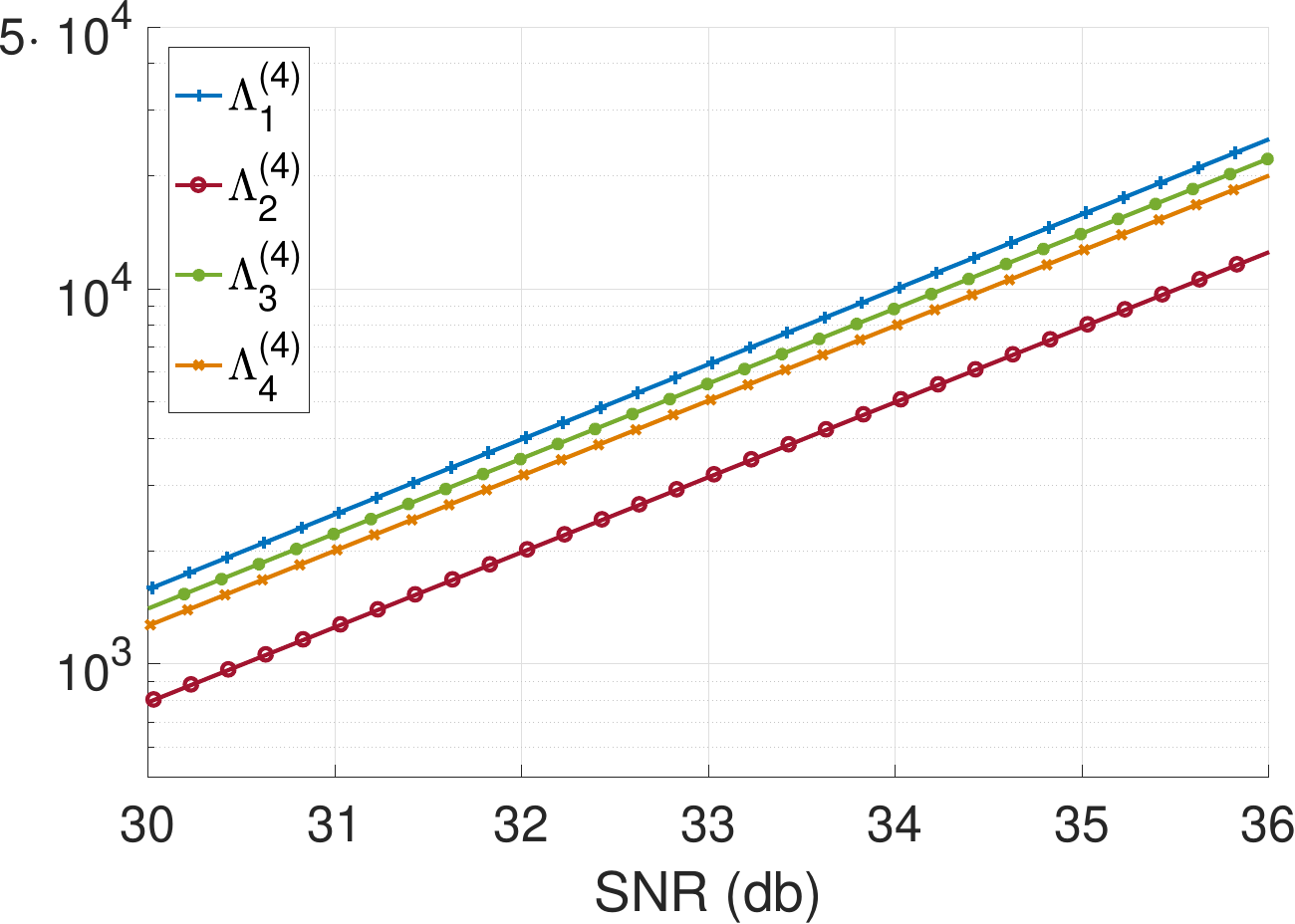}
	\caption{Flatness factors of $\Lambda$ for lattices of dimensions $n = 3$ (left) and $n = 4$ (right).}
	\label{fig:FF_lattices}	
\end{figure*} 
We consider the lattices $\Lambda_i^{(3)}$ and $\Lambda_i^{(4)}$, $1 \le i \le 4$, tabulated in Table~\ref{tab:wr_lattices} in Appendix~\ref{sec:app}. As the well-known lattices $\Z^n$, $D_n$ and the dual $D^\ast_n$ are all examples of well-rounded lattices, we consider additional lattices $\Lambda_4^{(3)}$, $\Lambda_3^{(4)}$ and $\Lambda_4^{(4)}$ which are well-rounded as well, for sake of consistency. These are found via computer search. 

\begin{remark}
Note that, as it should, the flatness factor of $\varphi(\lambda)$ is independent of the size of constellation, as it is simply the flatness factor of the unconstrained lattice $\mc{L}$. For a meaningful comparison, however, we fix a finite codebook for each of the considered lattices, and illustrate their flatness factor with respect to the power-dependent $\snr$, $\rho = P/\sigma^2$. 
\end{remark}

The average power $P$ for the employed constellation is also found in Table~\ref{tab:wr_lattices}. We compare the considered lattices in Figure~\ref{fig:FF_lattices}.  

Both in dimension $n = 3$ and $n = 4$, it is visible that the integer lattice $\Lambda_1^{(n)} = \Z^n$ performs best among the considered lattices with respect to the flatness factor criterion. This is in agreement with the observation in \cite{belfiore2} that the lattice $\mc{L}$ should not be dense. However, the density is not the only factor that plays a role, as visible from the plot in dimension $n = 3$. There, the best quantizer, $\Lambda_3^{(3)} = D_3^{\ast}$ exhibits the smallest flatness factor, even below the densest packing $\Lambda_2^{(3)} = D_3$. In dimension $n = 4$, the lattice $\Lambda_2^{(4)} = D_4$ is both the best quantizer and densest packing, and exhibits the smallest flatness factor. 

The quintessential statement, however, is not that the lattice $\Z^n$ is the one that should always be used. Indeed, the code lattice should firstly be chosen to perform well in compute-and-forward, and additionally exhibit a large flatness factor. This yields a potential trade-off in code design.

\section{Conclusions}
\label{sec:conclusions}

The main goal of this article was to derive a simple approximation of the theta series of a lattice. Our approximation can be shown to be a simple rational function.

We then studied maximum-likelihood decoding in the context of compute-and-forward relaying, and showed that partial code design criteria can be derived based on the so-called flatness factor of certain involved lattices. Using a particular matrix decomposition for manipulating the decoding metric, and adopting two important restrictions, we further prove that the code lattice at the transmitter and the random lattice at the relay are similar. This allows for a direct design criterion for the code lattice, rather than for the random lattice. Namely, the flatness factor of the code lattice should be maximized. 

As the flatness factor is directly related to the theta series of a lattice, it is hence crucial to be able to efficiently compute the latter quantity. Hence, for the purposes of empirically analyzing different lattices at the transmitter, the theta series approximation proves to be crucial, both in this context as in the context of wiretap coset code design, \emph{e.g.} the results obtained in \cite{barreal_wiretap}. 

This work allows to extend the framework in a variety of directions. First, as noted in this article, the decoding metric is only a sum over lattice points for $K = 2$ transmitters, and the analysis of its behavior becomes more complicated when $K \ge 3$, though numerical results show that the flatness behavior prevails. On the other hand, the used decomposition only allows for integer lattices and integer linear combinations. Following related work \cite{huang,tunali} where the linear combinations are allowed to be over the ring of integers of an algebraic number field, it would be of benefit to examine the decoding metric in this generalized setting. The Hermite normal form decomposition over the integers $\Z$ is only a special case, and the algorithm has been extended to arbitrary Dedekind domains. Thus using this generalized decomposition would allow to study algebraic lattices for code construction at the transmitters.

\section*{Acknowledgment} The authors would like to thank Dr. David Karpuk for his help towards this article. 


\newpage

\section*{Appendix}
\label{sec:app}

The table below serves as a summary of the characteristics of the lattices used for simulations, and introduces the employed notation. 
\begin{table*}[h!]
\centering
\resizebox{\columnwidth}{!}{%
\begin{tabular}{c|l||l|c|c|l||c|}
\hline
$\mathbf{n = 3}$ & Notation & $M_\Lambda$ & $\lambda_1$ & $\vol{\Lambda}$ & $\Theta_\Lambda(q)$ & P $\left(|\mc{C}|= 343\right)$\\
\hline \cline{2-7} 
 & $\Lambda_1^{(3)} = \Z^3$ & $I_3$ & 1 & 1 & $\theta_3^3(q)$ & 4 \\
 \cline{2-7}
 \rule{0pt}{.95cm}
 & $\Lambda_2^{(3)} = D_3 \cong A_3$ & $\begin{bmatrix} -1 & 1 & 0 \\ -1 & -1 & 1 \\ 0 & 0 & -1 \end{bmatrix}$ & 2 & 2 & $\frac{1}{2}(\theta_3^3(q)+\theta_4^3(q))$ & 8 \\[.45cm]
 \cline{2-7}
 \rule{0pt}{.95cm}
 & $\Lambda_3^{(3)} = D_3^\ast \cong A_3^\ast$ & $\begin{bmatrix} 2 & 0 & 1 \\ 0 & 2 & 1 \\ 0 & 0 & 1  \end{bmatrix}$ & 3 & 4 & $\theta_2(4q)^3+\theta_3(4q)^3$ & 16.6667 \\[.45cm]
 \cline{2-7}
 \rule{0pt}{.95cm}
 & $\Lambda_4^{(3)}$ & $\begin{bmatrix} 2 & 0 & 0 \\ 1 & -2 & 1 \\ 0 & -1 & -2  \end{bmatrix}$ & 5 & 10 & - & 20 \\[.45cm]
 \hline \cline{2-7}
 $\mathbf{n = 4}$ & \multicolumn{5}{c||}{} & P $\left(|\mc{C}|= 2401\right)$ \\
 \hline \cline{2-7}
 & $\Lambda_1^{(4)} = \Z^4$ & $I_4$ & 1 & 1 & $\theta_3^4(q)$ & 4 \\
 \cline{2-7}
 \rule{0pt}{1.3cm}
 & $\Lambda_2^{(4)} = D_4$ & $\begin{bmatrix} -1 & 1 & 0 & 0 \\ -1 & -1 & 1 & 0 \\ 0 & 0 & -1 & 1 \\ 0 & 0 & 0 & -1 \end{bmatrix}$ & 2 & 2 & $\frac{1}{2}(\theta_3^4(q)+\theta_4^4(q))$ & 8 \\[.9cm]
 \cline{2-7}
 \rule{0pt}{1.3cm}
 & $\Lambda_3^{(4)}$ & $\begin{bmatrix} 1 & 1 & -1 & 1 \\ 1 & -1 & 1 & 1 \\ -1 & 0 & 0 & 1 \\ 0 & 1 & 1 & 0 \end{bmatrix}$ & 3 & 8 & - & 12 \\[.9cm]
 \cline{2-7}
 \rule{0pt}{1.3cm}
 & $\Lambda_4^{(4)}$ & $\begin{bmatrix} -2 & 0 & 0 & 0 \\ 0 & 0 & 0 & -2 \\ 1 & 1 & -2 & 1 \\ 0 & 2 & 1 & 0 \end{bmatrix}$ & 5 & 20 & - & 20 \\[.9cm]
 \cline{2-7}
\end{tabular}}
\caption{Summary of the lattices employed for simulation results.}
\label{tab:wr_lattices}
\end{table*}


\begin{thebibliography}{10}
\providecommand{\url}[1]{#1}
\csname url@samestyle\endcsname
\providecommand{\newblock}{\relax}
\providecommand{\bibinfo}[2]{#2}
\providecommand{\BIBentrySTDinterwordspacing}{\spaceskip=0pt\relax}
\providecommand{\BIBentryALTinterwordstretchfactor}{4}
\providecommand{\BIBentryALTinterwordspacing}{\spaceskip=\fontdimen2\font plus
\BIBentryALTinterwordstretchfactor\fontdimen3\font minus
  \fontdimen4\font\relax}
\providecommand{\BIBforeignlanguage}[2]{{%
\expandafter\ifx\csname l@#1\endcsname\relax
\typeout{** WARNING: IEEEtran.bst: No hyphenation pattern has been}%
\typeout{** loaded for the language `#1'. Using the pattern for}%
\typeout{** the default language instead.}%
\else
\language=\csname l@#1\endcsname
\fi
#2}}
\providecommand{\BIBdecl}{\relax}
\BIBdecl


\bibitem{ling}
C. Ling, L. Luzzi, J.-C. Belfiore, and D. Stehle, "Semantically Secure Lattice Codes for the Gaussian Wiretap Channel", \emph{IEEE Transactions on Information Theory}, vol.~60, no.~10, pp.~6399--6416, 2014.

\bibitem{campello}
A. Campello, C. Ling, and J.-C.  Belfiore, "Semantically Secure Lattice Codes for Compound MIMO Channels", arXiv:1903.09954, 2019.


\bibitem{linglu} C. Ling and L. Gan, "Lattice Quantization Noise Revisited", \emph{2013 IEEE Information Theory Workshop (ITW)}, Sevilla, 2013, pp. 1-5.


\bibitem{nazer}
B.~Nazer and M.~Gastpar, "Compute-and-Forward: Harnessing Interference Through Structured Codes", \emph{IEEE Transactions on Information Theory}, vol.~57, no.~10, pp. 6463--6486, 2011.


\bibitem{feng}
C.~Feng, D.~Silva, and F.~R.~Kschischang, "An Algebraic Approach to Physical-Layer Network Coding", \emph{IEEE Transactions on Information Theory}, vol.~59, no.~11, pp. 7576--7596, 2013. 

\bibitem{belfiore}
J.-C.~Belfiore, "Lattice Codes for the Compute-and-Forward Protocol: The Flatness Factor", \emph{IEEE Information Theory Workshop}, 2011. 

\bibitem{Ivic}
A. Ivic, E. Krätzel, M. Kühleitner, and W. G. Nowak, "Lattice points in large regions and related arithmetic functions: Recent developments in a very classic topic", arXiv preprint math/0410522, (2004).

\bibitem{siegel1945mean}
C.L. Siegel,
"A mean value theorem in geometry of numbers", \emph{Annals of Mathematics}, pp. 340--347, 1945.

\bibitem{gotz}
F. Götze, "Lattice point problems and values of quadratic forms", \emph{ Inventiones mathematicae}, vol.~157, pp. 195-226, 2004.

\bibitem{belfiore2}
J.-C.~Belfiore and C. Ling, "The Flatness Factor in Lattice Network Coding: Design Criterion and Decoding Algorithm", \emph{Zurich Seminar on Communications}, 2012. 


\bibitem{holmin}
S. Holmin, "The number of points from a random lattice that lie inside a ball", \emph{arXiv preprint arXiv:1311.2865}, 2013.

\bibitem{mejri}
A.~Mejri and G.~Rekaya-Ben Othman, "Efficient Decoding Algorithms for the Compute-and-Forward Strategy", \emph{IEEE Transactions on Communications}, vol.~63, no.~7, pp. 2475--2485, 2015.

\bibitem{cong-ring}
S. Lyu, A. Campello and C. Ling, "Ring Compute-and-Forward over Block-Fading Channels", \emph{IEEE Transactions on Information Theory} (early access), 2019.

\bibitem{Reg}
D. Micciancio and O. Regev, "Worst-case to Average-case Reductions Based on Gaussian Measures", \emph{SIAM Journal on Computing}, 37(1):267--302, 2007.

%

\bibitem{widmer}
M.~Widmer, "Lipschitz Class, Narrow Class, and Counting Lattice Points", \emph{Proceedings of the American Mathematical Society}, vol.~140, no.~2, pp. 677--689, 2011.

\bibitem{henk}
M. Henk, "Successive Minima and Lattice Points", \emph{IVth International Conference in Stochastic  Geometry,  Convex  Bodies,  Empirical  Measures  and  Applications  to Engineering Science}, vol. I (Tropea, 2001). Rend. Circ. Mat. Palermo (2) Suppl. No. 70, part I, pp. 377--384, 2002.


\bibitem{shmidt}
W. Schmidt, "A metrical theorem in geometry of numbers", \emph{Transactions of the American Mathematical Society}, vol. 95(3), pp. 516-529.


\bibitem{fukshansky}
L. Fukshansky and A. Schürmann, "Bounds on Generalized Frobenius Numbers", \emph{European Journal of Combinatorics}, vol.~42, no.~3, pp.~361--368, 2011.

\bibitem{lang}
S.~Lang, "Algebraic Number Theory", Springer-Verlag, 1970. 

\bibitem{micciancio}
D.~Micciancio and O.~Regev, "Worst-Case to Average-Case Reductions Based on Gaussian Measures", \emph{SIAM Journal on Computing}, vol.~37, no.~1, pp. 267--302, 2007.

\bibitem{osmane}
A.~Osmane and J.-C.~ Belfiore, "The Compute-and-Forward Protocol: Implementation and Practical Aspects", arXiv: 1107.0300v1, 2011.

\bibitem{sahraei}
S.~Sahraei and M.~Gastpar, "Polynomially Solvable Instances of the Shortest and Closest Vector Problems with Applications to Compute-and-Forward", \emph{IEEE Transactions on Information Theory}, vol. 63, no. 12, pp. 7780--7792, Dec. 2017.

\bibitem{barreal_caf}
A.~Barreal, J.~P\"a\"akk\"onen, D.~Karpuk, C.~Hollanti, and O.~Tirkkonen, "A low-complexity message recovery method for Compute-and-Forward relaying", \emph{IEEE Information Theory Workshop}, 2015.





\bibitem{tunali}
N.~E.~Tunali, K.~R.~Narayanan, J.~J.~Boutros, and Y. Huang, "Lattices over Eisenstein Integers for Compute-and-Forward", \emph{50th Annual Allerton Conference on Communication, Control, and Computing}, 2012.
\bibitem{landau}
E. Landau, "Über  die  anzahl  der  gitterpunkte  in  gewissen  bere-ichen.Nachrichten von der Gesellschaft der Wissenschaften zu Göttingen", \emph{Mathematisch-Physikalische Klasse}, vol. 1924, pp. 137–150, 1924.


\bibitem{huang}
Y. Huang, K.~R.~Narayanan, and P. Wang, "Adaptive Compute-and-Forward with Lattice Codes over Algebraic Integers", \emph{IEEE International Symposium on Information Theory}, 2012.

\bibitem{barreal_wiretap}
A. Barreal, A. Karrila, D. Karpuk, and C. Hollanti, "Information Bounds and Flatness Factor Approximation for Fading Wiretap MIMO Channels", \emph{International Telecommunication Networks and Applications Conference}, 2016.

\end{thebibliography}
\end{document}